\pdfoutput=1
\documentclass[11pt]{article}

\newif\ifdraft
\draftfalse

\usepackage[letterpaper,margin=1in,marginparwidth=0.8in,marginparsep=0.1in]{geometry}
\usepackage{microtype}
\usepackage{amsmath,amssymb,amsthm,mathtools}
\usepackage[dvipsnames]{xcolor}
\usepackage{enumitem}
\usepackage{booktabs,array,longtable}
\usepackage{float}
\floatstyle{boxed}
\newfloat{Problem}{H}{prb}
\newfloat{Protocol}{H}{prt}
\usepackage{tikz}
\usetikzlibrary{arrows,positioning,decorations.pathreplacing}
\usepackage[colorlinks=true,linkcolor=blue!60!black,citecolor=green!40!black,urlcolor=blue!60!black]{hyperref}
\usepackage[capitalise,nameinlink]{cleveref}
\crefname{Protocol}{Protocol}{Protocols}
\crefname{Problem}{Problem}{Problems}

\theoremstyle{plain}
\newtheorem{theorem}{Theorem}[section]
\newtheorem{lemma}[theorem]{Lemma}
\newtheorem{claim}[theorem]{Claim}
\newtheorem{corollary}[theorem]{Corollary}
\newtheorem{proposition}[theorem]{Proposition}
\newtheorem{fact}[theorem]{Fact}
\theoremstyle{definition}
\newtheorem{definition}[theorem]{Definition}
\theoremstyle{remark}

\crefname{fact}{Fact}{Facts}
\makeatletter
\newtheorem*{rep@theorem}{\rep@title}
\newcommand{\newreptheorem}[2]{%
\newenvironment{rep#1}[1]{%
\def\rep@title{#2 \ref*{##1}}%
\begin{rep@theorem}}%
{\end{rep@theorem}}}
\makeatother
\newreptheorem{corollary}{Corollary}
\crefname{claim}{Claim}{Claims}

\newcommand{\eps}{\varepsilon}
\newcommand{\E}{\mathop{\mathbb{E}}}
\newcommand{\abs}[1]{\left|#1\right|}
\newcommand{\cS}{\mathcal{S}}
\providecommand{\cA}{\mathcal{A}}
\providecommand{\cB}{\mathcal{B}}
\newcommand{\cI}{\mathcal{I}}
\newcommand{\cE}{\mathcal{E}}
\newcommand{\cD}{\mathcal{D}}
\providecommand{\cU}{\mathcal{U}}
\newcommand{\DISJ}{\mathrm{DISJ}}

\providecommand{\norm}[1]{\left\lVert #1\right\rVert}
\newcommand{\Hmin}{H_\infty}
\newcommand{\tHmin}{\widetilde{H}_\infty}
\newcommand{\vS}{\vec{S}}
\newcommand{\Csam}{\widetilde{C}_i}
\newcommand{\Aall}{A_{\mathrm{all}}}

\newenvironment{proofof}[1]{\smallskip\noindent{\bf Proof of #1.}}%
{\hspace*{\fill}$\Box$\par}

\usepackage{xspace}
\newcommand{\Patrascu}{P\u{a}tra\c{s}cu\xspace}

\hypersetup{
  pdftitle={Unconditional Boolean Dynamic Lower Bounds from Disjointness},
  pdfauthor={Young Kun Ko},
  pdfkeywords={cell-probe model, dynamic data structures, lower bounds, communication complexity, directed reachability, set disjointness}
}

\title{Unconditional Boolean Dynamic Lower Bounds \\from Disjointness}

\author{Young Kun Ko\thanks{Department of Computer Science and Engineering, Pennsylvania State University, University Park, PA, USA. \texttt{ykko@psu.edu}}}

\date{}

\begin{document}
\maketitle

\begin{abstract}
We prove an unconditional $\Omega\big((\log n/\log\log n)^2\big)$ lower bound on the maximum of the worst-case update time and expected query time for the insert-only version of Pătraşcu's multiphase Disjointness problem [STOC 2010]. The bound holds in the zero-error cell-probe model with logarithmic word size. Pătraşcu's reductions transfer it to directed reachability under edge insertions, already on directed acyclic graphs of depth $2$. This improves the $\widetilde\Omega(\log^{3/2} n)$ bound of Larsen and Yu [SICOMP 2025] to the quadratic-logarithmic scale previously attained for dynamic range counting and multiphase Inner Product. The bound also transfers to a broad collection of dynamic graph, algebraic, geometric, and string problems.

We also obtain a separation between amortized and worst-case time for streaming regular-expression membership. For a simple family of expressions of maximum length $M$ over a polynomial-size alphabet, a deterministic algorithm uses $O(1)$ amortized time per symbol on every input stream, whereas every zero-error cell-probe algorithm with word size $w=\Theta(\log M)$ requires $\Omega\big((\log M/\log\log M)^2\big)$ time on some symbol.

The proof adapts Ko's 2.5-round multiphase communication framework [FOCS 2026] from Inner Product to Disjointness. Our main technical observation is that the protocol's verification step allows a one-sided corruption bound to replace the discrepancy property used for Inner Product.
\end{abstract}

\newpage
\section{Introduction}\label{sec:intro}

Recent algorithmic breakthroughs have exposed the vulnerability of central hardness conjectures. At the same time, Ko~\cite{ko_omega_2026} offered \emph{a new hope} for unconditional Boolean dynamic lower bounds: a program to unify bounds at the quadratic-logarithmic scale through 2.5-round communication games. In this work, \emph{the unconditional lower bounds strike back}: we extend the framework from Inner Product to Disjointness and obtain unconditional bounds for a broad collection of dynamic graph, set, matrix, geometric, and string problems.

The common source is \Patrascu's multiphase Disjointness problem~\cite{patrascu_mihai_towards_2010}: preprocess a family of sets, insert elements into a set $X$, and finally test whether $X$ is disjoint from a specified member of the family. For $n$ preprocessed subsets of $[n]$, we prove
\[
\max(t_u,t_q)=\Omega\!\left(\left(\frac{\log n}{\log\log n}\right)^2\right)
\]
in the zero-error cell-probe model with word size $w=\Theta(\log n)$, where $t_u$ is the worst-case insertion time and $t_q$ the expected query time on each input. Preprocessing is unrestricted. \Cref{thm:main-informal} gives the stronger update/query trade-off from which all our applications follow.

Disjointness connects this theorem to problems with very different query interfaces. Known reductions, and adaptations of published constructions, transfer the bound to incremental directed reachability and bipartite matching, planar distances, range mode, inverse-entry and determinant maintenance, geometric counting, and string matching. They use $O(1)$ target operations per source operation and change the instance size only polynomially, so they preserve the quadratic-logarithmic scale. \Cref{tab:intro-applications} lists these problems with references, and \cref{sec:graphs} collects the constructions, attributes their sources, and checks that they meet the requirements of our model.

Directed reachability is a leading example. Our bound holds already under edge insertions on depth-$2$ directed acyclic graphs and improves the $\widetilde\Omega(\log^{3/2}n)$ bound of Larsen and Yu~\cite{larsen_super-logarithmic_2025} to the quadratic-logarithmic scale previously attained for dynamic range counting and multiphase Inner Product~\cite{larsen_cell_2012,ko_omega_2026}.

We also obtain a separation between amortized and worst-case time for streaming regular-expression membership, which asks after each arriving symbol whether the prefix read so far matches a preprocessed expression. For a simple family of expressions of maximum length $M$ over a polynomial-size alphabet, a deterministic algorithm uses $O(1)$ amortized time per symbol on every input stream, whereas every zero-error cell-probe algorithm with word size $w=\Theta(\log M)$ requires $\Omega\big((\log M/\log\log M)^2\big)$ time on some symbol (\cref{thm:streaming-regex-informal}). Thus the Disjointness bound also gives an unconditional obstruction to deamortizing this streaming computation.

\paragraph{The vulnerability of hardness conjectures.}
Many of these dynamic problems have been studied through polynomial conditional lower bounds. Such a bound shows that a faster data structure would give a faster algorithm for a problem conjectured to be hard, such as 3SUM, all-pairs shortest paths (APSP), or CNF satisfiability~\cite{patrascu_mihai_towards_2010,abboud_popular_2014}. These reductions have been remarkably successful at charting relationships among problems, but they do not by themselves establish that any of the problems is hard. An unconditional lower bound proves hardness within the stated computational model, independently of such conjectures.

Recent events have made this distinction concrete. An algorithm discovered by Claude (Anthropic), subsequently analyzed, strengthened, and presented by Alman and Vassilevska Williams~\cite{alman_truly_2026}, yields truly subquadratic 3SUM and truly subcubic APSP algorithms, refuting the corresponding conjectures. Two recent manuscripts released by OpenAI\footnote{\emph{The Exponential Time Hypothesis is False}, OpenAI, October 25, 2026; available from the \href{https://github.com/openai/math/blob/main/preprints/The-Exponential-Time-Hypothesis-is-False-October-25-2026/README.md}{OpenAI Math repository}.} and Anthropic\footnote{\emph{A $2^{o(n)}$ Time Algorithm for 3-SAT}, Claude (Anthropic), 2026; \href{https://github.com/anthropics/formal-math/blob/main/eth/README.md}{paper and Lean formalization}.} claim subexponential algorithms for 3-SAT, which would refute the Exponential Time Hypothesis (ETH) and hence the Strong Exponential Time Hypothesis (SETH) \cite{impagliazzo_which_2001}.

The refuted conjectures no longer supply hardness for the corresponding lower bounds. The web of reductions remains intact, but its starting premise is gone, and reductions alone cannot establish the hardness of their targets. Our goal is to supply the missing foundation with \textbf{an unconditional lower bound} and put the existing reduction network to use. We do so for multiphase Disjointness: our quadratic-logarithmic lower bound is a proved source of hardness, which established reductions propagate to a broad collection of dynamic problems.

\subsection{Previous work}\label{sec:previous}

\subsubsection{Conditional lower bounds across dynamic problems}

Conditional lower bounds motivated this reduction network. \Patrascu~\cite{patrascu_mihai_towards_2010} introduced multiphase Disjointness, proved its conditional hardness under the 3SUM conjecture, and gave reductions to reachability, shortest paths, subgraph connectivity, and set and array problems. Subsequent work developed conditional bounds for graph problems~\cite{abboud_popular_2014,henzinger_unifying_2015}, planar graphs~\cite{abboud_popular_2016}, dynamic matrices~\cite{van_den_brand_dynamic_2019}, geometric counting~\cite{dallant_conditional_2022,jin_tight_2022}, and string problems~\cite{clifford_upper_2018,charalampopoulos_internal_2019}. The 3SUM and APSP refutations~\cite{alman_truly_2026} have removed central foundations of this program: those conjectures no longer support the associated lower bounds. The claimed ETH refutations discussed above would likewise refute SETH. The reductions remain valid, and many can be specialized or adapted to a single evolving set and a final selected query. The refutations thus leave the affected reduction chains intact, but with no valid source of hardness.

\subsubsection{Unconditional cell-probe lower bounds}
The cell-probe model~\cite{yao_should_1981} charges only for memory access: memory consists of $w$-bit cells, each read or write costs one probe, and computation is free. A lower bound therefore also applies to word-RAM algorithms with the same word size. We work with $w=\Theta(\log n)$ and allow arbitrary preprocessing, so the lower bound concerns the information that must be accessed while processing updates and queries. The model's generous assumptions make lower bounds especially difficult to prove, helping explain the slow progress even for basic dynamic problems.

Fredman and Saks~\cite{fredman_cell_1989} introduced the Chronogram method and proved an $\Omega(\log n/\log\log n)$ lower bound, which transfers to dynamic connectivity~\cite{miltersen_complexity_1994,henzinger_lower_1998}. \Patrascu and Demaine~\cite{patrascu_logarithmic_2006} improved the connectivity bound to $\Omega(\log n)$. For almost two decades, the strongest unconditional lower bounds for natural graph problems remained at the (near-)logarithmic scale. For problems with word-size answers, Larsen~\cite{larsen_cell_2012} reached $\Omega((\log n/\log\log n)^2)$ for weighted range counting. Reaching stronger bounds for Boolean queries required a further breakthrough: Larsen, Weinstein and Yu~\cite{larsen_crossing_2020} proved an $\widetilde\Omega(\log^{3/2}n)$ bound. Building on their technique, Larsen and Yu~\cite{larsen_super-logarithmic_2025} obtained the first super-logarithmic bound for a natural graph problem, proving
\[
t_q=\Omega\!\left(\frac{\log^{3/2}n}{\log^2(t_u w)}\right)
\]
for incremental reachability on DAGs, with worst-case update time $t_u$ and expected query time $t_q$~\cite[Theorem~1]{larsen_super-logarithmic_2025}. \Cref{fig:reach-timeline} places our reachability consequence in this progression.

\begin{figure}[!ht]
\centering
\begin{tikzpicture}[
  x=1cm, y=1cm,
  pt/.style={circle, draw, thick, fill=black, minimum size=7pt, inner sep=0pt},
  lab/.style={align=center, font=\small},
  yr/.style={font=\small},
]
  \draw[thick, ->] (-0.6,0) -- (13.6,0);
  \foreach \x/\yr/\bnd/\ref/\h in {
      1.0/1989/{$\Omega\!\left(\frac{\log n}{\log\log n}\right)$}/{Fredman--Saks \cite{fredman_cell_1989};\\ via \cite{miltersen_complexity_1994,henzinger_lower_1998}}/1.35,
      5.2/2004/{$\Omega(\log n)$}/{\Patrascu--Demaine \cite{patrascu_logarithmic_2006}}/1.35,
      8.9/2023/{$\widetilde{\Omega}\big(\log^{3/2} n\big)$}/{Larsen--Yu \cite{larsen_super-logarithmic_2025}}/1.35,
      12.3/2026/{$\Omega\!\left(\big(\frac{\log n}{\log\log n}\big)^{2}\right)$}/{This work}/1.35} {
    \draw[thick] (\x,-0.12) -- (\x,0.12);
    \node[yr] at (\x,-0.45) {\yr};
    \draw[gray] (\x,0.12) -- (\x,\h-0.35);
    \node[pt] at (\x,\h-0.35) {};
    \node[lab, anchor=south] at (\x,\h-0.15) {\bnd\\[2pt] \ref};
  }
\end{tikzpicture}
\caption{Unconditional cell-probe lower bounds for dynamic directed reachability, with $w = \Theta(\log n)$, stated as bounds on $\max(t_u, t_q)$. For prior results, dates refer to conference appearances. The first two bounds transfer from fully dynamic undirected connectivity to fully dynamic directed reachability; the later bounds already hold for incremental reachability. The first is Fredman and Saks's bound for prefix parity \cite{fredman_cell_1989}, which transfers to dynamic connectivity by a simple reduction observed independently by Miltersen, Subramanian, Vitter and Tamassia~\cite{miltersen_complexity_1994} and by Henzinger and Fredman \cite{henzinger_lower_1998}.}
\label{fig:reach-timeline}
\end{figure}

The Boolean-answer restriction is central to this history. The Chronogram method separates information according to the epochs in which updates arrive. Larsen's cell-sampling argument forces many probes to an individual epoch, but its original encoding argument uses the many bits returned by a query. Larsen, Weinstein and Yu instead reduce a Boolean data structure to a one-way communication game in which the receiver predicts the answer with a small advantage over random guessing; a peak-to-average lemma based on Chebyshev polynomials bounds this advantage~\cite{larsen_crossing_2020}. This is the framework underlying Larsen and Yu's reachability bound.

For multiphase Disjointness itself, \Patrascu's conjectured polynomial lower bound remains open. He proposed a three-party communication game whose hardness would imply the conjecture in the cell-probe model~\cite{patrascu_mihai_towards_2010}. Earlier progress includes bounds for non-adaptive data structures~\cite{brody_adapt_2015}, regimes with very fast queries~\cite{clifford_new_2015}, semi-adaptive data structures via two-round communication~\cite{ko_adaptive_2020,dvorak_lower_2020}, and a 1.5-round version of the communication game~\cite{chattopadhyay_upper_2016}. Proving \Patrascu's conjectured communication lower bound even for three-round protocols would imply major circuit lower bounds~\cite{ko_adaptive_2020}.

\paragraph{The unconditional lower bounds strike back.}
The new framework of~\cite{ko_omega_2026} is a common route to quadratic-logarithmic unconditional lower bounds for Boolean dynamic problems. Its 2.5-round game lets Bob verify Alice's simulation and detect its failure. Its first instance, multiphase Inner Product over $\mathbb F_2$, reaches the scale previously attained by Larsen for word-size answers~\cite{larsen_cell_2012}. Beyond that instance, the framework gives a general recipe based on discrepancy bounds, initiating a program to unify dynamic lower bounds at this scale through communication hardness and reductions.

We take the next step in this program by extending the framework to Disjointness. This requires a further insight into the 2.5-round game: the hard distributions for Disjointness do not satisfy the discrepancy property used for Inner Product in~\cite{ko_omega_2026}. We exploit verification by conditioning on certified ``disjoint'' answers, allowing a one-sided corruption bound to replace discrepancy. This change matters beyond the source problem: \Patrascu's reductions \cite{patrascu_mihai_towards_2010} begin from Disjointness, and many later constructions also encode it. The extension therefore connects the 2.5-round framework to a broad collection of natural dynamic problems. In the spirit of \Patrascu's program of reductions and communication lower bounds~\cite{patrascu_unifying_2008,patrascu_mihai_towards_2010}, the same constructions now transfer a proved lower bound, independently of the conjectures that motivated them.

\subsubsection{Streaming and amortization}
Lower bounds for dynamic data structures are closely related to lower bounds on streaming processing time. A streaming membership algorithm updates its state as symbols arrive and reports after each arrival whether the current prefix belongs to a fixed language. This connection lets us apply our Disjointness bound to streaming regular-expression membership, studied by Bille and Thorup~\cite{bille_regular_2010} and Dudek, Gawrychowski, Gourdel and Starikovskaya~\cite{naor_streaming_2022}. Earlier separations between amortized and worst-case time include union--find~\cite{tarjan_efficiency_1975,blum_single-operation_1985,fredman_cell_1989} and variance-optimal reservoir sampling~\cite{cohen_efficient_2011}; we refer the reader to \cref{sec:streaming-regex} for the detailed discussion.

\subsection{Our results}\label{sec:results}

\paragraph{Main result: multiphase Disjointness.}
Our main technical result is an $\Omega\big((\log n/\log\log n)^2\big)$ lower bound for the multiphase problem of \Patrascu~\cite{patrascu_mihai_towards_2010} with Disjointness, in its insert-only form. All applications in this paper follow from this theorem.

\begin{Problem}
\begin{itemize}[leftmargin=*]
  \item \emph{Phase I.} Sets $S_1,\dots,S_m\subseteq[n]$ arrive and are preprocessed arbitrarily.
  \item \emph{Phase II.} Elements are inserted one at a time into a set $X\subseteq[n]$, initially empty; each insertion probes at most $t_u$ cells.
  \item \emph{Phase III.} An index $q\in[m]$ arrives, and the data structure outputs whether $S_q\cap X=\emptyset$, reading at most $t_q$ cells in expectation; it does not write.
\end{itemize}
\caption{Insert-only multiphase Disjointness\label{prob:multiphase}}
\end{Problem}

\begin{samepage}
\begin{theorem}[Main technical theorem, informal]\label{thm:main-informal}
Every zero-error cell-probe data structure for insert-only multiphase Disjointness with $m=n$ query sets, word size $w=\Theta(\log n)$, worst-case update time $t_u$ per insertion and expected query time $t_q$ satisfies
\[
t_q=\Omega\!\left(\left(\frac{\log n}{\log(t_u w)}\right)^2\right).
\]
In particular,
\[
\max(t_u,t_q)=\Omega\!\left(\left(\frac{\log n}{\log\log n}\right)^2\right).
\]
\end{theorem}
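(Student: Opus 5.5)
We will follow the chronogram/cell-sampling architecture of Larsen's bound, with the one-way game of Larsen--Weinstein--Yu replaced by Ko's 2.5-round verification game. The Inner Product discrepancy step is replaced by a one-sided corruption bound for Disjointness.

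\textbf{Hard distribution and epochs.} Phase~I fixes $n$ random sets $S_1,\dots,S_n\subseteq[n]$, each of size about $\sqrt n$ (or chosen from a Razborov-type product-like distribution on a block structure). Phase~II inserts the elements of $X$ in epochs $i=1,2,\dots$, counted backwards from the query, of geometrically growing length $\beta^i$ with $\beta=(t_u w)^{O(1)}$. This gives $\Theta(\log n/\log(t_uw))$ epochs. The query index $q$ is uniform. The parameters are tuned so that $\Pr[S_q\cap X_i=\emptyset]$ is a constant bounded away from $0$ and $1$, where $X_i$ is the epoch-$i$ contribution. We then show that for each epoch $i$ the expected number of probes the query makes to cells last written in epoch $i$ is $\Omega(\log n/\log(t_uw))$. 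Summing over epochs gives $t_q=\Omega((\log n/\log(t_uw))^2)$. The second form follows by taking $t_u\le(\log n)^2$, since otherwise $\max(t_u,t_q)$ is already large.

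\textbf{Reduction to a 2.5-round game.} Fix an epoch $i$ and suppose queries make $o(\log n/\log(t_uw))$ probes into epoch-$i$ cells $C_i$. Alice holds the epoch-$i$ updates, Bob holds everything else, and the query is public. As in Ko's framework, Alice sends a random sample $\Csam\subseteq C_i$ of cells, of size about $|C_i|/(t_uw)^{\Theta(1)}$, with their addresses and contents. This costs far fewer than $\beta^i$ bits, since later epochs write only $t_u\sum_{j<i}\beta^j\ll\beta^i$ cells. Bob simulates the query on every $q$. For each query, all its epoch-$i$ probes land in $\Csam$ with non-negligible probability $(t_uw)^{-o(\log n/\log(t_uw))}=n^{-o(1)}$. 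Bob can verify when the simulation stays inside known cells (the half round: Alice confirms or rejects a short fingerprint of the set of resolved cells). Because the structure is zero-error, every verified simulation outputs the correct value of $[S_q\cap X=\emptyset]$, and Bob knows the contributions of all other epochs. The protocol therefore yields, with communication $o(\beta^i)$, a set $Q$ of $n^{1-o(1)}$ indices on which Bob correctly certifies ``$S_q\cap X_i=\emptyset$''. We condition on certified \emph{disjoint} answers specifically, which is the one-sided event.

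\textbf{Corruption step (main obstacle).} We must show this is impossible: a transcript with $o(\beta^i)$ bits defines a large set of inputs $X_i$ consistent with it. A rectangle-corruption bound in the style of Razborov/Babai--Frankl--Simon says that any large set $\mathcal{A}$ of $X_i$'s, together with a set $Q$ of $n^{1-o(1)}$ query sets all disjoint from every member of $\mathcal{A}$, has probability at most $2^{-\Omega(\beta^i)}$ under the hard distribution. The key difficulty is that only one-sided error is controlled: Bob's certified set can be biased toward easy queries. The plan is to argue that $\bigcup_{q\in Q}S_q$ covers a constant fraction of $[n]$ (or of the relevant universe block) whenever $|Q|\ge n^{1-o(1)}$ sets are random of size $\sqrt n$. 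Then every $X_i$ of size $\beta^i$ avoiding this union has probability $2^{-\Omega(\beta^i)}$. We encode $X_i$ by the transcript plus its position inside the complement of the union, and compare that encoding against the entropy of $X_i$. This contradicts the communication bound. Making the sizes consistent across all epochs simultaneously, with $|X|\le n$ and nontrivial disjointness probabilities, is the delicate calibration. The fallback is to split $[n]$ into disjoint blocks per epoch and let each $S_q$ have a random $\sqrt{|\text{block}|}$-subset in every block.
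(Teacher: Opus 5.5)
There is a genuine gap in the step you treat as a formality: how the decoder learns which simulations are correct.

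In your game the epoch-$i$ updates $X_i$ are on Alice's side. Bob holds $\vS$, simulates all $n$ queries, and must end up knowing the set $Q$ of certified ``disjoint'' queries. He needs all of $Q$, because encoding $X_i$ by its position in the complement of $\bigcup_{q\in Q}S_q$ requires him to compute that union. But an unsampled cell of $C_i$ looks to Bob exactly like a cell not written in epoch $i$. The success pattern of the $n$ simulations is a function of $X_i$ through $C_i\setminus\Csam$, so it must be paid for, and every natural way of paying is too expensive:
\begin{itemize}
\item Naming the successful queries costs about $\log\binom nk$ bits. Here $k=n^{1-\Theta(1)}$ in the regime $\tau_i=\Theta(\ell)$, with a small constant, which the quadratic bound must rule out. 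This is far above the entropy $\log\binom{n_i}{s_i}=O(\sqrt n\log n)$ of $X_i$.
\item A membership fingerprint for $C_i\setminus\Csam$ costs $\Omega(\abs{C_i})$ bits, which can be about $t_us_i$. This exceeds the entropy $\Theta(s_i\log n)$ of $X_i$ once $t_u$ is large. So it cannot give the first statement for all $t_u$, nor the second, which needs $t_u$ up to $(\log n/\log\log n)^2$.
\item Certifying only enough queries to cover a constant fraction of $B_i$ also fails. Block $i$ is disjoint with probability $1-\Theta(1/\ell)$, so a certified ``disjoint'' answer conveys only about $1/\ell$ bits about $X_i$. Singling it out among simulations that succeed with probability $p^{\Theta(\tau_i)}$ costs $\Theta(\tau_i\log(t_uw))$ bits.
\end{itemize}
This is exactly the Boolean-answer barrier of one-way cell-sampling encodings, and a ``short fingerprint'' confirmed in a half round does not get around it. Separately, in your setup Alice cannot even compute $C_i$ and its contents, since that requires $M_{>i}$, which depends on Bob's inputs.

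The missing idea is the paper's opposite placement of inputs:
\begin{itemize}
\item Alice holds the query sets $\vS$, the query $Q$ is revealed to her alone, and she sends a single $O(t_qw)$-bit transcript.
\item Bob holds $X$ and receives from Merlin the list $U$ of all cells written in epochs $i,\dots,1$ with their final contents. He verifies Alice's transcript against $U$.
\end{itemize}
$U$ has up to $5t_uws_i$ bits, more than the entropy of $X^{(i)}$, yet it costs nothing on the update side. Bob already holds $X$, and only his $\eps s_i/4$-bit sample message $\Psi$ leaks about $X^{(i)}$. On the query side, the direct-sum identity over the $m=n$ query sets limits $U$'s leakage to $H(U)/m\le n^{-2/5}$ bits about $S_Q$.

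Crucially, nobody ever identifies the set of successful queries. The analysis conditions on the single event $B_Q=1$, whose probability $\rho_1$ the simulation makes at least $n^{-1/6}$. This pays $\log(1/\rho_1)$ bits of min-entropy once, not per query. The result is a monochromatic near-rectangle for $(S_Q^{(i)},X^{(i)})$. Pinsker turns it into a nearly monochromatic product distribution, which contradicts the sparse-set corruption bound (\cref{lem:corruption}).

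Your covering argument over a typical fixed $\vS$ could replace that last step only if the certified set reached the decoder cheaply, and your protocol does not achieve that.
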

\end{samepage}

\noindent \Cref{thm:main} gives the formal statement; \cref{sec:model} specifies the randomness and cost conventions.

\Patrascu~\cite{patrascu_mihai_towards_2010} conjectured a polynomial lower bound for the multiphase problem when $m=n^{1+\Omega(1)}$ and showed that Disjointness serves as a common source of hardness for several natural dynamic problems. Our theorem does not resolve that conjecture, but reaches the quadratic-logarithmic scale already in the regime $m=n$. Ko~\cite{ko_omega_2026} and Kouck\'y, Loff, Molli and Saks~\cite{koucky_natural_2026} discuss barriers to going beyond this scale with current techniques. We next give two consequences.

\begin{samepage}
\paragraph{Corollary I: dynamic problems.}
Applying \Patrascu's reduction gives a quadratic-logarithmic lower bound for incremental directed reachability, improving the bound of Larsen and Yu~\cite{larsen_super-logarithmic_2025}.

\begin{corollary}[Directed reachability, informal]\label{thm:reach-informal}
Every zero-error cell-probe data structure for incremental directed reachability on $N$ vertices, starting with no edges, with word size $w=\Theta(\log N)$, worst-case insertion time $t_u$ and expected query time $t_q$, satisfies
\[
t_q=\Omega\!\left(\left(\frac{\log N}{\log(t_u w)}\right)^2\right);
\]
in particular,
\[
\max(t_u,t_q)=\Omega\!\left(\left(\frac{\log N}{\log\log N}\right)^2\right).
\]
This holds already on depth-$2$ DAGs whose queries ask whether a fixed sink is reachable.
\end{corollary}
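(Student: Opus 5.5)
The plan is to derive the corollary from \cref{thm:main-informal} through \Patrascu's reduction from multiphase Disjointness to incremental reachability. Suppose we have a data structure $\mathcal{D}$ for incremental reachability on $N$ vertices with worst-case insertion time $t_u$ and expected query time $t_q$. We will build a structure for insert-only multiphase Disjointness with $m=n$ and $N=\Theta(n)$ that uses $O(t_u)$ probes per insertion and $O(t_q)$ expected probes per query. The corollary then follows because $\log N=\Theta(\log n)$ and $w=\Theta(\log n)$.

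\textbf{Construction.} Take vertex layers $Q=\{u_1,\dots,u_n\}$, $E=\{v_1,\dots,v_n\}$, and one sink $s$, for $2n+1$ vertices. The DAG has depth $2$: edges run $Q\to E\to s$.
\begin{itemize}[leftmargin=*]
  \item \emph{Phase I.} For each set $S_q$, insert the edges $u_q\to v_j$ for every $j\in S_q$ into $\mathcal{D}$, starting from the empty graph. This costs up to $n^2$ insertions, but preprocessing is unrestricted, so we simulate these insertions and store the resulting memory as the preprocessed state.
  \item \emph{Phase II.} Inserting $x$ into $X$ becomes the single edge insertion $v_x\to s$, which costs at most $t_u$ probes.
  \item \emph{Phase III.} A query $q$ asks whether $s$ is reachable from $u_q$. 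Since every path has the form $u_q\to v_j\to s$, $s$ is reachable exactly when some $j\in S_q\cap X$ exists. So $S_q\cap X=\emptyset$ iff the reachability answer is ``no''. The query reads at most $t_q$ cells in expectation and writes nothing.
\end{itemize}
The same construction matches the corollary's phrasing: queries always target the fixed sink $s$.

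\textbf{Model conventions.} These are the step needing care. Zero error is preserved because the answers coincide exactly. ``Worst-case'' update time must hold for every insertion, including the Phase I edges. Those edges are simulated offline and only their final memory state matters, so the Phase II bound comes directly from $\mathcal{D}$'s guarantee.

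\textbf{Randomness.} The expected query time must hold on each input. If $\mathcal{D}$ is randomized, the preprocessing simulation fixes its random choices. The expectation is over this randomness for each fixed input sequence, which is a legitimate Disjointness input. I would check this against the conventions of \cref{sec:model}, which presumably are set up so that the preprocessing randomness is absorbed consistently.

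\textbf{Cell count and word size.} The memory may grow during the $n^2$ preprocessing insertions, but the cell-probe bound does not restrict space. The word size $w=\Theta(\log N)=\Theta(\log n)$ is as required.

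\textbf{Conclusion.} Applying \cref{thm:main-informal} gives $t_q=\Omega((\log n/\log(t_u w))^2)=\Omega((\log N/\log(t_u w))^2)$. For the max form, either $t_u\ge(\log N)^2$, or else $\log(t_u w)=O(\log\log N)$ and then $t_q=\Omega((\log N/\log\log N)^2)$. I expect no serious obstacle beyond aligning these model conventions.
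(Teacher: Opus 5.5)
Your proposal is correct and is essentially the paper's proof: it uses the same depth-$2$ DAG $u_i\to v_j\to t$ from \Patrascu's reduction, builds the Phase~I edges during uncharged preprocessing, turns each insertion into $X$ into one edge insertion and the query into one reachability query, and transfers the bound through \cref{lem:app-transfer} and \cref{thm:main}. The one slip is your claim that the query ``writes nothing'': a reachability query may write cells, so, as in \cref{lem:app-transfer}, you should keep its writes in free temporary workspace (allowed since no operation follows), which keeps Phase~III write-free without adding reads.
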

\end{samepage}

\noindent \Cref{thm:reach} gives the formal statement. The bound trivially applies to fully dynamic directed reachability, since a fully dynamic data structure must handle insertion-only sequences.

The same $\Omega\big((\log N/\log\log N)^2\big)$ lower bound on $\max(t_u,t_q)$ follows for the dynamic problems in \cref{tab:intro-applications}, which describes the tasks and gives representative references to prior work. The existing web of reductions gives these consequences their breadth, and we expect further applications beyond this list. Each consequence is an unconditional lower bound for its target problem, obtained by transferring proved hardness from Disjointness.
\begingroup
\small
\renewcommand{\arraystretch}{1.05}
\setlength{\LTcapwidth}{\textwidth}
\setlength{\LTpre}{\smallskipamount}
\setlength{\LTpost}{\medskipamount}
\begin{longtable}{@{}>{\raggedright\arraybackslash}p{0.25\textwidth}>{\raggedright\arraybackslash}p{0.58\textwidth}>{\raggedright\arraybackslash}p{\dimexpr0.17\textwidth-4\tabcolsep\relax}@{}}
\caption{Dynamic applications: brief descriptions and representative prior work.}
\label{tab:intro-applications}\\
\toprule
\textbf{Problem} & \textbf{Task} & \textbf{Prior work} \\
\midrule
\endfirsthead
\multicolumn{3}{@{}l}{\tablename~\thetable\ (continued)}\\
\toprule
\textbf{Problem} & \textbf{Task} & \textbf{Prior work} \\
\midrule
\endhead
\multicolumn{3}{r@{}}{\footnotesize Continued on the next page.}\\
\endfoot
\bottomrule
\endlastfoot
\multicolumn{3}{@{}l}{\emph{Graphs (edge insertions unless stated otherwise)}}\\*
\hyperref[sec:graph-problems]{Directed reachability} & Is one vertex reachable from another? Includes fixed $s$--$t$ queries and counting vertices reachable from a fixed source. & \cite{patrascu_mihai_towards_2010,abboud_popular_2014} \\
\addlinespace[2pt]
\hyperref[sec:graph-problems]{Strong connectivity} & Can two vertices reach one another along directed paths? & \cite{haeupler_incremental_2012} \\
\addlinespace[2pt]
\hyperref[sec:graph-problems]{Directed cycle detection} & Does the graph contain a directed cycle? & \cite{haeupler_incremental_2012,chen_almost-linear_2024} \\
\addlinespace[2pt]
\hyperref[sec:graph-problems]{Undirected shortest paths} & Return the distance between two vertices, possibly fixed vertices $s$ and $t$. & \cite{patrascu_mihai_towards_2010,abboud_popular_2014} \\
\addlinespace[2pt]
\hyperref[sec:graph-problems]{Subgraph connectivity} & Activate vertices of a fixed undirected graph; are two vertices connected in the induced subgraph? & \cite{patrascu_mihai_towards_2010} \\
\addlinespace[2pt]
\hyperref[sec:graph-problems]{Triangle detection} & Does an undirected graph contain a triangle? & \cite{henzinger_unifying_2015} \\
\addlinespace[2pt]
\hyperref[sec:bipartite-matching]{Bipartite matching} & Return the maximum number of vertex-disjoint edges, or test whether such edges can cover every vertex. & \cite{abboud_popular_2014} \\
\addlinespace[2pt]
\hyperref[sec:app-density]{Exact densest subgraph} & Return the maximum ratio of induced edges to vertices over nonempty vertex subsets. & \cite{henzinger_unifying_2015} \\
\addlinespace[2pt]
\hyperref[sec:app-marked-distance]{Nearest marked vertex} & In a fixed graph with changing vertex marks, return the distance from a queried vertex to the nearest marked vertex. & \cite{henzinger_unifying_2015} \\
\midrule
\multicolumn{3}{@{}l}{\emph{Planar graphs (edge insertions)}}\\*
\hyperref[sec:planar-distance]{Undirected distance} & Return the distance between two vertices; edge weights are unit or positive. & \cite{abboud_popular_2016} \\
\addlinespace[2pt]
\hyperref[sec:planar-girth]{Directed girth} & Return the length of a shortest directed cycle; edge weights are unit or positive. & \cite{abboud_popular_2016} \\
\addlinespace[2pt]
\hyperref[sec:planar-diameter]{Diameter} & Return the largest distance between any two vertices in a graph with positive edge weights. & \cite{abboud_popular_2016} \\
\addlinespace[2pt]
\hyperref[sec:planar-matching]{Maximum-weight bipartite matching} & Return the maximum total weight of vertex-disjoint edges in a bipartite graph with positive edge weights. & \cite{abboud_popular_2016} \\
\midrule
\multicolumn{3}{@{}l}{\emph{Sets and arrays}}\\*
\hyperref[sec:set-disjointness]{Set disjointness} & Insert elements into a family of sets; are two queried sets disjoint? & \cite{kopelowitz_dynamic_2015} \\
\addlinespace[2pt]
\hyperref[sec:range-mode]{Range mode} & Change individual array entries; return a most frequent value in a queried subarray. & \cite{chan_linear-space_2014} \\
\addlinespace[2pt]
\hyperref[sec:langerman]{Langerman's problem} & Change individual entries of an integer array; does any nonempty prefix have sum zero? & \cite{patrascu_mihai_towards_2010} \\
\addlinespace[2pt]
\hyperref[sec:pagh]{Pagh's problem} & Add intersections of existing sets to a family; test whether a queried element belongs to a queried set. & \cite{patrascu_mihai_towards_2010} \\
\addlinespace[2pt]
\hyperref[sec:erickson]{Erickson's problem} & Increment an entire matrix row or column; return the largest matrix entry. & \cite{patrascu_mihai_towards_2010,ko_omega_2026} \\
\midrule
\multicolumn{3}{@{}l}{\emph{Matrices over prime fields of size $\Theta(N)$}}\\*
\hyperref[sec:app-inverse]{Inverse-entry zero testing} & Maintain a nonsingular matrix under single-entry updates; is a specified inverse entry zero? & \cite{van_den_brand_dynamic_2019} \\
\addlinespace[2pt]
\hyperref[sec:app-determinant]{Determinant} & Change individual matrix entries; return the determinant or test whether it equals one. & \cite{van_den_brand_dynamic_2019} \\
\midrule
\multicolumn{3}{@{}l}{\emph{Geometry}}\\*
\hyperref[sec:distinct-color-counting]{2D distinct-color counting} & Insert colored points; count the distinct colors inside a queried axis-aligned rectangle. & \cite{jin_tight_2022} \\
\addlinespace[2pt]
\hyperref[sec:hull]{3D convex-hull vertex counting} & Insert points; count the vertices of their convex hull. & \cite{dallant_conditional_2022} \\
\addlinespace[2pt]
\hyperref[sec:maxima]{3D maximal-point counting} & Insert points; count those for which no other point is at least as large in every coordinate. & \cite{dallant_conditional_2022,jin_tight_2022} \\
\midrule
\multicolumn{3}{@{}l}{\emph{Strings}}\\*
\hyperref[sec:internal-dictionary]{Internal dictionary matching} & Insert patterns specified as substrings of a fixed text; does a queried text interval contain an inserted pattern? & \cite{charalampopoulos_internal_2019} \\
\addlinespace[2pt]
\hyperref[sec:hamming-threshold]{Binary Hamming-distance threshold} & Change bits of a pattern; does it differ from an aligned substring of a fixed text in at most $k$ positions? The threshold $k$ is fixed and proportional to the pattern length. & \cite{clifford_upper_2018} \\
\addlinespace[2pt]
\hyperref[sec:active-documents]{Active-document existence} & Turn on documents from a fixed collection; does an active document contain both queried patterns? & \cite{jin_tight_2022} \\
\end{longtable}
\endgroup
These consequences follow from known reductions or simple adaptations of published constructions; for completeness and reader convenience, \cref{sec:graphs} gives the reductions and verifies their applicability to our precise model. The problem names in \cref{tab:intro-applications} link to their constructions, and \cref{tab:appendix-transfers} records the precise trade-offs.
All statements use the zero-error model, worst-case update time, and per-input expected query time of \cref{thm:reach-informal}, and many hold already under insertion-only updates. A reduction that answers the source query with target queries alone bounds $t_q$; one that also needs a constant number of target updates bounds $t_u+t_q$ (\cref{lem:app-transfer}).

\paragraph{Corollary II: streaming membership, amortized versus worst case.}
Our theorem also bounds the worst-case time per symbol of streaming regular-expression membership, a basic streaming task \cite{bille_regular_2010,naor_streaming_2022}. For each query set $S_i$, let $\Gamma_i$ be the alternation of the alphabet symbols corresponding to elements outside $S_i$, and consider the regular expression
\[
\bigcup_i \big(\#\, \Gamma_i^*\, c_i\big),
\]
where $\cup$ denotes alternation and the final symbol $c_i$ selects the query; when $S_i=[n]$, the $i$th branch is simply $\#c_i$. Data symbols can be recorded in constant time, keeping only their first occurrence, while on the final selector one scans the resulting list of distinct symbols. Consequently, this family admits deterministic $O(1)$ amortized processing time per arriving symbol on every input stream.

A matcher with small worst-case processing time per symbol would yield a multiphase data structure with the same update and query bounds. Writes performed while processing the final selector can be simulated in temporary workspace, since no later operation follows.

\begin{corollary}[Streaming regular-expression membership, informal]\label{thm:streaming-regex-informal}
There is a family of regular expressions of maximum length $M$ over a polynomial-size alphabet for which streaming membership, with an answer after every symbol, can be solved in deterministic $O(1)$ amortized time per arriving symbol on every input stream, but every zero-error cell-probe algorithm with word size $w=\Theta(\log M)$ requires
\[
\Omega\!\left(\left(\frac{\log M}{\log\log M}\right)^2\right)
\]
worst-case processing time on some symbol.
\end{corollary}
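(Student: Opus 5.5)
The proof has two halves: an upper bound by a concrete streaming algorithm, and a lower bound obtained by reducing from \cref{thm:main-informal}.

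\textbf{Instance.} Take $m=n$ query sets $S_1,\dots,S_n\subseteq[n]$. The alphabet has $\#$, one data symbol $a_j$ for each $j\in[n]$, and one selector $c_i$ for each $i\in[n]$, so it has size $2n+1$. Branch $i$ is $\#\,\Gamma_i^*\,c_i$, where $\Gamma_i$ is the alternation of the $a_j$ with $j\notin S_i$. The whole expression has length $M=O(n^2)$, so $\log M=\Theta(\log n)$ and word size $w=\Theta(\log M)$ is the same as $\Theta(\log n)$.

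A stream $\#\,a_{x_1}\cdots a_{x_k}\,c_q$ matches exactly when $S_q\cap X=\emptyset$, where $X=\{x_1,\dots,x_k\}$. A prefix that ends in a data symbol or in $\#$ never matches. Streams not of the form $\#\,a_{x_1}\cdots a_{x_k}\,c_q$ are handled by a flag that records the first violation of this shape. After the first selector, every longer prefix is rejected.

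\textbf{Upper bound.} Preprocessing stores each $S_i$ as a bit table, which allows $O(1)$ membership tests. We keep a seen-bitmap over $[n]$ and a list of the distinct data symbols read so far. A data symbol costs $O(1)$: if its bit is unset, set it and append it to the list. This step can use an uninitialized-array trick, or the bitmap can be zeroed as part of preprocessing. The $\#$ and the shape checks also cost $O(1)$ each.

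On the selector $c_q$, scan the list and test each element for membership in $S_q$. The list has at most as many entries as data symbols read, so the scan is charged to those earlier symbols, one unit each. Only one selector can be scanned, because everything after it is rejected without work. Hence the total time is $O(\text{stream length})$, which is $O(1)$ amortized per symbol on every stream. This bound is deterministic.

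\textbf{Lower bound.} Suppose a zero-error streaming algorithm spends at most $T$ probes, in the worst case, on every symbol. We build a multiphase data structure from it:
\begin{itemize}
\item Phase I: feed the expression during preprocessing, then feed $\#$.
\item Phase II: each insertion of $x$ feeds the symbol $a_x$, costing at most $T$ probes. Repeated insertions are harmless, since $\Gamma_i^*$ allows repeated symbols.
\item Phase III: the query $q$ feeds $c_q$ and returns the matcher's answer, using at most $T$ probes.
\end{itemize}
The Phase III query must not write to the main memory. Any writes made while processing $c_q$ are redirected to a temporary overlay: a cell written during the query is recorded in the overlay, and later reads of that cell within the query are answered from the overlay. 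This leaves the number of probes to the real memory unchanged, up to a constant factor. The simulation also keeps the zero-error property, and expectation over the algorithm's randomness is preserved.

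Applying \cref{thm:main-informal} with $t_u,t_q\le T$ gives $T=\Omega((\log n/\log\log n)^2)=\Omega((\log M/\log\log M)^2)$.

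The main point needing care is the match between the models. The worst-case bound for streaming must be compared with the multiphase theorem's expected query time per input. A worst-case $T$ bounds the expectation, so this direction is fine. The algorithm's randomness and its preprocessing on the expression must also fit the multiphase model. Both fit, because preprocessing is unrestricted there.
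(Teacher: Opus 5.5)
Your proposal is correct and follows the paper's proof (\cref{thm:streaming-regex}) essentially step for step. It uses the same expression family $R_{\vS}$, the same seen-bit/list algorithm whose final scan is charged to earlier data symbols, and the same reduction, which feeds $\#$ in Phase~I, $a_j$ per insertion and $c_q$ for the query, with the query's writes kept in temporary workspace. The differences are cosmetic. The paper separates the data-symbol cost $t_a$ from the selector cost $t_c$ to obtain the finer trade-off $t_c=\Omega\big((\log M/\log(t_a w))^2\big)$. It also uses the branch $\#\,c_i$ when $S_i=[n]$, since an empty alternation $\Gamma_i$ is not a regular expression.
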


\noindent Binary encoding gives a weaker $\Omega(\log M/(\log\log M)^2)$ bound per bit. The symbols in the original alphabet occupy $O(1)$ words each. \Cref{sec:streaming-regex} gives both statements, their reductions, and a comparison with earlier separations.

\subsection{Proof overview}\label{sec:overview}

We sketch the proof of \cref{thm:main-informal}, which follows the framework of \cite{ko_omega_2026} and works in two layers. The outer layer is the Chronogram method of Fredman and Saks \cite{fredman_cell_1989}: we split the insertions into $\Theta(\log n/\log(t_u w))$ epochs, and show that a query must probe $\Omega(\log n/\log(t_u w))$ cells written in each of a constant fraction of them. The inner layer proves this bound for a single epoch, following \cite{ko_omega_2026}: if a query probes fewer cells of the epoch, we turn the data structure into a 2.5-round protocol whose probability of answering ``disjoint'' contradicts a communication lower bound. \Cref{sec:proof-main} gives the details.

\subsubsection{The outer layer: the Chronogram method}

The Chronogram method \cite{fredman_cell_1989} divides the insertions into $\ell$ epochs whose sizes grow geometrically going back in time. We number the epochs $\ell, \ell-1, \dots, 1$ in the order in which they occur, so that the larger epochs come first and epoch $1$ is the most recent. The main observation of Fredman and Saks is that we can decompose the cells by the epoch of their last write: let $C_i$ be the set of cells last written in epoch $i$. Each probe goes to at most one of the disjoint sets $C_1, \dots, C_\ell$, so for most epochs $i$, a random query probes only $O(t_q/\ell)$ cells of $C_i$ in expectation. We show that for a constant fraction of the epochs $i$, a random query must probe $\Omega(\log n/\log(t_u w))$ cells of $C_i$. With $\ell = \Theta(\log n/\log(t_u w))$, this gives $t_q = \Omega\big((\log n/\log(t_u w))^2\big)$.

\paragraph{The hard distribution.}
Recall that in multiphase Disjointness (\cref{prob:multiphase}), the query sets $S_1, \dots, S_m$ arrive first, the elements of $X$ are then inserted, and finally a query asks whether $S_q \cap X = \emptyset$. We choose disjoint blocks $B_1, \dots, B_\ell$ in $[n]$ and leave any remaining coordinates unused. In epoch $i$, $s_i = (t_u w)^{2i}$ random elements of $B_i$ are inserted, and every query set has $s_i$ random elements in $B_i$. In other words, this is the hard distribution of Babai, Frankl and Simon \cite{babai_complexity_1986} for Disjointness, two independent random sets of size about the square root of the universe, spread across the epochs: each block carries one copy of it, made slightly sparser so that the $\ell$ copies together meet with constant probability. Two features of this distribution matter.
\begin{itemize}[leftmargin=*]
  \item \emph{Recent epochs cannot record older ones.} The epoch sizes are geometric: going back in time, they grow by a factor $(t_u w)^2$. So the insertions after epoch $i$ number at most $s_{i-1} + s_{i-2} + \dots + s_1 \le 2 s_{i-1} = 2 s_i/(t_u w)^2$, and since each insertion writes at most $t_u$ cells of $w$ bits, they write only $O(s_i/(t_u w))$ bits. This is far too few to describe the $s_i$ elements inserted in epoch $i$, so a query that needs these elements must read cells of $C_i$. This is the key property of the Chronogram method.
  \item \emph{Every epoch matters.} Block $B_i$ has $\ell s_i^2$ elements. By the birthday paradox, two random sets of size $s_i$ in a universe of size $\ell s_i^2$ meet with probability about $1/\ell$. So the answer is ``disjoint'' with constant probability, and each block can change it.
\end{itemize}

\subsubsection{The inner layer: the 2.5-round game}

Suppose, for contradiction, that $t_q = o\big((\log n/\log(t_u w))^2\big)$, and fix an epoch $i$ in which a random query probes only $k = O(t_q/\ell)$ cells of $C_i$ in expectation. The idea of cell sampling, introduced to the Chronogram method by Larsen \cite{larsen_cell_2012}, is that a small random sample of $C_i$ then already suffices to answer a nontrivial fraction of the queries. If each cell of $C_i$ is sampled independently with probability $p = \Theta(1/(t_u w))$, then all $k$ cells of $C_i$ probed by a query are sampled with probability about $p^k$. Both $1/\ell$ and $\log(1/p)$ contribute a factor of $\log(t_u w)$, so
\[ p^{k} \;\ge\; \exp\Big(-O\Big(\frac{t_q \log^2(t_u w)}{\log n}\Big)\Big) \;=\; n^{-o(1)}. \]
This is the source of the quadratic form of the bound. To exploit it, we use the following variant of the 2.5-round multiphase communication game of \cite{ko_omega_2026}. Alice holds the query sets $S_1, \dots, S_m$, Bob holds $X$, and an omniscient adviser, Merlin, holds both. Merlin follows the prescribed message rule, supplying advice determined by the inputs.
\begin{Protocol}
\begin{itemize}[leftmargin=*]
  \item \emph{Round 0 (Merlin to Bob).} Merlin sends Bob a message $U$.
  \item \emph{Round 0.5 (Bob to Alice).} Bob sends Alice a message $\Phi$, which may depend on $U$, $X$ and his private randomness, but not on the query.
  \item \emph{Round 1 (Alice to Bob).} A random query index $Q \in [m]$ is revealed to Alice only, and she sends Bob a message $A_Q$.
  \item \emph{Round 2 (Bob).} Bob outputs whether $S_Q \cap X = \emptyset$, or FAIL.
\end{itemize}
\caption{The 2.5-round multiphase communication game (cf.\ \cite[Protocol~3]{ko_omega_2026})\label{prot:game}}
\end{Protocol}
\noindent This game is a restriction of \Patrascu's multiphase communication game, in which Alice and Bob both know the query and may then communicate freely. The required lower bound for \Patrascu's game remains open, while known bounds for more restricted games \cite{chattopadhyay_upper_2016,ko_adaptive_2020,dvorak_lower_2020} do not imply lower bounds for general data structures. The 2.5-round game sits in between. Bob's message $\Phi$ does not depend on the query, which makes the game amenable to information-theoretic analysis, and yet the game can still simulate a data structure.

\paragraph{Simulation.}
We turn the data structure into a protocol for the game as follows (Step~1, \cref{sec:step1}).
\begin{itemize}[leftmargin=*]
  \item Merlin's message $U$ is the list of all cells written in epochs $i, \dots, 1$, with their final contents.
  \item Bob's message $\Phi$ consists of the elements of $X$ outside $B_i$, the contents of all cells written after epoch $i$, and each cell of $C_i$ independently with probability $p = \eps/(32 t_u w)$, for a small constant $\eps > 0$. With $\Phi$, Alice knows the whole memory except the unsampled cells of $C_i$.
  \item Alice's message $A_Q$ is her simulation of query $Q$ on the memory she knows.
  \item Bob checks $A_Q$ against $U$. Alice's simulation can only go wrong on unsampled cells of $C_i$, which are listed in $U$, so Bob detects every mistake. He outputs Alice's answer if $A_Q$ is consistent with $U$, and FAIL otherwise.
\end{itemize}
Every non-FAIL answer is therefore correct. Moreover, Alice's simulation is correct whenever all $k$ cells of $C_i$ that the query probes are sampled, which happens with probability about $p^k = n^{-o(1)}$.

Verification is what distinguishes this protocol from the one-way games behind earlier bounds \cite{larsen_cell_2012,larsen_super-logarithmic_2025}. There, Alice answers on her own and cannot tell whether her simulation is correct, since an unsampled cell of $C_i$ looks to her like a cell that was not written in epoch $i$ at all; a failed simulation is indistinguishable from a successful one. Here, Bob knows exactly when the simulation fails.

The protocol nevertheless reveals little about block $i$. Bob's message reveals at most $\eps s_i/4$ bits about $X \cap B_i$, a small fraction of its entropy, and Alice's simulation has $O(t_q w)$ bits. Merlin's list $U$ is long, but it is shared by all $m$ query sets, so it reveals little about the random query set $S_Q$. Yet Bob learns, with probability $n^{-o(1)}$, that $S_Q$ and $X$ are disjoint. The communication lower bound rules out this combination of small information loss and substantial probability of a certified ``disjoint'' answer.

\subsubsection{From discrepancy to one-sided corruption}

In \cite{ko_omega_2026}, Bob outputs a random guess when verification fails, and the contradiction comes from the protocol's advantage over random guessing. After conditioning on suitable message values and smoothing, the inputs have nearly full min-entropy and are nearly independent. For Inner Product the proof ends with a discrepancy bound: under every product distribution with nearly full min-entropy, the answer is nearly unbiased. That discrepancy property does not hold for Disjointness under the distributions needed here.

Verification allows us to use a weaker, one-sided property. \cite[Remark~3.4]{ko_omega_2026} already relates detected failure to the protocol's acceptance probability. We exploit the certified answer itself: Bob outputs FAIL instead of guessing, and we condition on his returning ``disjoint''. Every answer he returns is correct. We therefore need only that every product distribution with nearly full min-entropy puts noticeable mass on intersecting pairs. This one-sided corruption bound does hold for Disjointness: for sets of size $\sqrt N$ in a universe of size $N$, it is the classic rectangle bound of Babai, Frankl and Simon \cite{babai_complexity_1986}, which we extend to our sparser sets in \cref{sec:corruption}. Thus a one-sided corruption bound compatible with the epoch distribution replaces discrepancy. Extending the argument to another block function would also require a suitable product hard distribution with constant probability of the accepted global answer.

\subsubsection{The complete argument}

Our proof proceeds in four steps. As in \cite{ko_omega_2026}, the first is a simulation, and the other three prove a communication lower bound.
\begin{enumerate}[leftmargin=*]
  \item \textbf{Simulation} (\cref{sec:step1}). A data structure violating the proposed bound yields a protocol for the game of \cref{prot:game} in which Bob outputs ``disjoint'' with probability $n^{-o(1)}$, and every answer he outputs is correct.
  \item \textbf{A monochromatic near-rectangle} (\cref{sec:step2}). Among runs on which Bob returns ``disjoint'', choose a suitable value of the messages and auxiliary conditioning variables. Then restrict $S_Q\cap B_i$ to an event of constant conditional probability. After this smoothing step, the two block inputs have nearly full min-entropy and small mutual information. They are always disjoint because Bob's answer is certified.
  \item \textbf{A nearly monochromatic rectangle} (\cref{cor:rect}). By Pinsker's inequality, replacing this distribution by the product of its marginals changes it little in total variation distance. The resulting rectangle therefore assigns little mass to intersecting pairs.
  \item \textbf{Corruption} (\cref{cor:product}). Every product of two distributions on sparse subsets of $B_i$ with nearly full min-entropy puts mass $\Omega(1/\ell)$ on intersecting pairs, a contradiction.
\end{enumerate}
The data structure may also be randomized, with expected query time, as in \cite{larsen_super-logarithmic_2025}: we fix its coins, and Alice stops each simulation after $8 t_q$ probes, which costs only a constant factor.

\paragraph{Organization.}
\Cref{sec:prelim} gives the model conventions and information-theoretic preliminaries. \Cref{sec:proof-main} proves the main theorem using the sparse-set corruption bound established in \cref{sec:corruption}. \Cref{sec:graphs} contains the dynamic reductions and the streaming application.

\section{Preliminaries}\label{sec:prelim}

\subsection{Cell-probe conventions}\label{sec:model}

Cells and addresses have $w$ bits. Computation and temporary workspace within an operation are free; all input-dependent information retained between operations resides in memory and is accessed through charged probes. Preprocessing is unrestricted and uncharged. As in~\cite{larsen_super-logarithmic_2025}, a random tape is drawn independently of the input, correctness holds for every input and tape, update time is worst case over inputs and tapes, and query time is expected over the tape for each fixed valid input history. We normalize $t_u,t_q\ge1$, take $t_u$ to be an integer, and state asymptotic bounds for sufficiently large instances.

\subsection{Information theory}\label{sec:prelim-it}

We collect the information-theoretic facts used in the proof; see \cite{cover_elements_2006} for background. Logarithms are base $2$ unless written $\ln$.

\begin{definition}[Entropy, mutual information, divergence]
The entropy of a random variable $X$ is $H(X) := \sum_x \Pr[X = x] \log\frac{1}{\Pr[X = x]}$, and the conditional entropy is $H(X | Y) := \E_{y \sim Y}\big[H(X | Y = y)\big]$. The mutual information between $X$ and $Y$, conditioned on $Z$, is $I(X; Y | Z) := H(X | Z) - H(X | Y, Z)$; when we need to specify the distribution $\mu$ of $(X, Y)$, we write $I_\mu(X; Y)$. The KL divergence and the total variation distance between two distributions $P$ and $Q$ are
\[ D(P \,\|\, Q) := \sum_x P(x)\log\frac{P(x)}{Q(x)} \qquad\text{and}\qquad \norm{P - Q}_{TV} := \frac12\|P - Q\|_1. \]
\end{definition}

\noindent For a random variable $X$ and an event $E$, we write $X|_E$ for the distribution of $X$ conditioned on $E$, for instance $X|_{Y = y}$. A random variable in place of a distribution, as in $D(X \,\|\, Y)$, stands for its distribution.

\begin{fact}[Basic properties]\label{fact:basic}
For any random variables $A$, $B$, $C$ and $D$:
\begin{enumerate}[label=(\alph*)]
  \item If $A$ takes at most $2^\lambda$ values, then $H(A) \le \lambda$. Moreover, $0 \le H(A | B) \le H(A)$.
  \item $I(A; B | C) = I(B; A | C)$ and $0 \le I(A; B | C) \le H(A | C)$. If $A$ and $B$ are independent given $C$, for instance if $A$ is a function of $C$, then $I(A; B | C) = 0$.
  \item $I(A; B | C) = \E_{c \sim C}\big[I(A; B | C = c)\big]$.
  \item $D(P \,\|\, Q) \ge 0$ for any two distributions $P$ and $Q$.
  \item If $B'$ is a function of $B$, then $I(A; B' | C) \le I(A; B | C)$.
  \item If $D$ is independent of $(A, B, C)$, then $I(A; B | C, D) = I(A; B | C)$.
\end{enumerate}
\end{fact}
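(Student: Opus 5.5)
These are standard facts, so I will verify each item directly from the definitions; no earlier result of the paper is needed. The plan is to handle the entropy bounds first, then the mutual-information identities, and to derive divergence nonnegativity from concavity of the logarithm. Item (d) comes first because it feeds (a) and (b).

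For (d), Jensen's inequality applied to the concave function $\log$ gives
\[
-D(P\,\|\,Q)=\sum_{x:P(x)>0}P(x)\log\frac{Q(x)}{P(x)}\le\log\sum_{x:P(x)>0}Q(x)\le 0 .
\]
For (a), write $\lambda - H(A) = D(P_A\,\|\,U)$, where $U$ is uniform on a set of size $2^\lambda$ containing the support of $A$. This is nonnegative by (d). Next, $H(A\mid B)\ge0$ because each term $\log(1/\Pr[\cdot])$ is nonnegative. Finally, $H(A)-H(A\mid B)=I(A;B)=D(P_{AB}\,\|\,P_A\otimes P_B)\ge 0$, again by (d).

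For (b), expand $I(A;B\mid C)=H(A\mid C)+H(B\mid C)-H(A,B\mid C)$ via the chain rule $H(A,B\mid C)=H(B\mid C)+H(A\mid B,C)$. The resulting expression is symmetric in $A$ and $B$. Applying (a) pointwise for each value of $C$ gives $0\le I(A;B\mid C)\le H(A\mid C)$. Conditional independence means $P_{A\mid B,C}=P_{A\mid C}$, so $H(A\mid B,C)=H(A\mid C)$ and the information vanishes. If $A$ is a function of $C$, then $H(A\mid C)=0$, and the upper bound already forces the information to be $0$.

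For (c), both conditional entropies are averages over $c\sim C$ by definition, so their difference is the average of $I(A;B\mid C=c)$.

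For (e), use the chain rule for mutual information:
\[
I(A;B\mid C)=I(A;B',B\mid C)=I(A;B'\mid C)+I(A;B\mid B',C).
\]
The first equality holds because $B'$ is a function of $B$. The last term is nonnegative by (b), which gives the inequality.

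For (f), independence of $D$ from $(A,B,C)$ gives $P_{A\mid C,D}=P_{A\mid C}$ and $P_{A\mid B,C,D}=P_{A\mid B,C}$. Hence both conditional entropies in the definition are unchanged by adding $D$ to the conditioning.

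None of these steps is a real obstacle. The only point needing care is support issues in (d): terms with $P(x)=0$ are dropped, and if $Q(x)=0<P(x)$ then $D=\infty$, which is trivially nonnegative.
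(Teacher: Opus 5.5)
Your item-by-item verification is correct and is the standard textbook argument; the paper gives no proof of this fact and simply refers to Cover and Thomas. One cosmetic fix in (a): $2^\lambda$ need not be an integer, so compare instead with the uniform distribution on the support of $A$, of size $k\le 2^\lambda$, which gives $H(A)\le\log k\le\lambda$.
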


\begin{fact}[Chain rules]\label{fact:chain}
\begin{enumerate}[label=(\alph*)]
  \item For any random variables $A_1, \dots, A_k$, $B$ and $C$,
  \[ I(A_1 \cdots A_k; B | C) \;=\; \sum_{j=1}^{k} I\big(A_j; B \bigm| C A_1 \cdots A_{j-1}\big). \]
  \item For any two pairs of random variables $(X, Y)$ and $(X', Y')$,
  \[ D\big(XY \,\|\, X'Y'\big) \;=\; D\big(X \,\|\, X'\big) + \E_{x \sim X}\Big[D\big(Y|_{X = x} \,\big\|\, Y'|_{X' = x}\big)\Big]. \]
\end{enumerate}
\end{fact}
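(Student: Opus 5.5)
Both identities in \cref{fact:chain} are telescoping statements that follow from the definitions, so I would prove them by direct expansion and not by appealing to any deeper result.

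For part (a), I would start from the definition $I(A_1\cdots A_k;B\mid C)=H(A_1\cdots A_k\mid C)-H(A_1\cdots A_k\mid B,C)$. The first step is the chain rule for entropy,
\[
H(A_1\cdots A_k\mid Z)=\sum_{j=1}^k H(A_j\mid Z\,A_1\cdots A_{j-1}).
\]
I would prove this by induction on $k$ from the two-variable case $H(A_1A_2\mid Z)=H(A_1\mid Z)+H(A_2\mid Z A_1)$. The two-variable case holds because, pointwise in the outcomes, $\log\frac{1}{\Pr[a_1a_2\mid z]}=\log\frac{1}{\Pr[a_1\mid z]}+\log\frac{1}{\Pr[a_2\mid z,a_1]}$. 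Taking expectations of this identity, and using $H(X\mid Y)=\E_y H(X\mid Y=y)$, gives the claim. I would apply the entropy chain rule once with $Z=C$ and once with $Z=(B,C)$, then subtract term by term. The $j$th difference is
\[
H(A_j\mid C A_{<j})-H(A_j\mid B\,C A_{<j})=I(A_j;B\mid C A_{<j}),
\]
which is exactly the $j$th summand.

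For part (b), I would write the divergence as an expectation over $(x,y)\sim XY$ of $\log\frac{P_{XY}(x,y)}{P_{X'Y'}(x,y)}$. I would factor both joint probabilities as $P_X(x)P_{Y\mid X=x}(y)$ and $P_{X'}(x)P_{Y'\mid X'=x}(y)$, so the logarithm splits as a sum. The first summand depends only on $x$, and its expectation is $D(X\|X')$. For the second summand, I would condition on $X=x$ and average over $y\sim Y|_{X=x}$, which yields $D(Y|_{X=x}\,\|\,Y'|_{X'=x})$.

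The only delicate point, and the one I expect to need care, is support conventions. If $P_{X'}(x)=0<P_X(x)$, both sides are $+\infty$ under the standard convention. Otherwise, every $x$ in the support of $X$ has $P_{X'}(x)>0$, so the conditional $Y'|_{X'=x}$ is well defined. In that case, points with $P_{XY}=0$ contribute $0$ to both sides, and infinite terms match term by term. With these conventions fixed, the splitting of the logarithm and linearity of expectation complete the proof of part (b).
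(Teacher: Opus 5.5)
Your proposal is correct: part (a) follows from the entropy chain rule applied once conditioned on $C$ and once on $(B,C)$ and subtracted term by term, and part (b) from factoring both joint laws and splitting the logarithm, with the support cases handled as you describe. The paper gives no proof of its own and defers to the standard treatment in \cite{cover_elements_2006}, which proves both identities by exactly this expansion; since every variable in the paper is finitely supported, the entropies subtracted in (a) are always finite.
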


\begin{fact}\label{fact:drop-cond}
For any random variables $A$, $B$, $C$ and $D$, if $I(B; D | A C) = 0$, then $I(A; B | C D) \le I(A; B | C)$.
\end{fact}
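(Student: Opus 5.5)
The plan is to expand the single quantity $I(B; AD \mid C)$ in two ways using the chain rule (\cref{fact:chain}(a)) and compare the results. Nonnegativity and symmetry of conditional mutual information (\cref{fact:basic}(b)) then finish the argument.

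First, apply the chain rule with $A$ revealed first and $D$ second:
\[
I(B; AD \mid C) \;=\; I(B; A \mid C) + I(B; D \mid A C) \;=\; I(B; A \mid C).
\]
The second equality uses the hypothesis $I(B; D \mid AC) = 0$.

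Second, apply the chain rule in the opposite order, with $D$ first and $A$ second:
\[
I(B; AD \mid C) \;=\; I(B; D \mid C) + I(B; A \mid C D) \;\ge\; I(B; A \mid CD).
\]
The inequality holds because $I(B; D \mid C) \ge 0$.

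Combining the two displays gives $I(B; A \mid CD) \le I(B; A \mid C)$. Symmetry of mutual information turns this into the claimed $I(A; B \mid CD) \le I(A; B \mid C)$.

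There is no real obstacle here. The only point needing care is that the chain rule is applied to the pair $(A, D)$ jointly against $B$, with $C$ held as common conditioning, so that both orderings are valid instances of \cref{fact:chain}(a) with $k = 2$.
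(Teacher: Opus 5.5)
Your proposal is correct and is essentially the paper's argument: the paper writes it in one line as $I(A;B\mid CD)\le I(AD;B\mid C)=I(A;B\mid C)+I(B;D\mid AC)=I(A;B\mid C)$. Its first inequality is exactly your second expansion (chain rule with $D$ first, then nonnegativity of $I(B;D\mid C)$), and its equalities are your first expansion together with the hypothesis.
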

\begin{proof}
By the chain rule (\cref{fact:chain}(a)) and \cref{fact:basic}(b),
\[ I(A; B | C D) \;\le\; I(A D; B | C) \;=\; I(A; B | C) + I(B; D | A C) \;=\; I(A; B | C). \qedhere \]
\end{proof}

\begin{fact}[KL divergence and mutual information]\label{fact:kl-mi}
For any random variables $A$ and $B$, $I(A; B) = \E_{b \sim B}\big[D\big(A|_{B=b} \,\|\, A\big)\big]$.
\end{fact}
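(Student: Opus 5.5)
This is the standard identity $I(A;B)=D(P_{AB}\,\|\,P_A\otimes P_B)$, rewritten by conditioning on $B$. I would prove it directly from the definitions in \cref{sec:prelim-it}, by expanding both sides as finite sums over the joint support and comparing term by term. No earlier fact is needed beyond the definitions.

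First, write $I(A;B)=H(A)-H(A\mid B)$ and expand each entropy. The term $H(A)$ equals $\sum_{a}\Pr[A=a]\log\frac{1}{\Pr[A=a]}$. By marginalization, this is the same as $\sum_{a,b}\Pr[A=a,B=b]\log\frac{1}{\Pr[A=a]}$. The term $H(A\mid B)$ equals $\sum_b \Pr[B=b]\sum_a \Pr[A=a\mid B=b]\log\frac{1}{\Pr[A=a\mid B=b]}$.

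Subtracting, and using $\Pr[B=b]\Pr[A=a\mid B=b]=\Pr[A=a,B=b]$, gives
\[
I(A;B)=\sum_{b}\Pr[B=b]\sum_a \Pr[A=a\mid B=b]\log\frac{\Pr[A=a\mid B=b]}{\Pr[A=a]}.
\]
The inner sum is exactly $D\big(A|_{B=b}\,\|\,A\big)$, so the whole expression is the claimed expectation over $b\sim B$.

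The only point needing care is the support. I restrict all sums to pairs $(a,b)$ with $\Pr[A=a,B=b]>0$ and $b$ with $\Pr[B=b]>0$, using the convention $0\log 0=0$. On that support $\Pr[A=a]\ge\Pr[A=a,B=b]>0$, so every logarithm is finite and the divergence is well defined (absolute continuity holds). There is no real obstacle here, only this bookkeeping.
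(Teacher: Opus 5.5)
Your proof is correct. The paper gives no proof of this fact and treats it as standard background (citing Cover--Thomas); your term-by-term expansion of $H(A)-H(A\mid B)$ over the joint support is the textbook derivation, and the one implicit assumption, that $H(A)$ is finite so the subtraction is legitimate, holds because every random variable in the paper has finite support.
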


\begin{fact}[Pinsker's inequality]\label{fact:pinsker}
For any two distributions $P$ and $Q$,
\[ \norm{P - Q}_{TV} \;\le\; \sqrt{(\ln 2/2)\, D(P \,\|\, Q)}. \]
\end{fact}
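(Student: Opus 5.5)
We prove the inequality by reducing to two-point distributions and then doing a one-variable calculus check. Throughout, $D$ is measured in bits, so $\ln 2\cdot D(P\,\|\,Q)$ is the divergence in nats, and the claim becomes the classical statement $\norm{P-Q}_{TV}\le\sqrt{D_{\mathrm{nats}}(P\,\|\,Q)/2}$. If some $x$ has $P(x)>0=Q(x)$, then $D(P\,\|\,Q)=\infty$ and there is nothing to prove, so assume $P\ll Q$.

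\emph{Step 1 (reduction to a binary event).} Let $A:=\{x: P(x)>Q(x)\}$, and set $p:=P(A)$ and $q:=Q(A)$, so $q\le p$. Since $\sum_x (P(x)-Q(x))=0$, the positive and negative parts of $P-Q$ have equal mass, and hence
\[
\norm{P-Q}_{TV}=\tfrac12\|P-Q\|_1=p-q .
\]
Next, apply the log-sum inequality separately to the terms indexed by $A$ and by its complement. This gives
\[
D(P\,\|\,Q)\;\ge\; d(p\,\|\,q):=p\log\tfrac{p}{q}+(1-p)\log\tfrac{1-p}{1-q},
\]
which is the data-processing inequality for the coarsening $x\mapsto \mathbf 1[x\in A]$. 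It therefore suffices to show $(\ln 2)\,d(p\,\|\,q)\ge 2(p-q)^2$ for all $0\le q\le p\le 1$.

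\emph{Step 2 (binary Pinsker).} Fix $p$ and, for $q\in(0,p]$, define
\[
f(q):=p\ln\tfrac{p}{q}+(1-p)\ln\tfrac{1-p}{1-q}-2(p-q)^2 .
\]
Then $f(p)=0$, and differentiating gives
\[
f'(q)=-\tfrac pq+\tfrac{1-p}{1-q}+4(p-q)=(p-q)\Big(4-\tfrac{1}{q(1-q)}\Big).
\]
Since $q(1-q)\le 1/4$, we have $f'(q)\le 0$ on $(0,p]$. So $f$ is nonincreasing there, and $f(q)\ge f(p)=0$. The endpoint cases $q=0$ (where $d=\infty$ unless $p=0$) and $p\in\{0,1\}$ follow by continuity, or directly from the conventions $0\log 0=0$. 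Combining this with Step 1 gives $\norm{P-Q}_{TV}^2=(p-q)^2\le(\ln2/2)\,D(P\,\|\,Q)$, and taking square roots completes the argument.

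The only step needing any care is Step 1, the log-sum (data-processing) inequality that lets us pass to the two-point distributions. This follows from convexity of $t\mapsto t\log t$ (Jensen's inequality applied within each of the two groups). The calculus in Step 2 is routine once the derivative is factored as above.
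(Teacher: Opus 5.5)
Your proof is correct. The paper gives no proof of its own and cites this as standard background from \cite{cover_elements_2006}. Your argument is exactly the textbook proof given there: coarsen to the indicator of $A=\{x:P(x)>Q(x)\}$ via the log-sum inequality, then check the binary case through $f'(q)=(p-q)\bigl(4-\tfrac{1}{q(1-q)}\bigr)\le 0$. You also convert correctly from bits to nats, $(\ln 2)\,D(P\,\|\,Q)$, which is where the factor $\ln 2/2$ comes from.
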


\subsection{Min-entropy}\label{sec:prelim-minent}

The min-entropy of a random variable $A$ is $\Hmin(A) := -\log \max_a \Pr[A = a]$, and for a distribution $\nu$ we write $\Hmin(\nu)$ likewise. We use the average min-entropy of \cite{dodis_fuzzy_2008},
\[ \tHmin(A | B) := -\log \E_{b \sim B}\Big[\max_a \Pr[A = a | B = b]\Big]. \]

\begin{fact}[\cite{dodis_fuzzy_2008}]\label{fact:avg-minent}
Let $A$, $B$, $C$ be random variables.
\begin{enumerate}[label=(\alph*)]
  \item If $B$ has at most $2^\lambda$ possible values, then $\tHmin(A | B, C) \ge \tHmin(A | C) - \lambda$.
  \item For every $r > 0$, $\Pr_{b \sim B}\big[\Hmin(A | B = b) \ge \tHmin(A | B) - r\big] \ge 1 - 2^{-r}$.
  \item If $B'$ is a function of $B$, then $\tHmin(A | B') \ge \tHmin(A | B)$.
\end{enumerate}
\end{fact}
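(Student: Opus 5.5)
The plan is to prove all three parts directly from the definition. The key rewriting is
\[
2^{-\tHmin(A|B)} \;=\; \E_{b\sim B}\Big[\max_a \Pr[A=a\mid B=b]\Big] \;=\; \sum_b \max_a \Pr[A=a,\,B=b],
\]
which holds because $\Pr[B=b]\cdot\max_a\Pr[A=a\mid B=b] = \max_a \Pr[A=a,B=b]$. The same identity holds with extra conditioning variables. Once the quantity is written as a sum of maxima of joint probabilities, each part reduces to a one-line inequality. There is no genuine obstacle here; the only care needed is with conventions, namely treating terms with $\Pr[B=b]=0$ as absent and working with finite supports.

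For (a), I will write
\[
2^{-\tHmin(A|B,C)} = \sum_{b,c}\max_a \Pr[A=a,B=b,C=c].
\]
For each fixed $b$, I bound the summand by $\max_a \Pr[A=a,C=c]$, since dropping the event $B=b$ only enlarges the probability. Summing over the at most $2^\lambda$ values of $b$ then gives
\[
2^{-\tHmin(A|B,C)} \;\le\; 2^\lambda \sum_c \max_a \Pr[A=a,C=c] \;=\; 2^{\lambda}\,2^{-\tHmin(A|C)}.
\]
Taking $-\log$ of both sides yields the claim.

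For (b), I will set $Z_b := \max_a\Pr[A=a\mid B=b]$, a nonnegative random variable (as a function of $b\sim B$) with $\E[Z_B]=2^{-\tHmin(A|B)}$. The event $\Hmin(A\mid B=b) < \tHmin(A|B)-r$ is exactly the event $Z_b > 2^{r}\E[Z_B]$. By Markov's inequality this event has probability at most $2^{-r}$, and the complement gives the bound.

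For (c), I will write $B'=f(B)$. For each value $b'$, the event $B'=b'$ is the disjoint union over $b\in f^{-1}(b')$ of the events $B=b$. Therefore
\[
\max_a \Pr[A=a,B'=b'] \;=\; \max_a \sum_{b\in f^{-1}(b')}\Pr[A=a,B=b] \;\le\; \sum_{b\in f^{-1}(b')}\max_a \Pr[A=a,B=b].
\]
Summing over $b'$ gives $2^{-\tHmin(A|B')}\le 2^{-\tHmin(A|B)}$, which is the claim after taking $-\log$.
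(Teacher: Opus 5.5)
Your proof is correct. The paper gives no proof of its own and only cites Dodis et al.~\cite{dodis_fuzzy_2008}; your arguments are the standard ones from that source: Markov's inequality on $2^{-\Hmin(A|B=b)}$ for (b), and the rewriting of $2^{-\tHmin}$ as a sum of maxima of joint probabilities for (a) and (c). The one small difference is that for (a) you bound directly by $\tHmin(A|C)$, whereas that source passes through the intermediate quantity $\tHmin((A,B)\mid C)$.
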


\begin{fact}\label{fact:event}
Let $E \in \{0,1\}$ be a random variable with $\Pr[E = 1] > 0$. Then $\tHmin(X | Y, E {=} 1) \ge \tHmin(X | Y) - \log(1/\Pr[E = 1])$, where the left side is with respect to the distribution conditioned on $E = 1$.
\end{fact}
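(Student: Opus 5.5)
We need to prove \cref{fact:event}: conditioning on an event $E=1$ lowers the average min-entropy of $X$ given $Y$ by at most $\log(1/\Pr[E=1])$. The plan is to bound the guessing probability under the conditioned distribution directly, by comparing pointwise conditional probabilities with the unconditioned ones and summing over $y$.

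Write $\alpha := \Pr[E=1]$. Unfolding the definition, the left side equals $-\log$ of
\[
\sum_{y} \Pr[Y=y \mid E=1]\,\max_x \Pr[X=x \mid Y=y, E=1].
\]
For each $y$ with $\Pr[Y=y,E=1]>0$, the summand is
\[
\Pr[Y=y\mid E=1]\,\max_x \Pr[X=x\mid Y=y,E=1] \;=\; \max_x \frac{\Pr[X=x, Y=y, E=1]}{\alpha}.
\]
We then bound this summand in three moves:
\[
\max_x \frac{\Pr[X=x, Y=y, E=1]}{\alpha} \;\le\; \frac{1}{\alpha}\max_x \Pr[X=x, Y=y] \;=\; \frac{1}{\alpha}\Pr[Y=y]\max_x \Pr[X=x\mid Y=y].
\]
The inequality drops the event $E=1$ from the joint probability, which can only increase it.

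Summing over $y$ gives
\[
\sum_y \Pr[Y=y\mid E=1]\max_x\Pr[X=x\mid Y=y,E=1] \;\le\; \frac{1}{\alpha}\, 2^{-\tHmin(X\mid Y)}.
\]
Taking $-\log$ of both sides yields exactly the claimed bound $\tHmin(X\mid Y,E{=}1)\ge \tHmin(X\mid Y)-\log(1/\alpha)$.

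There is no real obstacle. The only care needed is bookkeeping: values $y$ of probability zero under the conditioning contribute nothing and can be dropped from the sum. The proof does not need $E$ to be a function of $(X,Y)$, since $E$ is only ever removed from a joint probability. As a sanity check, the statement also follows from \cref{fact:avg-minent}(a) applied with $B=E$ and $\lambda=1$ only up to a weaker additive constant, so the direct computation above is the route to take.
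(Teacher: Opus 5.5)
Your proof is correct and is essentially the paper's argument. The paper bounds each summand using $\Pr[X=x\mid Y=y,E=1]\le\Pr[X=x\mid Y=y]/\Pr[E=1\mid Y=y]$ together with Bayes' rule $\Pr[Y=y\mid E=1]/\Pr[E=1\mid Y=y]=\Pr[Y=y]/\Pr[E=1]$, which is your single step $\Pr[X=x,Y=y,E=1]\le\Pr[X=x,Y=y]$ written in conditional form, and then sums over $y$ exactly as you do.
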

\begin{proof}
For every $y$ with $\Pr[E = 1 | Y = y] > 0$, we have $\Pr[X = x | Y = y, E = 1] \le \Pr[X = x | Y = y]/\Pr[E = 1 | Y = y]$ and, by Bayes' rule, $\Pr[Y = y | E = 1]/\Pr[E = 1 | Y = y] = \Pr[Y = y]/\Pr[E = 1]$. Summing over such $y$, and then over all $y$,
\[ \E_{y \sim Y | E = 1} \max_x \Pr[X = x | Y = y, E = 1] \;\le\; \sum_{y} \frac{\Pr[Y = y] \max_x \Pr[X = x | Y = y]}{\Pr[E = 1]} \;=\; \frac{2^{-\tHmin(X | Y)}}{\Pr[E = 1]}. \qedhere \]
\end{proof}

\begin{fact}[Conditioning on one side]\label{fact:one-side}
Let $A, B$ be random variables, and let $E \in \{0,1\}$ be a function of $A$ with $\pi := \Pr[E = 1] > 0$. Then
\[ I(A; B | E {=} 1) \le \frac{I(A; B)}{\pi} \qquad\text{and}\qquad \Hmin(B | E {=} 1) \ge \Hmin(B) - \log\frac1\pi, \]
where the left sides are with respect to the distribution conditioned on $E = 1$.
\end{fact}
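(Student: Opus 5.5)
The statement to prove is \cref{fact:one-side}. There are two claims, a mutual-information bound and a min-entropy bound, and I will prove them separately. Both rest on one observation. Since $E$ is a function of $A$, the event $\{E=1\}$ is a union of level sets $\{A=a\}$. Hence for every $a$ with $E(a)=1$, the conditional law of $B$ given $A=a$ is the same whether or not we also condition on $E=1$:
\[ B|_{A=a,E=1}=B|_{A=a}. \]

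For the information bound, I will use \cref{fact:kl-mi} in the form $I(A;B)=\E_{a\sim A}[D(B|_{A=a}\,\|\,B)]$. I will work with the variational characterization that mutual information minimizes divergence to a reference law:
\[ I(A;B|E{=}1)=\min_{R}\E_{a\sim A|_{E=1}}\bigl[D(B|_{A=a}\,\|\,R)\bigr], \]
where the expectation uses the observation above. This follows from the identity $\E_a D(P_a\|R)=\E_a D(P_a\|\bar P)+D(\bar P\|R)$ together with \cref{fact:basic}(d). Taking $R$ to be the unconditioned law of $B$ gives
\[ I(A;B|E{=}1)\le \E_{a\sim A|_{E=1}}\bigl[D(B|_{A=a}\,\|\,B)\bigr]=\frac{1}{\pi}\sum_{a:E(a)=1}\Pr[A=a]\,D(B|_{A=a}\,\|\,B). \]
Each divergence term is nonnegative, so I may extend the sum over all $a$. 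That yields $I(A;B)/\pi$, as claimed. The only step I need to verify carefully is the variational identity: expand $\log(P_a/R)=\log(P_a/\bar P)+\log(\bar P/R)$ and average over $a$.

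For the min-entropy bound, I will argue pointwise. For every $b$,
\[ \Pr[B=b\mid E=1]=\frac{\Pr[B=b,E=1]}{\pi}\le\frac{\Pr[B=b]}{\pi}\le\frac{2^{-\Hmin(B)}}{\pi}. \]
Taking the maximum over $b$ and then $-\log$ gives $\Hmin(B|E{=}1)\ge\Hmin(B)-\log(1/\pi)$. This is the special case of \cref{fact:event} with trivial $Y$. The min-entropy part does not need $E$ to be a function of $A$; that hypothesis is used only in the information bound, which is the main point requiring care.
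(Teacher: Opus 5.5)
Your proof is correct. The min-entropy half is the paper's argument (pointwise $\Pr[B=b\mid E=1]\le\Pr[B=b]/\pi$), and you are right that it does not need $E$ to be a function of $A$.

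For the information bound you take a different route. The paper stays with mutual information: it uses $I(B;E\mid A)=0$ and \cref{fact:drop-cond} to get $I(A;B)\ge I(A;B\mid E)=\pi\,I(A;B\mid E{=}1)+(1-\pi)\,I(A;B\mid E{=}0)$, and then drops the last term. You instead write both sides as averaged divergences. You use the compensation identity $\E_a D(P_a\,\|\,R)=\E_a D(P_a\,\|\,\bar P)+D(\bar P\,\|\,R)$ to replace the conditional marginal $B|_{E=1}$ by the unconditioned law of $B$, and then discard the nonnegative terms with $E(a)=0$.

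The two proofs lose exactly the same amount. In both, $I(A;B)-\pi\,I(A;B\mid E{=}1)=I(E;B)+(1-\pi)\,I(A;B\mid E{=}0)$. After multiplying through by $\pi$, your two discarded pieces add up to this: the term $\pi\,D(B|_{E=1}\,\|\,B)$ from choosing $R$ to be the law of $B$, and the terms with $E(a)=0$. So your computation is the chain rule $I(A;B)=I(E;B)+I(A;B\mid E)$ written out in divergence form.

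The paper's version is two lines given \cref{fact:drop-cond}. Yours is self-contained from \cref{fact:kl-mi} and nonnegativity of divergence, and it makes the slack explicit. The price is checking the compensation identity and that every term is finite, which holds because $B|_{A=a}$ is absolutely continuous with respect to $B$ whenever $\Pr[A=a]>0$.
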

\begin{proof}
Since $E$ is a function of $A$, $I(B; E | A) = 0$, so \cref{fact:drop-cond} and \cref{fact:basic}(b) and (c) give
\[ I(A; B) \;\ge\; I(A; B | E) \;\ge\; \pi \cdot I(A; B | E {=} 1). \]
For the second bound, $\Pr[B = b | E = 1] \le \Pr[B = b]/\pi$ for every $b$.
\end{proof}

\noindent The next two facts relate min-entropy to flat distributions and Shannon entropy. For a finite set $\cS$, $\cU_{\cS}$ denotes the uniform distribution on $\cS$; thus $D(\mu \,\|\, \cU_{\cS}) = \log\abs{\cS} - H(\mu)$.

\begin{fact}[Flat decomposition; folklore, see e.g.\ {\cite[Lemma~6.10]{vadhan_pseudorandomness_2012}}]\label{fact:flat}
If $\Hmin(\mu) \ge \log K$ for an integer $K \ge 1$, then $\mu$ is a convex combination of uniform distributions $\cU_{T}$ on $K$-element sets $T$.
\end{fact}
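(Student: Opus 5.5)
The statement to prove is \cref{fact:flat}: if $\Hmin(\mu)\ge\log K$, then $\mu$ is a convex combination of flat distributions $\cU_T$ with $\abs{T}=K$. The hypothesis says exactly that $\mu(x)\le 1/K$ for every $x$. So I will work with the polytope
\[
P_K=\{\nu\ge 0:\ \textstyle\sum_x\nu(x)=1,\ \nu(x)\le 1/K\ \forall x\}
\]
over the (finite) support of $\mu$, and show that its extreme points are precisely the distributions $\cU_T$ with $\abs{T}=K$. The claim then follows from Minkowski/Carathéodory, since a bounded polytope is the convex hull of its vertices. If the support is countably infinite, I would first reduce to the finite case; alternatively, the greedy argument below also handles that case via a limiting argument.

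For the vertex characterization, take $\nu\in P_K$ and suppose two coordinates $x\neq y$ satisfy $0<\nu(x),\nu(y)<1/K$. Then $\nu\pm\delta(e_x-e_y)$ stays in $P_K$ for small $\delta>0$, so $\nu$ is not extreme. Hence at an extreme point at most one coordinate is fractional, and every other coordinate is $0$ or $1/K$. Because the total mass is $1$ and $1$ is a multiple of $1/K$, the single fractional coordinate would have to be a multiple of $1/K$ as well, which contradicts it lying strictly between $0$ and $1/K$. So every coordinate lies in $\{0,1/K\}$ and exactly $K$ of them equal $1/K$; that is, $\nu=\cU_T$ with $\abs{T}=K$.

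I would also keep a constructive alternative in mind, which gives an explicit decomposition and extends cleanly to infinite support. Order the support so that $\mu$ is nonincreasing, and lay the masses out consecutively on $[0,1)$. Each point $x$ then occupies an interval $I_x$ of length $\mu(x)\le 1/K$. For $\theta\in[0,1/K)$, let $T_\theta$ be the set of points whose intervals contain one of $\theta,\theta+1/K,\dots,\theta+(K-1)/K$. Since every interval has length at most $1/K$, it contains at most one of these grid points, so $\abs{T_\theta}=K$. Moreover, $\Pr_\theta[x\in T_\theta]=K\mu(x)$ when $\theta$ is uniform on $[0,1/K)$. Consequently $\E_\theta\,\cU_{T_\theta}=\mu$.

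The only delicate point is this grid-counting step: each interval must meet exactly one or zero grid points, which needs half-open intervals handled carefully. I expect it to be routine rather than a real obstacle.
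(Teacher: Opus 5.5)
Both of your arguments are correct. The paper gives no proof of its own and defers to \cite[Lemma~6.10]{vadhan_pseudorandomness_2012}. The argument there is the random-rotation one, that is, your constructive alternative. Lay the masses out as half-open intervals partitioning $[0,1)$; any order works, so sorting $\mu$ is unnecessary. Rotate $K$ evenly spaced points uniformly at random. An interval of length at most $1/K$ catches at most one of them, while each point lies in exactly one interval, so $\abs{T_\theta}=K$ and $\Pr_\theta[x\in T_\theta]=K\mu(x)$. The half-open convention you flag is exactly what makes this work, so there is no real obstacle. Your primary route is genuinely different: you characterize the extreme points of the capped simplex $P_K$ as exactly the flat distributions and then invoke Minkowski's theorem. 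The two-coordinate perturbation, together with the observation that a lone fractional coordinate would be a multiple of $1/K$, is a complete argument for that characterization. This route is non-constructive and uses polyhedral theory in finitely many coordinates, but it explains structurally why flat distributions are the right building blocks. The rotation argument is elementary and produces the decomposition explicitly, with weights proportional to the lengths of the $\theta$-intervals on which $T_\theta$ is constant. Both yield at most $\abs{\mathrm{supp}(\mu)}$ pieces, via Carath\'eodory in one case and by counting breakpoints in the other. Your aside on infinite support is unnecessary here, since the paper applies the fact only on finite families (here $\cS=\binom{[N]}{s}$). Moreover, a reduction to the finite case could not work literally, because a finite convex combination of flat distributions has finite support. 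For infinite support only an infinite mixture is possible, such as the one the rotation argument gives.
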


\begin{fact}[Smoothing \cite{jain_property_2009}; see {\cite[Claim~2.16]{ko_omega_2026}}]\label{fact:smooth}
If $D(\mu \,\|\, \cU_{\cS}) \le t$ with $t > 1$, then for every $\alpha \ge 4$ there exists an event $E$ such that $\mu(E) \ge 1 - 1/\alpha$ and $\Hmin(\mu|_E) \ge \log\abs{\cS} - 2\alpha t$.
\end{fact}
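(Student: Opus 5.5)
The plan is a Markov-inequality argument on the pointwise log-density ratio, keeping track of constants so that the loss $2\alpha t$ absorbs every additive slack. Write $N := \abs{\cS}$. For $x$ in the support of $\mu$, set $f(x) := \log(\mu(x) N)$. Then $D(\mu \,\|\, \cU_{\cS}) = \E_{x\sim\mu}[f(x)] \le t$. I would take the event to be a sublevel set,
\[ E := \{x : f(x) \le c\}, \]
with a threshold $c$ chosen below. On $E$ every point satisfies $\mu(x) \le 2^{c}/N$. Hence
\[ \mu|_E(x) \le \frac{2^{c}}{N\,\mu(E)} \qquad\text{and}\qquad \Hmin(\mu|_E) \ge \log N - c - \log\frac{1}{\mu(E)}. \]

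\textbf{Step 1: bound the positive part of $f$.} Since $f$ can be negative, Markov's inequality cannot be applied to it directly. I would first bound $\E_\mu[f^-]$, where $f^- := \max(-f,0)$. Writing $u = \mu(x)$, we have
\[ \E_\mu[f^-] = \sum_{x:\, \mu(x) < 1/N} \mu(x)\log\frac{1}{\mu(x)N}. \]
The function $u\mapsto u\log(1/(uN))$ has maximum value $(\log e)/(eN)$. There are at most $N$ terms, so
\[ \E_\mu[f^-] \le (\log e)/e < 0.531. \]
Consequently
\[ \E_\mu[f^+] = \E_\mu[f] + \E_\mu[f^-] \le t + 0.531. \]

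\textbf{Step 2: choose the threshold.} Take $c := \alpha(t+0.531)$. Markov's inequality applied to $f^+$ gives
\[ \mu(\{f > c\}) \le \frac{\E_\mu[f^+]}{c} \le \frac{1}{\alpha}, \]
so $\mu(E) \ge 1 - 1/\alpha$. This also gives $\log(1/\mu(E)) \le \log(\alpha/(\alpha-1)) \le \log(4/3) < 0.42$, using $\alpha \ge 4$.

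\textbf{Step 3: absorb the constants.} This is the only delicate point: the naive choice $c = 2\alpha t$ with the crude bound $\E_\mu[f^+] \le 2t$ leaves an uncontrolled additive $\log(\alpha/(\alpha-1))$. With the choice above, the total loss is
\[ \alpha t + 0.531\alpha + 0.42. \]
This is at most $2\alpha t$ exactly when $\alpha(t - 0.531) \ge 0.42$. That holds because $t > 1$ and $\alpha \ge 4$ give $\alpha(t-0.531) \ge 4\cdot 0.469 > 1.8$. Therefore
\[ \Hmin(\mu|_E) \ge \log N - 2\alpha t, \]
as required. The hypothesis $t > 1$ is used only here, to dominate the constant slack from Step 1 and from conditioning on $E$.
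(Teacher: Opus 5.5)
Your proof is correct and is the standard Jain--Radhakrishnan--Sen argument: bound the negative part of $\log(\mu(x)\abs{\cS})$ by $\log e/e$, apply Markov's inequality to the positive part, and pay $\log(\alpha/(\alpha-1))$ for conditioning. The paper does not reprove the fact and cites \cite{jain_property_2009} and \cite[Claim~2.16]{ko_omega_2026} instead. Your sharp constant $\log e/e<0.531$ matters: with the cruder bound $1$ on the negative part, the requirement would be $\alpha(t-1)\ge\log(4/3)$, which fails for $t$ slightly above $1$, so the hypothesis $t>1$ alone would not absorb the conditioning loss.
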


\subsection{Rectangles}\label{sec:prelim-rect}

We express the size of a combinatorial rectangle $\mathcal{A}\times\mathcal{B}$ through min-entropy. This lets us treat product distributions and distributions that are close to products in a common notation.

\begin{definition}[Deficit]\label{def:deficit}
Let $\cS$ be a finite family of sets. The \emph{deficit} of a distribution $\nu$ on $\cS$ is $\log\abs{\cS} - \Hmin(\nu)$.
\end{definition}

\noindent The deficit is zero exactly for the uniform distribution on $\cS$, and the uniform distribution on a subfamily $\mathcal{A} \subseteq \cS$ has deficit $\log(\abs{\cS}/\abs{\mathcal{A}})$. A deficit bound of $d$ therefore limits the mass of every point to $2^d/\abs{\cS}$. It measures how much the distribution can concentrate relative to uniform. Below, $(A,B) \sim \mu$ for a distribution $\mu$ on $\cS \times \cS$ with marginals $\mu^A, \mu^B$.

\begin{definition}[Rectangles, near-rectangles, monochromatic]\label{def:rect}
Let $d, \kappa, \delta \ge 0$.
\begin{enumerate}[label=(\alph*)]
  \item $\mu$ is a \emph{$d$-rectangle} if $\mu = \mu^A \otimes \mu^B$ and both marginals have deficit at most $d$. That is,
  \begin{align*}
      \Hmin(\mu^A) \;\ge\; \log\abs{\cS} - d \qquad\text{and}\qquad \Hmin(\mu^B) \;\ge\; \log\abs{\cS} - d.
  \end{align*}
  \item $\mu$ is a \emph{$(d,\kappa)$-near-rectangle} if both marginals have deficit at most $d$ and $I_\mu(A;B) \le \kappa$.
  \item $\mu$ is \emph{$\delta$-monochromatic} if $\Pr_\mu[A \cap B \ne \emptyset] \le \delta$, and \emph{monochromatic} if it is $0$-monochromatic, i.e.\ supported on disjoint pairs.
\end{enumerate}
\end{definition}

\noindent A $d$-rectangle is the distributional analog of a large combinatorial rectangle. For $\mathcal{A}, \mathcal{B} \subseteq \cS$, the uniform distribution $\cU_{\mathcal{A} \times \mathcal{B}} = \cU_{\mathcal{A}} \otimes \cU_{\mathcal{B}}$ is a $d$-rectangle if and only if $\abs{\mathcal{A}}, \abs{\mathcal{B}} \ge 2^{-d}\abs{\cS}$, since $\cU_{\mathcal{A}}$ has deficit $\log(\abs{\cS}/\abs{\mathcal{A}})$. Conversely, every $d$-rectangle is a convex combination of uniform distributions on combinatorial rectangles of almost this size.

\begin{corollary}[Flat decomposition of rectangles]\label{cor:flat-rect}
Let $\mu$ be a $d$-rectangle on $\cS \times \cS$, and let $K$ be an integer with $1 \le K \le 2^{-d}\abs{\cS}$. Then there are subfamilies $\mathcal{A}_j, \mathcal{B}_k \subseteq \cS$ of size $K$ and weights $p_j, q_k \ge 0$ with $\sum_j p_j = \sum_k q_k = 1$ such that
\[ \mu \;=\; \sum_{j,k} p_j q_k\, \cU_{\mathcal{A}_j \times \mathcal{B}_k}. \]
\end{corollary}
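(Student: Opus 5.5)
The plan is to apply \cref{fact:flat} to each marginal separately and then multiply the two decompositions. Since $\mu$ is a $d$-rectangle, we have $\mu = \mu^A \otimes \mu^B$, and each marginal satisfies $\Hmin(\mu^A) \ge \log\abs{\cS} - d$ and $\Hmin(\mu^B) \ge \log\abs{\cS} - d$.

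The first step is to check the hypothesis of \cref{fact:flat} for $K$. The assumption $K \le 2^{-d}\abs{\cS}$ gives
\[ \log K \;\le\; \log\abs{\cS} - d \;\le\; \Hmin(\mu^A), \]
and the same bound holds for $\mu^B$. So \cref{fact:flat} applies to each marginal with this integer $K \ge 1$. It yields subfamilies $\mathcal{A}_j \subseteq \cS$ of size $K$ and convex weights $p_j$ with $\mu^A = \sum_j p_j\, \cU_{\mathcal{A}_j}$. Likewise it yields subfamilies $\mathcal{B}_k$ of size $K$ and convex weights $q_k$ with $\mu^B = \sum_k q_k\, \cU_{\mathcal{B}_k}$.

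The second step uses bilinearity of the tensor product together with $\cU_{\mathcal{A}_j} \otimes \cU_{\mathcal{B}_k} = \cU_{\mathcal{A}_j \times \mathcal{B}_k}$:
\[ \mu \;=\; \mu^A \otimes \mu^B \;=\; \sum_{j,k} p_j q_k\, \cU_{\mathcal{A}_j} \otimes \cU_{\mathcal{B}_k} \;=\; \sum_{j,k} p_j q_k\, \cU_{\mathcal{A}_j \times \mathcal{B}_k}. \]
The weights satisfy $\sum_j p_j = \sum_k q_k = 1$ by construction.

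There is no real obstacle here. The only point needing care is that \cref{fact:flat} requires $K$ to be an integer with $\log K$ at most the min-entropy. The stated range $1 \le K \le 2^{-d}\abs{\cS}$ is exactly what ensures this.
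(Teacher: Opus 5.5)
Your proposal is correct and follows the paper's proof exactly: you check $\log K \le \log\abs{\cS} - d$ so that \cref{fact:flat} applies to each marginal, then expand $\mu^A \otimes \mu^B$ bilinearly using $\cU_{\mathcal{A}_j} \otimes \cU_{\mathcal{B}_k} = \cU_{\mathcal{A}_j \times \mathcal{B}_k}$. Nothing is missing.
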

\begin{proof}
Both marginals of $\mu$ have min-entropy at least $\log\abs{\cS} - d \ge \log K$. So by \cref{fact:flat}, $\mu^A = \sum_j p_j\, \cU_{\mathcal{A}_j}$ and $\mu^B = \sum_k q_k\, \cU_{\mathcal{B}_k}$ for such subfamilies and weights. Therefore
\[ \mu \;=\; \mu^A \otimes \mu^B \;=\; \sum_{j,k} p_j q_k\, \cU_{\mathcal{A}_j} \otimes \cU_{\mathcal{B}_k}. \qedhere \]
\end{proof}

\noindent A near-rectangle relaxes the product condition: its two sides may be correlated, and $\kappa$ bounds their mutual information. In particular, a $d$-rectangle is a $(d, 0)$-near-rectangle. Replacing a near-rectangle by the product of its marginals gives a rectangle that is close in total variation distance (see, e.g., \cite{ko_omega_2026}).

\begin{lemma}[From near-rectangles to rectangles]\label{lem:rect}
If $\mu$ is a $\delta$-monochromatic $(d,\kappa)$-near-rectangle, then $\mu^A \otimes \mu^B$ is a $\big(\delta + \sqrt{(\ln 2/2)\,\kappa}\big)$-monochromatic $d$-rectangle.
\end{lemma}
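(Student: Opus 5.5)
The plan has two parts: first check that the product $\mu^A \otimes \mu^B$ is a $d$-rectangle, then bound its mass on intersecting pairs by comparing it with $\mu$ in total variation distance.

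\textbf{The rectangle property.} By definition, $\mu^A \otimes \mu^B$ is a product distribution. Its marginals are exactly $\mu^A$ and $\mu^B$, so by hypothesis each has deficit at most $d$. Hence it is a $d$-rectangle in the sense of \cref{def:rect}(a). This part needs no work.

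\textbf{Monochromaticity.} I will go through total variation distance. Let $E$ be the event $A \cap B \ne \emptyset$. For any two distributions $P,Q$ and any event $E$, we have $P(E) \le Q(E) + \norm{P-Q}_{TV}$. Applying this with $P = \mu^A\otimes\mu^B$ and $Q = \mu$ gives
\[ \Pr_{\mu^A\otimes\mu^B}[E] \;\le\; \Pr_\mu[E] + \norm{\mu - \mu^A\otimes\mu^B}_{TV} \;\le\; \delta + \norm{\mu - \mu^A\otimes\mu^B}_{TV}. \]

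\textbf{Bounding the distance.} It remains to bound the total variation term by $\sqrt{(\ln 2/2)\kappa}$. The standard identity is $I_\mu(A;B) = D(\mu \,\|\, \mu^A \otimes \mu^B)$. I will derive it from the earlier facts. Apply \cref{fact:chain}(b) with $(X,Y) = (B,A)$ under $\mu$ and $(X',Y') = (B,A)$ under the product. The first term is $D(\mu^B\|\mu^B) = 0$. Under the product, the conditional law of $A$ given $B=b$ is $\mu^A$, so the second term is $\E_b D(A|_{B=b} \,\|\, A)$. By \cref{fact:kl-mi} this equals $I(A;B) \le \kappa$. Pinsker's inequality (\cref{fact:pinsker}) then gives
\[ \norm{\mu - \mu^A\otimes\mu^B}_{TV} \le \sqrt{(\ln 2/2)\,\kappa}, \]
which completes the bound.

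No step is a real obstacle. The only care needed is the orientation of the divergence: Pinsker is symmetric in the total variation term but not in the divergence, so it must be applied to $D(\mu \,\|\, \text{product})$, which is the divergence that equals the mutual information.
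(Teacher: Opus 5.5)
Your proposal is correct and matches the paper's proof essentially step for step. The paper likewise notes that the product has the same marginals, obtains $D(\mu\,\|\,\mu^A\otimes\mu^B)=I_\mu(A;B)\le\kappa$ from the KL chain rule and \cref{fact:kl-mi}, and then combines Pinsker's inequality with the total-variation bound on the probability of the intersecting event.
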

\begin{proof}
Let $R := \mu^A \otimes \mu^B$. It has the same marginals as $\mu$, so it is a $d$-rectangle. By the chain rule for KL divergence (\cref{fact:chain}(b)) and \cref{fact:kl-mi}, $D(\mu \,\|\, R) = \E_{b \sim \mu^B}\big[D(A|_{B = b} \,\|\, \mu^A)\big] = I_\mu(A; B) \le \kappa$, so $\norm{\mu - R}_{TV} \le \sqrt{(\ln 2/2)\,\kappa}$ by \cref{fact:pinsker}. Therefore
\[ \Pr_{(A, B) \sim R}\big[A \cap B \ne \emptyset\big] \;\le\; \Pr_{(A, B) \sim \mu}\big[A \cap B \ne \emptyset\big] + \norm{\mu - R}_{TV} \;\le\; \delta + \sqrt{(\ln 2/2)\,\kappa}. \qedhere \]
\end{proof}

\section{Main proof}\label{sec:proof-main}

In this section, we prove our main theorem, the formal version of \cref{thm:main-informal}.

\begin{theorem}[Main theorem]\label{thm:main}
Every zero-error cell-probe data structure for insert-only multiphase Disjointness with $m = n$ query sets, word size $w = \Theta(\log n)$, worst-case update time $t_u$ and expected query time $t_q$ has
\[ t_q = \Omega\!\left(\left(\frac{\log n}{\log(t_u w)}\right)^{2}\right). \]
In particular, $\max(t_u,t_q)=\Omega\big((\log n/\log\log n)^2\big)$.
\end{theorem}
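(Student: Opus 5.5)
We follow the two-layer plan of the overview. Set $\ell=\Theta(\log n/\log(t_u w))$ epochs with $s_i=(t_u w)^{2i}$, disjoint blocks $B_i$ of size $\ell s_i^2$ (total size at most $n$ after adjusting constants), and the product hard distribution: each $S_j\cap B_i$ and $X\cap B_i$ is an independent uniform $s_i$-subset of $B_i$. By a birthday estimate, each block intersects with probability $\Theta(1/\ell)$ independently, so $\Pr[S_Q\cap X=\emptyset]=\Omega(1)$. Fix the random tape; by Markov, Alice truncates each query simulation at $8t_q$ probes, losing only a constant in the probability of a successful run. Let $C_i$ be the cells last written in epoch $i$. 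Since $\sum_i \E|\text{probes to }C_i|\le 8t_q$, at least half of the epochs $i$ have $\E[\text{probes to }C_i]\le k:=O(t_q/\ell)$. Assuming $t_q=o(\ell^2)$, we get $k=o(\ell)$. The goal is to show that such an epoch yields a contradiction.

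\textbf{Simulation (Step 1).} For such an epoch $i$ we build the protocol of \cref{prot:game} exactly as sketched. $U$ lists all cells written in epochs $i,\dots,1$ with their contents and addresses. $\Phi$ contains $X\setminus B_i$, the cells written after epoch $i$ (at most $2t_u s_{i-1}$ cells, i.e.\ $O(s_i/(t_u w))$ bits), and an independent $p$-sample of $C_i$ with $p=\eps/(32t_uw)$. Encoding the sample as a subset of $C_i$ plus contents costs at most about $|C_i|p(\log(1/p)+w)\le \eps s_i/8$ bits, since $|C_i|\le t_u s_i$. $A_Q$ is the transcript of Alice's simulation, $O(t_q w)$ bits. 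Bob rejects any probe to a cell listed in $U$ whose claimed content disagrees, so every non-FAIL answer is correct. By Markov and the independence of sampling from the query, a query probing at most $2k$ cells of $C_i$ succeeds with probability at least $p^{2k}$. Combined with the constant probability of disjointness, Bob outputs ``disjoint'' with probability at least $\gamma:=\Omega(p^{2k})=2^{-O(k\log(t_uw))}=n^{-o(1)}$. This is much larger than $2^{-\eps s_i}$ for the relevant epochs, once we restrict to $i\ge \ell/2$ so that $s_i\ge n^{\Omega(1)}$.

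\textbf{Near-rectangle (Step 2).} Write $a:=S_Q\cap B_i$ and $b:=X\cap B_i$, and condition on the event $E$ that Bob outputs ``disjoint''. We will fix the auxiliary variables $R$: the other blocks, $X\setminus B_i$, and the cells after epoch $i$. For Bob's side, $\tHmin(b\mid \Phi,R)\ge \log\binom{|B_i|}{s_i}-\eps s_i/4$ by \cref{fact:avg-minent}(a). For Alice's side, $U$ has $O(t_u s_i w)$ bits but is shared by all $m=n$ sets. By the chain rule (\cref{fact:chain}), $\E_Q I(S_Q;U,\Phi\mid S_{<Q},R)\le O(t_us_iw)/n=o(1)$ after additionally conditioning on $S_{<Q}$, which is a standard trick. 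Since $A_Q$ has $O(t_qw)=o(s_i)$ bits, \cref{fact:avg-minent} gives $a$ nearly full min-entropy. We then need $I(a;b)$ small given all fixed message values. The protocol structure gives a Markov-like property: given $U,\Phi$, Alice's $a$ and Bob's $b$ interact only through $A_Q$, which is a function of Alice's side together with $\Phi$. This bounds the mutual information by $O(t_qw+\text{small})$ relative to $s_i$. Next we pass to the conditional on $E$ using \cref{fact:event} and \cref{fact:one-side}, paying $\log(1/\gamma)=o(s_i)$ in deficit and a factor $1/\gamma$ in information. To keep the latter controlled, we condition only on events measurable on one side, or we average and then use Markov to pick good values. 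Finally, \cref{fact:smooth} upgrades the divergence bounds to min-entropy deficits $d=O(\eps s_i)$. The result is a $0$-monochromatic $(d,\kappa)$-near-rectangle with $\kappa$ a small constant.

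\textbf{Steps 3--4.} \cref{lem:rect} turns this into a $\sqrt{\kappa}$-monochromatic $d$-rectangle. A sparse-set corruption bound, to be proved in \cref{sec:corruption}, extends Babai--Frankl--Simon to $s_i$-subsets of a universe of size $\ell s_i^2$ with deficit $d\le c s_i$, and asserts intersection mass $\Omega(1/\ell)$. This bound is proved via \cref{cor:flat-rect}, reducing to combinatorial rectangles of size $2^{-d}\binom{|B_i|}{s_i}$. We get a contradiction if $\kappa\ll 1/\ell^2$. \textbf{This is the main obstacle:} the mutual information after conditioning on a rare event $E$ blows up by $1/\gamma$. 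We must therefore bound $I(a;b)$ not by averaging but through the rectangle structure of the conditioned distribution. Conditioning on the full transcript $(U,\Phi,A_Q)$ and $E$ makes $(a,b)$ essentially a product over the protocol's rectangle structure, leaving only the dependence through $R$ and the residual correlation from Merlin's message. We aim to make this residual correlation $o(1/\ell^2)$ by choosing $\eps$ small and using $n\gg t_us_iw\ell^2$. With all epochs $i\ge\ell/2$ forcing $k=\Omega(\ell)$, we conclude $t_q=\Omega(\ell^2)$.
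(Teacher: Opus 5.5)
\textbf{There is a genuine gap.} Your outline matches the paper's: the epoch distribution, the verified simulation, a near-rectangle, Pinsker, and the sparse corruption bound. But the decisive step is not established. That step bounds the correlation between $a=S_Q^{(i)}$ and $b=X^{(i)}$ on the conditioned distribution, and both mechanisms you offer for it fail.

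\emph{Step~2.} You bound $I(a;b)$ by $O(t_q w+\text{small})$ bits. As you note yourself at the end, the corruption bound only gives intersection mass $\Omega(1/\ell)$. Pinsker therefore needs $\kappa\ll 1/\ell^2$ in absolute terms, and a bound of order $t_qw$ is useless.

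\emph{Your proposed repair.} Conditioning on the full transcript $(U,\Phi,A_Q)$ does not work either. $U$ lists every cell last written in epoch $i$ with its address and content. For a natural data structure (say, one storing the characteristic vector of $X$), this determines $X^{(i)}$ completely, so conditioning on $U$ destroys the min-entropy of $b$. Moreover, Merlin sees both inputs, so conditioning on $U$ does not produce any rectangle structure anyway.

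\textbf{The missing idea is the choice of conditioning.} Condition on $Z=(\Phi,\Aall,Q,S_{<Q},B_Q)$, deliberately excluding $U$. Then:
\begin{enumerate}
\item[(i)] $\Aall=(A_1,\dots,A_m)$ is a function of $(\vS,\Phi)$. By \cref{fact:drop-cond} it adds \emph{no} correlation between $\vS$ and $(b,U)$, so Alice's messages cost nothing here, not $O(t_qw)$. Taking all $m$ transcripts rather than $A_Q$ keeps the conditioning independent of $Q$, which is what lets the direct sum (\cref{clm:direct-sum}) apply.
\item[(ii)] $B_Q$ is a function of $(\Phi,\Aall,Q,U)$, so $I(a;b\mid Z)\le I(a;b,U\mid \Phi,\Aall,Q,S_{<Q})$.
\item[(iii)] By the direct sum over the $m=n$ query sets, the right side of (ii) is at most $H(U)/m\le n^{-2/5}$.
\end{enumerate}
Averaging over the value of $B_Q$ gives $\E_z\big[I(a;b\mid Z=(z,1))\big]\le n^{-2/5}/\rho_1$. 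The simulation gives $\rho_1\ge n^{-o(1)}$, and the paper works with the threshold $n^{-1/6}$. So the $1/\gamma$ blow-up you identify as the main obstacle is harmless: the correlation remains $n^{-\Omega(1)}\ll 1/\ell^2$.

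Markov's inequality then lets you pick one $z$ meeting this bound together with the two min-entropy conditions. After that, \cref{fact:one-side} handles the smoothing event on the query side, because once $z$ is fixed it is a function of $a$ alone. Without this conditioning argument, your proof does not reach a contradiction.

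\textbf{Two smaller points are easy to fix.}
\begin{itemize}
\item Alice cannot decode the sampled cells as a subset of $C_i$, since she does not know $C_i$. Send addresses instead, which costs the same order.
\item A Markov threshold of $2k$ does not combine with $\Pr[\text{disjoint}]\ge1/4$ and the truncation loss by a union bound. Use a larger constant, or Jensen's inequality as the paper does.
\end{itemize}
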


We follow the outline in \cref{sec:overview}. After defining the hard distribution, we turn a data structure violating the bound into a 2.5-round protocol (\cref{sec:step1}). The communication lower bound in \cref{sec:step2} rules out this protocol, using the corruption bound proved in \cref{sec:corruption}.

\subsection{The hard distribution}\label{sec:dist}

Let $\beta := (t_u w)^4$, and choose the largest integer $\ell\ge1$ such that $\ell\sum_{i=1}^{\ell}\beta^i\le n$. The bound is trivial when $t_u w>n^{1/95}$, so we work in the complementary regime; in particular, such an $\ell$ exists. Choose pairwise disjoint blocks $B_1,\dots,B_\ell\subseteq[n]$ with
\[ \abs{B_i} = n_i := \ell\beta^i, \qquad\text{and let}\qquad s_i := \sqrt{n_i/\ell} = \beta^{i/2}. \]
The remaining coordinates occur in neither the query sets nor $X$. We keep $m=n$ query sets. Since $t_u$ and $w$ are integers, all block and set sizes are integers.
Epochs are indexed backward in time: epoch $\ell$ comes first and epoch $1$ comes last. Under the distribution $\cD$ of $(\vS, X)$, independently:
\begin{itemize}[leftmargin=*]
  \item each query set $S_j$, $j \in [m]$, consists of $s_i$ uniformly random elements of each block $B_i$; write $S_j^{(i)} := S_j \cap B_i$ and $\vS := (S_1,\dots,S_m)$;
  \item in epoch $i$, $s_i$ uniformly random elements of $B_i$ are inserted into $X$, in increasing order. We write $X^{(i)} := X \cap B_i$ for the set inserted in epoch $i$, $X^{(>i)} := X \cap (B_{i+1} \cup \dots \cup B_\ell)$, and $X^{(\ne i)} := X \setminus B_i$.
\end{itemize}
The query index $Q$ is uniform in $[m]$ and independent of $(\vS, X)$. We write $\Pr_{\cD,Q}$ and $\E_{\cD,Q}$ for probability and expectation over $(\vS, X) \sim \cD$ and $Q$.
In each block, $S_Q$ and $X$ share $s_i^2/n_i = 1/\ell$ elements in expectation over $\cD$ and $Q$; since the blocks are independent, $S_Q$ and $X$ are disjoint with probability at least $1/4$ (\cref{clm:params}(d) and \cref{fig:hard-dist}). The \emph{Chronogram epochs} are $\cI := \{ i \in [\ell] : i \ge 2(\ell+2)/3 \}$; any range of $\Omega(\ell)$ epochs with $s_i \ge n^{\Omega(1)}$ would do.

\begin{figure}[t]
\centering
\begin{tikzpicture}[x=0.42cm, y=0.42cm,
  lab/.style={font=\small},
  one/.style={fill=black!20}]
  \providecommand{\bitrow}[3]{%
    \foreach \b [count=\c from 0] in {#3} {%
      \ifnum\b=1 \fill[one] (#1+\c,#2) rectangle (#1+\c+1,#2+1);\fi
      \draw (#1+\c,#2) rectangle (#1+\c+1,#2+1);
      \node[font=\scriptsize] at (#1+\c+0.5,#2+0.5) {\b};}}
  \node[lab, anchor=east] at (-0.2,1.5) {$X$};
  \node[lab, anchor=east] at (-0.2,0.5) {$S_q$};
  \bitrow{0}{1}{0,1,0,0,0,0,1,0,0,0,0,0,1,0}
  \bitrow{0}{0}{0,0,0,1,0,0,0,0,1,0,1,0,0,0}
  \draw[very thick] (0,0) rectangle (14,2);
  \node[lab] at (7,-0.6) {$B_\ell$ ($n_\ell$ coordinates)};
  \bitrow{15}{1}{0,0,1,0,0,0,0,1}
  \bitrow{15}{0}{1,0,0,0,0,1,0,0}
  \draw[very thick] (15,0) rectangle (23,2);
  \node[lab] at (19,-0.6) {$B_{\ell-1}$};
  \node at (24.5,1) {$\cdots$};
  \bitrow{26}{1}{0,0,1,0}
  \bitrow{26}{0}{0,1,0,0}
  \draw[very thick] (26,0) rectangle (30,2);
  \node[lab] at (28,-0.6) {$B_1$};
  \draw[->, thick] (0,3) -- (30,3);
  \node[lab, anchor=south] at (2.2,3.05) {epoch $\ell$ first};
  \node[lab, anchor=south] at (15,3.05) {time};
  \node[lab, anchor=south] at (27.8,3.05) {epoch $1$ last};
  \path (-1.0,0) (31.0,0); 
  \node[lab] at (15,-1.9) {In each block $B_i$: $s_i$ ones in each row, placed independently at random; \quad $n_i = \ell\beta^i$, \ $s_i = \sqrt{n_i/\ell}$.};
\end{tikzpicture}
\caption{The hard distribution (not to scale), with $X$ and $S_q$ written as characteristic vectors. The ones of $X$ in block $B_i$ are inserted during epoch $i$. Block sizes grow by a factor $\beta = (t_u w)^4$ going back in time. In each block, the expected number of common ones is $s_i^2/n_i = 1/\ell$, so $S_Q$ and $X$ are disjoint with constant probability.}
\label{fig:hard-dist}
\end{figure}

The following bounds summarize the parameters of the hard distribution.

\begin{claim}[Parameters]\label{clm:params}
The hard distribution has the following properties.
\begin{enumerate}[label=(\alph*)]
  \item $\ell = \Theta(\log n/\log(t_u w))$, $\ell \ge 20$, and $\abs{\cI} \ge (\ell - 4)/3$.
  \item For $i \in \cI$: $s_i \ge n^{1/3}$.
  \item $s_i/s_{i-1} = \sqrt\beta = (t_u w)^2$, $\sum_{j < i} s_j \le 2 s_{i-1}$, and $s_\ell \le \sqrt n$.
  \item $\E_{\cD,Q} \left[ \abs{S_Q^{(i)} \cap X^{(i)}} \right] = s_i^2/n_i = 1/\ell$ for every $i$, and $\Pr_{\cD,Q}[S_Q \cap X = \emptyset] \ge (1 - 1/\ell)^\ell \ge 1/4$ for $\ell \ge 2$.
\end{enumerate}
\end{claim}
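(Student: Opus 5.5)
The plan is to verify the four items directly from the definitions $\beta=(t_uw)^4$, $n_i=\ell\beta^i$ and $s_i=\beta^{i/2}$, with $\ell$ the largest integer satisfying $\ell\sum_{i\le\ell}\beta^i\le n$, working throughout in the regime $t_uw\le n^{1/95}$. Item (c) is immediate. We have $s_i/s_{i-1}=\beta^{1/2}=(t_uw)^2\ge 2$, so the sum $\sum_{j<i}s_j$ is a geometric series dominated by its top term and is at most $2s_{i-1}$. Also $s_\ell^2=\beta^\ell\le \ell\sum_i\beta^i\le n$.

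For (a), I would sandwich $\ell$. The constraint $\ell\sum_{i\le \ell}\beta^i\le n$ together with $\sum_{i\le\ell}\beta^i\ge\beta^\ell$ gives $\ell\log\beta\le\log n$, so $\ell\le \log n/(4\log(t_uw))$. Maximality gives $(\ell+1)\sum_{i\le\ell+1}\beta^i>n$. Since $\sum_{i\le \ell+1}\beta^i\le 2\beta^{\ell+1}$, this yields $2(\ell+1)\beta^{\ell+1}>n$. Because $\ell+1\le\log n\le\beta^{O(1)}$, we get $\ell+3\ge\log n/\log\beta$, roughly. Using $\log\beta\le 4\log n/95$ then gives $\ell\ge 95/4-O(1)\ge 20$. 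I expect this constant bookkeeping to be the only delicate step: one must check that the $\log(2(\ell+1))$ slack costs at most about one extra unit of $\log\beta$ given $\beta\ge 16$. If needed, I would absorb it using $2(\ell+1)\le \beta$, which holds since $\ell\le\log n$ and $\beta\ge(\log n)^4$ when $w=\Theta(\log n)$. For $|\cI|$, the integers in $[\lceil 2(\ell+2)/3\rceil,\ell]$ number at least $\ell-2(\ell+2)/3-1+1\ge(\ell-4)/3$.

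For (b), take $i\in\cI$, so $i\ge 2(\ell+2)/3$. The lower bound from maximality reads $\beta^{\ell+2}\ge\beta^{\ell+1}\cdot 2(\ell+1)>n$, using $2(\ell+1)\le\beta$. Hence $s_i=\beta^{i/2}\ge\beta^{(\ell+2)/3}>n^{1/3}$.

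For (d), fix $i$ and $Q$. Given $S_Q^{(i)}$, each of its $s_i$ elements lies in the uniform $s_i$-subset $X^{(i)}$ of $B_i$ with probability $s_i/n_i$. By linearity the expected intersection is $s_i^2/n_i=1/\ell$. For the disjointness probability, independence across blocks gives $\Pr[S_Q\cap X=\emptyset]=\prod_i\Pr[S_Q^{(i)}\cap X^{(i)}=\emptyset]$. For each block, I would bound the per-block probability using sampling without replacement:
\[ \Pr\big[S_Q^{(i)}\cap X^{(i)}=\emptyset\big]=\prod_{j=0}^{s_i-1}\Big(1-\frac{s_i}{n_i-j}\Big). \]
This is at least $(1-s_i/(n_i-s_i))^{s_i}$, which is close to $1-1/\ell$. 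The stated bound $(1-1/\ell)^\ell$ needs the per-block bound $\ge 1-1/\ell$ exactly, which Bernoulli's inequality in the wrong direction does not give. I would instead give up a harmless constant and use $1-x\ge e^{-x-x^2}$ for $x\le1/2$. This makes each factor at least $\exp(-\frac{1}{\ell}(1+O(1/\ell)))$, so the product is at least $e^{-1-O(1/\ell)}\ge 1/4$ for $\ell\ge 20$. The stated form is then just a slightly loose expression of the same bound.
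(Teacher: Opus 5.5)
Your arguments for (a)--(c) are correct and match the paper. Absorbing $2(\ell+1)$ via $2(\ell+1)\le\beta$, which the paper only uses in (b), gives $\ell+2>\log n/\log\beta\ge 95/4$ directly. That is a slightly cleaner route to $\ell\ge 20$ than the paper's estimate.

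The gap is in the last part of (d). You do not prove $\Pr_{\cD,Q}[S_Q\cap X=\emptyset]\ge(1-1/\ell)^\ell$. Your reason for giving it up, namely that the exact per-block bound $1-1/\ell$ is not available, is mistaken. The bound follows in one line from the expectation you already computed. $\abs{S_Q^{(i)}\cap X^{(i)}}$ is a nonnegative integer with mean exactly $1/\ell$, so Markov's inequality gives $\Pr[S_Q^{(i)}\cap X^{(i)}\ne\emptyset]\le 1/\ell$. Equivalently, take a union bound over the $s_i$ elements of $S_Q^{(i)}$, each of which lies in $X^{(i)}$ with probability $s_i/n_i$. This is the paper's argument. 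Independence across blocks then gives exactly $(1-1/\ell)^\ell$, which is increasing in $\ell$ and equals $1/4$ at $\ell=2$.

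Bounding your sequential product $\prod_j(1-s_i/(n_i-j))$ factor by factor loses for a specific reason. Each factor conditions on the earlier elements of $X^{(i)}$ having avoided $S_Q^{(i)}$, which inflates the step probability to $s_i/(n_i-j)>s_i/n_i$. The union bound over elements does not pay this cost.

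Your estimate $e^{-1-O(1/\ell)}$ does give the bound $\ge 1/4$ that \cref{thm:too-good}(d) actually uses, for $\ell\ge 20$ and large $s_i$. So nothing downstream breaks. But it is not the displayed inequality, and saying the statement is ``a slightly loose expression of the same bound'' does not prove it.
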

\begin{proof}
We verify the parameter bounds, including the unused coordinates.
\begin{enumerate}[label=(\alph*), leftmargin=*]
  \item By maximality of $\ell$ and $\beta\ge2$,
  \[
  \ell\beta^\ell\le n < (\ell+1)\sum_{j=1}^{\ell+1}\beta^j
  \le 2(\ell+1)\beta^{\ell+1}.
  \]
  Thus $\ell\le\log n$, and $\ell=\Theta(\log n/\log\beta)$. Also $\abs{\cI}\ge \ell-2(\ell+2)/3=(\ell-4)/3$. Since $t_u w\le n^{1/95}$, we have $\log\beta\le(4/95)\log n$ and
  \[
  \ell > \frac{\log n-\log(2(\ell+1))}{\log\beta}-1
  \ge \frac{95}{4}\left(1-\frac{\log(2(\log n+1))}{\log n}\right)-1
  \ge20.
  \]
  \item Since $w=\Theta(\log n)$ and $\ell\le\log n$, we have $\beta\ge2(\ell+1)$. For $i\ge2(\ell+2)/3$, $s_i=\beta^{i/2}\ge(\beta^{\ell+2})^{1/3}>(\beta n/(2(\ell+1)))^{1/3}\ge n^{1/3}$.
  \item The ratio is immediate; $\sum_{j<i} s_j \le s_{i-1}\sum_{r \ge 0} \beta^{-r/2} \le 2 s_{i-1}$ since $\beta \ge 4$; and $s_\ell^2 = n_\ell/\ell \le n$.
  \item Each element of $B_i$ lies in $S_Q^{(i)}$ with probability $s_i/n_i$ and in $X^{(i)}$ with probability $s_i/n_i$, independently, so the expectation is $n_i (s_i/n_i)^2 = 1/\ell$. By Markov, block $i$ is disjoint with probability at least $1 - 1/\ell$, and the blocks are independent. \qedhere
\end{enumerate}
\end{proof}

\paragraph{Setup.}
Fix the random tape by averaging over the hard distribution. We may assume that $t_u w \le n^{1/95}$ and $t_q \le \log^2 n$, since otherwise the bound is trivial. So the data structure is deterministic, with zero error and worst-case update time $t_u$, and the number $T(q)$ of cells read by a query $q \in [m]$ is a function of $(\vS, X)$ with $\E_{\cD,Q}[T(Q)] \le t_q$. By \cref{clm:params}, $\ell \ge 20$ and $s_i \ge n^{1/3}$ for $i \in \cI$. Since $w = \Theta(\log n)$, also $t_u w \ge 64/\eps$, where $\eps = 1/40$ is the constant of \cref{lem:corruption}.

\paragraph{Epochs.}
The updates are processed in decreasing order of the epoch index, so the larger epochs come first. We follow the notation and the setup of \cite{larsen_super-logarithmic_2025}. For $i \in \cI$, let $M_{>i}$ be the memory after Phase I and epochs $\ell, \dots, i+1$; it is a function of $(\vS, X^{(>i)})$. For $j \in [\ell]$, let $C_j$ be the set of cells last written in epoch $j$, and let $C_{<j} := \bigcup_{j' < j} C_{j'}$ be the set of cells written after epoch $j$. The sets $C_1, \dots, C_\ell$ are pairwise disjoint. Let $t_j(q)$ be the number of distinct cells of $C_j$ read by query $q$, so that
\[ \sum_{j \in [\ell]} t_j(q) \;\le\; T(q), \qquad\text{and let}\qquad \tau_j \;:=\; \E_{\cD,Q}\big[t_j(Q)\big]. \]

\subsection{Simulation theorem}\label{sec:step1}

The following theorem converts a data structure into a protocol for the 2.5-round game of \cref{prot:game} (cf.\ \cite[Protocol~3]{ko_omega_2026}), in which Bob verifies Alice's simulation of the query against the actual memory. We adapt the protocol of \cite[Protocol~4]{ko_omega_2026} and the analysis of \cite[Theorem~3.2]{ko_omega_2026}; numbered references to that work use arXiv version~2. The main difference is Bob's output when the simulation fails: in \cite{ko_omega_2026}, he outputs $\mathrm{FAIL}$ together with an independent random guess, while here he outputs only $\mathrm{FAIL}$. In addition, the epoch $i$ is fixed in advance rather than chosen by Merlin, Merlin's message lists only the cells written in epochs $i, \dots, 1$ rather than all updates, Bob sends Alice all elements of $X$ outside $B_i$, and, since $t_q$ bounds the query time only on average, Alice stops her simulation after $8 t_q$ probes.

Let $U$ consist of two lists, of the cells of $C_i$ and of the cells of $C_{<i}$, each cell with its final content, that is, its content after Phase~II. Let $p := \eps/(32\, t_u w)$. Bob's private randomness $R_{\mathrm{Bob}}$ consists of an independent bit $R_a$ with $\Pr[R_a = 1] = p$ for every cell $a$, and $(\vS, X)$, $Q$ and $R_{\mathrm{Bob}}$ are mutually independent. The length $\abs{\cdot}$ of a message is measured in bits, padded to a fixed length, so that a message of length $\lambda$ takes at most $2^\lambda$ values.

\begin{theorem}[Simulation theorem]\label{thm:too-good}
For every $i \in \cI$ there is a 2.5-round protocol $G_i$ for the game of \cref{prot:game}, in which Merlin sends $U$ and Bob's message has the form $\Phi = (X^{(\ne i)}, \Psi)$, with the following properties, where $B_Q$ is the indicator that Bob outputs ``disjoint'' and $\rho_1 := \Pr_{\cD, R_{\mathrm{Bob}}, Q}[B_Q = 1]$.
\begin{enumerate}[label=(\alph*)]
  \item \emph{(Zero error)} If Bob does not output $\mathrm{FAIL}$, his output is the correct answer. In particular, $B_Q = 1$ implies $S_Q \cap X = \emptyset$.
  \item \emph{(Lengths of $U$ and $\Psi$)} $\abs{U} \le 5\, t_u w\, s_i$ and $\abs{\Psi} \le \eps s_i/4$.
  \item \emph{(Length of Alice's messages)} $\abs{A_q} \le 17\, t_q w$ for every $q \in [m]$.
  \item \emph{(Probability of answering disjoint)} $\rho_1 \ge \frac18\, p^{8\tau_i} - e^{-n^{1/4}/3}$.
  \item \emph{(Structure)} $\Psi$ is a function of $(U, R_{\mathrm{Bob}})$, each $A_q$ is a function of $(\vS, \Phi)$, and $B_Q$ is a function of $(\Psi, A_Q, U)$.
\end{enumerate}
\end{theorem}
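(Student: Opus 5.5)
The plan is to build $G_i$ directly from the data structure, following \cite[Protocol~4]{ko_omega_2026}. Merlin sends $U$ as defined above. Bob computes the sampled set $C_i^{\mathrm{s}} := \{a \in C_i : R_a = 1\}$ from $U$ and $R_{\mathrm{Bob}}$. He then sets $\Psi$ to be the list of cells of $C_{<i}$ with their final contents, followed by the list of cells of $C_i^{\mathrm{s}}$ with their final contents. If the second list would exceed its length budget, he replaces it by a special symbol that forces FAIL. This makes $\Psi$ a function of $(U,R_{\mathrm{Bob}})$, and $\Phi=(X^{(\ne i)},\Psi)$.

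Alice first computes $M_{>i}$, which is a function of $(\vS, X^{(>i)})$, and she has $X^{(>i)}$ from $\Phi$. She overwrites it with the contents listed in $\Psi$, giving a memory $\widetilde M$. On query $q$ she runs the deterministic query algorithm on $\widetilde M$ for at most $8t_q$ probes. Her message $A_q$ is the sequence of probed addresses, a truncation flag, and the answer. Bob outputs FAIL if the flag is set or if some probed address lies in $C_i\setminus C_i^{\mathrm{s}}$; he can test this because the $C_i$-list of $U$ gives $C_i$ and $\Psi$ gives $C_i^{\mathrm{s}}$. Otherwise he outputs Alice's answer. This establishes (e).

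For (a), I would argue by induction over the probes. A cell probed by the query has true content equal to its $M_{>i}$ content unless it lies in $C_i \cup C_{<i}$. Cells of $C_{<i}$ and of $C_i^{\mathrm{s}}$ are correctly overwritten in $\widetilde M$. So if no probe hits $C_i\setminus C_i^{\mathrm{s}}$, every read agrees with the real memory, the probe sequence is the true one, and by zero error the answer is correct.

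For (b), the Phase~II updates from epoch $i$ onward number $s_i+\sum_{j<i}s_j \le s_i + 2s_{i-1}$ by \cref{clm:params}(c). Each writes at most $t_u$ cells, and each listed cell costs $2w$ bits, giving $\abs{U}\le 5t_uws_i$. The $C_{<i}$ part of $\Psi$ costs at most $2t_uw\cdot 2s_{i-1} = 4s_i/(t_uw) \le \eps s_i/16$, using $s_i/s_{i-1}=(t_uw)^2$ and $t_uw\ge 64/\eps$. For the sampled part, $\abs{C_i}\le t_u s_i$, so $\E\abs{C_i^{\mathrm{s}}}\le p t_u s_i = \eps s_i/(32w)$. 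I would fix a budget of $\eps s_i/(16w)$ cells, costing at most $\eps s_i/8$ bits. A Chernoff bound with $s_i \ge n^{1/3}$ and $w=\Theta(\log n)$ shows the budget is exceeded with probability at most $e^{-n^{1/4}/3}$. The total is at most $\eps s_i/4$. For (c), Alice sends at most $8t_q$ addresses of $w$ bits, plus $O(1)$ bits (padding the address list to $16t_qw$ bits leaves ample room), so $\abs{A_q}\le 17t_qw$.

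The main obstacle is (d). Let $E$ be the event that $S_Q\cap X=\emptyset$, $T(Q)\le 8t_q$ and $t_i(Q)\le 8\tau_i$; all three depend only on $(\vS,X,Q)$. The bits $R_a$ are independent of $(\vS,X,Q)$, and given these inputs the set of cells of $C_i$ read by the query is fixed. So, conditioned on $E$, all at most $8\tau_i$ of them are sampled with probability at least $p^{8\tau_i}$. On that event Bob answers ``disjoint'' unless the Chernoff event occurred, which gives $\rho_1 \ge \Pr[E]\,p^{8\tau_i} - e^{-n^{1/4}/3}$.

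It remains to show $\Pr[E]\ge 1/8$, and the constants are delicate here. A plain union bound with two Markov terms of $1/8$ against $\Pr[S_Q\cap X=\emptyset]\ge 1/4$ leaves nothing. I would first try to sharpen the disjointness probability for $\ell\ge20$: the exact per-block disjointness probability is about $e^{-1/\ell}$ rather than the Markov bound $1-1/\ell$, so the total is close to $e^{-1}$. If that is not enough, I would apply Markov conditioned on disjointness, using $\E[T(Q)\mid \mathrm{disjoint}]\le 4t_q$ and $\E[t_i(Q)\mid \mathrm{disjoint}]\le 4\tau_i$. With thresholds of $32t_q$ and $32\tau_i$, this gives $\Pr[E]\ge \tfrac14\cdot\tfrac78\cdot\ldots$ and the like, and I would then adjust the truncation length and the exponent accordingly if the stated constants $8$ and $\tfrac18$ prove too tight.
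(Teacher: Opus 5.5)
Your protocol and your arguments for (a), (b), (c) and (e) are fine. You check only that no probed address lies in $C_i\setminus\widetilde C_i$, whereas the paper checks the contents Alice used against the final contents in $U$. Your check is also valid, and slightly leaner: by induction, Alice's run agrees with the true execution up to its first probe into $C_i\setminus\widetilde C_i$.

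The gap is in (d). Your event $E$ adds a third cutoff, $t_i(Q)\le 8\tau_i$, and with it $\Pr[E]\ge\frac18$ cannot be shown.
\begin{itemize}
\item \emph{First fallback.} Sharpening the disjointness probability does not help. Each block is disjoint with probability at most $(1-s_i/n_i)^{s_i}\le e^{-1/\ell}$, so $\Pr[S_Q\cap X=\emptyset]\le e^{-1}$. The union bound therefore certifies at most $e^{-1}-\frac14<\frac18$. Nothing in the hypotheses prevents both Markov losses from being close to $\frac18$ and falling inside the disjoint event, so this is not slack in the analysis.
\item \emph{Second fallback.} Thresholds $32t_q$ and $32\tau_i$ prove a different statement. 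Part (d) would have $p^{32\tau_i}$, and a transcript of up to $32t_q$ addresses violates (c). The later argument would survive with a smaller $c_0$, but the theorem as stated would not be proved.
\end{itemize}

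The missing idea is that no cutoff on $t_i(Q)$ is needed, because convexity of $x\mapsto p^x$ handles it. Let $F$ be the event that $S_Q\cap X=\emptyset$ and $T(Q)\le 8t_q$, so $\Pr[F]\ge\frac14-\frac18=\frac18$. For fixed inputs in $F$, Bob answers ``disjoint'' with probability at least $p^{t_i(Q)}-e^{-n^{1/4}/3}$ over $R_{\mathrm{Bob}}$. By Jensen's inequality,
\[ \rho_1\;\ge\;\Pr[F]\,\E\big[p^{t_i(Q)}\bigm|F\big]-e^{-n^{1/4}/3}\;\ge\;\Pr[F]\,p^{\E[t_i(Q)\mid F]}-e^{-n^{1/4}/3}. \]
Since $t_i\ge0$, also $\E[t_i(Q)\mid F]\le\tau_i/\Pr[F]\le 8\tau_i$. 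Together these give exactly $\frac18p^{8\tau_i}-e^{-n^{1/4}/3}$; this is the paper's argument.

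Your own route can also be repaired, though not as written with truncation at $8t_q$. Since your transcript carries addresses only, you may truncate at $16t_q$ probes and still have $\abs{A_q}\le 17t_qw$. Each block is disjoint with probability at least $1-s_i^2/(n_i-s_i)\ge 1-\frac1{\ell-1}$, so $\Pr[S_Q\cap X=\emptyset]\ge(1-\frac1{\ell-1})^{\ell}>\frac13$ for $\ell\ge20$. Then $\Pr[E]\ge\frac13-\frac1{16}-\frac18>\frac18$.
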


\begin{proof}
We simulate the data structure by the 2.5-round protocol $G_i$ of \cref{def:game}. A verified transcript may have either answer; $B_Q=1$ denotes only a verified ``disjoint'' answer. Part (e) is immediate from \cref{def:game}: Bob computes $\Psi$ from $U$ and $R_{\mathrm{Bob}}$, Alice computes $A_q$ from $\vS$ and $\Phi$, and Bob's decision reads only $\Psi$, which records whether he aborted, $A_Q$, which records whether the run finished and its output, and the final contents listed in $U$.

\begin{figure}[!t]
\centering
\fbox{\begin{minipage}{0.97\textwidth}
\begin{itemize}[leftmargin=*]
  \item Alice holds $\vS$. Bob holds $X$ and $R_{\mathrm{Bob}}$. Merlin holds $\vS$ and $X$.
  \item Merlin sends $U$ to Bob.
  \item As the $0.5$-round message, Bob lets $\Csam := \{a \in C_i : R_a = 1\}$. If $\abs{\Csam} > 2p\, t_u s_i$, Bob aborts: he sets $\Psi := \bot$, and outputs $\mathrm{FAIL}$ at the end. Otherwise $\Psi$ lists the cells of $C_{<i} \cup \Csam$ with their final contents. Bob sends $\Phi := (X^{(\ne i)}, \Psi)$ to Alice. The message depends on $U$, $X$ and $R_{\mathrm{Bob}}$, but not on the query $Q$.
  \item The query $Q \in [m]$ is then revealed to Alice. Alice computes $M_{>i}$ from $\vS$ and $X^{(>i)}$, and runs the query algorithm on $Q$. She answers a probe to a cell listed in $\Psi$ with its listed content, and a probe to any other cell with its content in $M_{>i}$. She stops the run if it has not finished after $\lfloor8 t_q\rfloor$ probes. She sends the transcript
  \[ A_Q \;:=\; \big((a_1, c_1), \dots, (a_r, c_r), z_{out}\big), \qquad r \le \lfloor8 t_q\rfloor, \]
  where $a_k$ is the $k$-th probed cell, $c_k$ the content she used, and $z_{out}$ the output, or a mark that the run was stopped.
  \item Bob \emph{verifies} $A_Q$ if (i) he did not abort, (ii) the run recorded in $A_Q$ finished, and (iii) $c_k$ is the final content of $a_k$ for every $k$ with $a_k \in C_i \cup C_{<i}$; he reads these final contents off $U$. If verification succeeds, he outputs $z_{out}$; otherwise he outputs $\mathrm{FAIL}$.
\end{itemize}
\end{minipage}}
\refstepcounter{Protocol}\label{def:game}
\vspace{4pt}
{\small Protocol \theProtocol: Simulation protocol $G_i$ (cf.\ \cite[Protocol~4]{ko_omega_2026})}
\end{figure}

\paragraph{Cost of the protocol.} Each cell is encoded by its $w$-bit address and $w$-bit content. We encode list lengths and the abort flag using $O(w)$ additional bits, and pad messages to fixed lengths. Merlin's message $U$ lists the cells of $C_i$ and of $C_{<i}$. Here $\abs{C_i} \le t_u s_i$, since each of the $s_i$ insertions of epoch $i$ writes at most $t_u$ cells, and similarly $\abs{C_{<i}} \le t_u \sum_{j<i} s_j \le 2 t_u s_{i-1}$ by \cref{clm:params}(c). Bob's list $\Psi$ has at most $\abs{C_{<i}} + 2p\, t_u s_i$ cells. Since $s_{i-1} = s_i/(t_u w)^2$ and $t_u w \ge 64/\eps$,
\begin{align*}
\abs{U} \;&\le\; 2w\big(\abs{C_i} + \abs{C_{<i}}\big) + O(w) \;\le\; 2w\big(t_u s_i + 2 t_u s_{i-1}\big) + O(w) \;\le\; 5\, t_u w\, s_i, \\
\abs{\Psi} \;&\le\; 2w\big(2 t_u s_{i-1} + 2p\, t_u s_i\big) + O(w) \;=\; \frac{4 s_i}{t_u w} + \frac{\eps s_i}{8} + O(w) \;\le\; \frac{\eps s_i}{4}.
\end{align*}
Alice's transcript records at most $8 t_q$ probes, together with the number of probes and the output or the stop mark, so
\[ \abs{A_q} \;\le\; 16\, t_q w + \lceil\log(\lfloor8 t_q\rfloor + 1)\rceil + 2 \;\le\; 17\, t_q w. \]
The header costs are absorbed since $s_i\ge n^{1/3}$ and $w=\Theta(\log n)$. This proves (b) and (c).

\paragraph{Verification and disjoint answers.} We first prove that verification certifies the true execution and succeeds when the sampled cells cover that execution, provided Bob does not abort and the query finishes within the probe limit. We then restrict to disjoint inputs to prove (d).

\begin{claim}[Verification]\label{lem:simulation}
Fix $(\vS, X)$ and $q \in [m]$. In (b) and (c), the probability is over $R_{\mathrm{Bob}}$.
\begin{enumerate}[label=(\alph*)]
  \item If Bob verifies $A_q$, then $A_q$ is the true execution of query $q$. In particular, its output is $\DISJ(S_q, X)$.
  \item $\Pr_{R_{\mathrm{Bob}}}[\text{Bob aborts}] \le e^{-n^{1/4}/3}$.
  \item If $T(q) \le 8 t_q$, then $\Pr_{R_{\mathrm{Bob}}}[\text{Bob verifies } A_q] \ge p^{t_i(q)} - e^{-n^{1/4}/3}$.
\end{enumerate}
\end{claim}
\begin{proof}
(a) If Bob verifies $A_q$, then every content Alice used is the final content. For a cell of $C_i \cup C_{<i}$, Bob checked this. Any other cell is not listed in $\Psi$ and is not written during epochs $i, \dots, 1$, so Alice used its content in $M_{>i}$, which is final. Hence her run is the true execution of query $q$, and its output is $\DISJ(S_q, X)$.

(b) $\abs{\Csam}$ is a sum of $\abs{C_i} \le t_u s_i$ independent indicators with mean $p$ each. By the Chernoff bound, applied with the upper bound $p\, t_u s_i$ on the mean,
\[ \Pr_{R_{\mathrm{Bob}}}\left[\abs{\Csam} > 2p\, t_u s_i\right] \;\le\; e^{-p\, t_u s_i/3}, \]
and $p\, t_u s_i = \eps s_i/(32 w) \ge n^{1/4}$.

(c) Let $P_q \subseteq C_i$ be the set of the $t_i(q)$ cells of $C_i$ probed by the true execution of query $q$. If Bob does not abort and $P_q \subseteq \Csam$, then every cell that the true execution probes is either listed in $\Psi$ or outside $C_i \cup C_{<i}$, so Alice reproduces the true execution, which finishes within $T(q) \le 8 t_q$ probes, and verification succeeds. Since $\Pr_{R_{\mathrm{Bob}}}[P_q \subseteq \Csam] = p^{t_i(q)}$, a union bound with (b) gives (c).
\end{proof}

Part (a) of the theorem follows from \cref{lem:simulation}(a), since Bob outputs $\mathrm{FAIL}$ unless he verifies $A_Q$. It remains to show (d). Let $F$ be the event that $S_Q \cap X = \emptyset$ and $T(Q) \le 8 t_q$; it is determined by $(\vS, X, Q)$. By Markov's inequality, $\Pr_{\cD,Q}[T(Q) > 8 t_q] \le 1/8$, and by \cref{clm:params}(d), $\Pr_{\cD,Q}[S_Q \cap X = \emptyset] \ge 1/4$, so $\Pr_{\cD,Q}[F] \ge 1/8$. If $F$ holds and Bob verifies $A_Q$, then the recorded output is ``disjoint'' by \cref{lem:simulation}(a), so $B_Q = 1$. Since $R_{\mathrm{Bob}}$ is independent of $(\vS, X, Q)$, \cref{lem:simulation}(c), applied for each fixed $(\vS, X, Q)$, and then the convexity of $x \mapsto p^x$ (with Jensen's inequality) give
\[ \rho_1 \;\ge\; \E_{\cD,Q}\Big[\mathbf{1}_F \cdot \big(p^{t_i(Q)} - e^{-n^{1/4}/3}\big)\Big] \;\ge\; \Pr_{\cD,Q}[F] \cdot p^{\E_{\cD,Q}[t_i(Q) | F]} - e^{-n^{1/4}/3}. \]
Since $t_i \ge 0$, $\E_{\cD,Q}[t_i(Q) | F] \le \tau_i/\Pr_{\cD,Q}[F] \le 8\tau_i$, which gives (d).
\end{proof}

\noindent As in \cite[Remark~3.4]{ko_omega_2026}, the protocol detects simulation failure: whenever Bob does not output $\mathrm{FAIL}$, he has verified Alice's transcript against $U$, and his output is correct. The remaining runs play no role in the argument: Steps 2--4 use only the event $B_Q = 1$, on which the answer is ``disjoint'' with certainty.

The next corollary quantifies the probability of answering ``disjoint'' when a query reads few cells from epoch $i$.

\begin{corollary}[Choice of parameters]\label{cor:too-good}
If $\tau_i = o\big(\log n/\log(t_u w)\big)$, then $\rho_1 \ge n^{-o(1)}$.
\end{corollary}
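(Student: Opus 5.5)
The plan is to plug the hypothesis into \cref{thm:too-good}(d), which gives
\[ \rho_1 \;\ge\; \tfrac18\, p^{8\tau_i} - e^{-n^{1/4}/3}, \]
and show that the main term is $n^{-o(1)}$ while the error term is negligible.

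First, the main term. Since $p = \eps/(32\, t_u w)$ with $\eps = 1/40$ a constant, we have $\log(1/p) = \log(t_u w) + O(1)$. The setup also gives $t_u w \ge 64/\eps \ge 2$, so this is at most $O(\log(t_u w))$. Hence
\[ p^{8\tau_i} \;=\; 2^{-8\tau_i \log(1/p)} \;\ge\; 2^{-O(\tau_i \log(t_u w))}. \]
The hypothesis $\tau_i = o(\log n/\log(t_u w))$ makes the exponent $o(\log n)$. So $\tfrac18\, p^{8\tau_i} \ge \tfrac18\, n^{-o(1)} = n^{-o(1)}$, with the constant $1/8$ absorbed into the $o(1)$ in the exponent.

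Second, the error term. We have $e^{-n^{1/4}/3} = 2^{-\Omega(n^{1/4})}$, which is smaller than half of any $n^{-o(1)}$ quantity once $n$ is large. This uses only that $o(\log n)$ is eventually below $n^{1/4}/10$. Subtracting it therefore costs at most a factor $2$, and we conclude $\rho_1 \ge n^{-o(1)}$.

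There is no real obstacle here; the corollary is parameter bookkeeping. The one point needing care is uniformity: the $o(\cdot)$ is taken as $n\to\infty$, with $t_u w$ allowed to vary with $n$ inside the regime $t_u w \le n^{1/95}$. For this reason I keep $\log(1/p) \le 2\log(t_u w)$ as an explicit inequality, valid for $t_u w \ge 64/\eps$, rather than an asymptotic equivalence, so the bound on the exponent holds uniformly.
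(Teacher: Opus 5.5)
Your proposal is correct and matches the paper's proof: bound $\log(1/p)\le 2\log(t_u w)$ using $t_u w\ge 64/\eps$, deduce $p^{8\tau_i}\ge 2^{-16\tau_i\log(t_u w)}=n^{-o(1)}$, and then apply \cref{thm:too-good}(d), where the $e^{-n^{1/4}/3}$ term is negligible. Your remark on keeping $\log(1/p)\le 2\log(t_u w)$ as an explicit inequality, rather than an asymptotic one, is exactly how the paper handles it.
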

\begin{proof}
Since $t_u w \ge 64/\eps$, $\log(1/p) = \log(32\, t_u w/\eps) \le 2\log(t_u w)$, so $p^{8\tau_i} \ge 2^{-16\tau_i\log(t_u w)} = n^{-o(1)}$. By \cref{thm:too-good}(d), $\rho_1 \ge \frac18 n^{-o(1)} - e^{-n^{1/4}/3} \ge n^{-o(1)}$.
\end{proof}

\subsection{Communication lower bound}\label{sec:step2}

We prove that $\rho_1<n^{-1/6}$ for every $i\in\cI$ (\cref{lem:comm-lb}). Following \cite{ko_omega_2026,ko_adaptive_2020}, we collect the relevant messages and auxiliary variables into $Z$, including Bob's output $B_Q$. If disjoint answers occur too often, we can choose a value $Z=(z,1)$ and then restrict the query-side input $S_Q^{(i)}$ to an event $E$ of constant conditional probability. Under this joint conditioning, both block inputs have nearly full min-entropy and their mutual information is small. The smoothing event $E$ depends only on the query-side input once $z$ is fixed.

Every pair in the resulting distribution is disjoint, by \cref{thm:too-good}(a). Thus it is a monochromatic near-rectangle (\cref{def:rect}). The one-sided corruption bound rules out this distribution after we pass to the product of its marginals; this replaces the discrepancy step for Inner Product in \cite{ko_omega_2026}.

Fix $i \in \cI$ with $\rho_1 > 0$. Following \cite{ko_omega_2026}, we take
\[ Z \;:=\; \big(Z_{\mathrm{alice}}, B_Q\big), \qquad Z_{\mathrm{alice}} \;:=\; \big(\Phi, \Aall, Q, S_{<Q}\big), \]
where $S_{<Q} := (S_1, \dots, S_{Q-1})$, and $\Aall := (A_1, \dots, A_m)$ with $A_q$ the transcript that Alice sends when the query is $q$. The vector $\Aall$ is never transmitted; it is an analysis variable that permits direct sum over the query sets. The prefix $S_{<Q}$ supplies the conditioning required by the chain rule. As in \cite{ko_omega_2026}, $Z$ includes these transcripts but not Merlin's message $U$. By \cref{thm:too-good}(e), $\Aall$ is a function of $(\vS, \Phi)$, and $B_Q$ is a function of $(Z_{\mathrm{alice}}, U)$, since $\Psi$ is part of $\Phi$ and $A_Q$ is determined by $\Aall$ and $Q$. We only consider values of $Z$ with $B_Q = 1$, and write them as $(z, 1)$, where $z$ is a value of $Z_{\mathrm{alice}}$.

We use two facts throughout. First, Merlin's message is short compared to the number $m = n$ of query sets: by \cref{thm:too-good}(b) and \cref{fact:basic}(a),
\begin{equation}\label{eq:HU}
H(U) \;\le\; 5\, t_u w\, s_i \;\le\; n^{3/5},
\end{equation}
since $s_i \le s_\ell \le \sqrt n$ (\cref{clm:params}(c)) and $t_u w \le n^{1/95}$. Second, the following standard direct-sum identity holds.

\begin{claim}[Direct sum]\label{clm:direct-sum}
For any random variables $V$ and $W$ that are functions of $(\vS, X, R_{\mathrm{Bob}})$,
\[ I\big(S_Q; W \bigm| V, Q, S_{<Q}\big) \;=\; \frac1m\, I\big(\vS; W \bigm| V\big). \]
\end{claim}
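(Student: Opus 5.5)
The plan is to reduce the identity to the chain rule for mutual information, using that $Q$ is uniform on $[m]$ and independent of $(\vS, X, R_{\mathrm{Bob}})$. Both $V$ and $W$ are functions of $(\vS, X, R_{\mathrm{Bob}})$, so the whole tuple $(\vS, V, W)$ is independent of $Q$.

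First, by \cref{fact:basic}(c), I would expand the left-hand side as an average over the value of $Q$:
\[ I\big(S_Q; W \bigm| V, Q, S_{<Q}\big) \;=\; \frac1m \sum_{q=1}^m I\big(S_Q; W \bigm| V, S_{<Q}, Q = q\big). \]
On the event $Q = q$ we have $S_Q = S_q$ and $S_{<Q} = S_{<q}$. By the independence noted above, the conditional law of $(\vS, V, W)$ given $Q = q$ equals its unconditional law. Hence each summand equals $I(S_q; W \mid V, S_{<q})$, computed under the original distribution of $(\vS, X, R_{\mathrm{Bob}})$.

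Second, I would apply the chain rule (\cref{fact:chain}(a)) with $A_j = S_j$, $B = W$ and $C = V$:
\[ I\big(\vS; W \bigm| V\big) \;=\; \sum_{q=1}^m I\big(S_q; W \bigm| V, S_1, \dots, S_{q-1}\big) \;=\; \sum_{q=1}^m I\big(S_q; W \bigm| V, S_{<q}\big). \]
Combining the two displays gives the claimed identity.

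The only point needing care is the first step. One must check that conditioning on $Q = q$ does not change the joint law of $(\vS, V, W)$. This uses two facts: $V$ and $W$ do not depend on $Q$, being functions of $(\vS, X, R_{\mathrm{Bob}})$ alone, and $Q$ is independent of $(\vS, X, R_{\mathrm{Bob}})$ by the setup of \cref{thm:too-good}. This is also why the claim is restricted to such $V$ and $W$. Variables like $A_Q$ or $B_Q$ depend on $Q$, and for them the identity can fail. In the application one therefore uses $\Aall$ rather than $A_Q$, as the paper already arranges.
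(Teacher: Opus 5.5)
Your proof is correct and follows the paper's own argument: you average over the uniform $Q$ using \cref{fact:basic}(c), use the independence of $Q$ from $(\vS, X, R_{\mathrm{Bob}})$ to identify each summand with $I(S_q; W \mid V, S_{<q})$, and conclude with the chain rule (\cref{fact:chain}(a)). Your remark on why $V$ and $W$ must not depend on $Q$ is a useful addition that the paper leaves implicit.
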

\begin{proof}
Since $Q$ is uniform and independent of $(\vS, X, R_{\mathrm{Bob}})$, \cref{fact:basic}(c) and the chain rule (\cref{fact:chain}(a)) give
\[ I\big(S_Q; W \bigm| V, Q, S_{<Q}\big) \;=\; \frac1m \sum_{q \in [m]} I\big(S_q; W \bigm| V, S_{<q}\big) \;=\; \frac1m\, I\big(\vS; W \bigm| V\big). \qedhere \]
\end{proof}

\subsubsection{Information about \texorpdfstring{$S_Q$}{S\_Q}}

We first bound the divergence of the conditional query-side distribution from uniform, and then obtain high min-entropy by smoothing. Although $Z_{\mathrm{alice}}$ contains all $m$ Alice transcripts, they carry only $O(t_qw)$ bits about a random $S_Q$ on average. As in \cite{ko_omega_2026}, this yields a high-min-entropy distribution after restricting $S_Q^{(i)}$ to a likely event $E$.

\begin{proposition}[The query side]\label{prop:S-entropy}
If $\rho_1 \ge n^{-1/6}$, then
\[ \Pr_{z \sim Z_{\mathrm{alice}}|_{B_Q=1}}\left[\, \exists E \subseteq \binom{B_i}{s_i}:\;
\begin{aligned}
&\Pr\big[S_Q^{(i)} \in E \bigm| Z = (z, 1)\big] \;\ge\; \frac34 \quad\text{and} \\
&\Hmin\big(S_Q^{(i)} \bigm| Z = (z, 1),\, S_Q^{(i)} \in E\big) \;\ge\; \log\binom{n_i}{s_i} - 32\, n^{1/5}
\end{aligned}
\,\right] \;\ge\; \frac34. \]
\end{proposition}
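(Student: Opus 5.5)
The plan is to bound the average divergence of $S_Q^{(i)}$ from uniform under the conditioning $Z_{\mathrm{alice}}$. We then pass to the event $B_Q=1$, and finally smooth each conditional distribution with \cref{fact:smooth}.

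\textbf{Step 1: information bound without conditioning on $B_Q$.} Since $S_Q^{(i)}$ is uniform on $\binom{B_i}{s_i}$ and independent of $(Q,S_{<Q},X)$, \cref{fact:kl-mi} gives
\[
\E_{z\sim Z_{\mathrm{alice}}}\Big[D\big(S_Q^{(i)}|_{Z_{\mathrm{alice}}=z}\,\big\|\,\cU\big)\Big] = I\big(S_Q^{(i)};\Phi,\Aall\bigm| Q,S_{<Q}\big),
\]
where $\cU$ is the uniform distribution on $\binom{B_i}{s_i}$. This uses that $Q$ and $S_{<Q}$ are independent of $S_Q$. By \cref{fact:basic}(e) this is at most $I(S_Q;\Phi,\Aall\mid Q,S_{<Q})$, which by \cref{clm:direct-sum} (with $V$ trivial) equals $\frac1m I(\vS;\Phi,\Aall)$.

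We split this by the chain rule. The term $I(\vS;\Phi)$ is handled as follows: $\Phi=(X^{(\ne i)},\Psi)$ with $X^{(\ne i)}$ independent of $\vS$, and $\Psi$ is a function of $(U,R_{\mathrm{Bob}})$. Hence $I(\vS;\Phi)\le I(\vS;U,R_{\mathrm{Bob}}\mid X^{(\ne i)})\le H(U)\le n^{3/5}$ by \eqref{eq:HU}. The term $I(\vS;\Aall\mid\Phi)$ is zero, because $\Aall$ is a function of $(\vS,\Phi)$; that is the wrong direction to use. Instead, since $\Aall$ has at most $17t_qwm$ bits, we can bound $I(\vS;\Aall\mid\Phi)\le H(\Aall)\le 17t_qwm$. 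Dividing by $m=n$, the average divergence is at most $n^{-2/5}+17t_qw=O(\log^3 n)$, using $t_q\le\log^2n$.

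\textbf{Step 2: conditioning on $B_Q=1$.} Writing $D_z$ for the divergence of $S_Q^{(i)}|_{Z=(z,1)}$ from uniform, we need $\E_{z\sim Z_{\mathrm{alice}}|_{B_Q=1}}[D_z]$ to be small. Here $B_Q$ is a function of $(Z_{\mathrm{alice}},U)$. We therefore bound the information with $U$ added to the conditioning, which is the same divergence computation with $(z,U)$ in place of $z$. The cost is an extra $I(S_Q;U\mid\cdots)\le H(U)/m$-type term, obtained by applying \cref{clm:direct-sum} again. Taking $B_Q$ as a function of the conditioning and using \cref{fact:basic}(e) and the data-processing form of \cref{fact:chain}(b), we get that the divergence conditioned on the coarser $(z,1)$ is at most the average of divergences over the finer conditioning. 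Conditioning on the event $B_Q=1$ inflates the expectation by at most $1/\rho_1\le n^{1/6}$. Altogether
\[
\E_{z}[D_z]\le n^{1/6}\cdot O\big(\log^3 n+n^{3/5}/n\big)\le n^{1/5}
\]
for large $n$.

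\textbf{Step 3: Markov and smoothing.} By Markov, with probability at least $3/4$ over $z$ we have $D_z\le 4n^{1/5}$. For such $z$ we apply \cref{fact:smooth} with $\alpha=4$ and $t=\max(D_z,2)\le4n^{1/5}$. This yields an event $E$ with conditional probability at least $3/4$ and min-entropy at least $\log\binom{n_i}{s_i}-8\cdot4n^{1/5}=\log\binom{n_i}{s_i}-32n^{1/5}$, as claimed.

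The main obstacle is Step 2. Handling the $B_Q$ conditioning is delicate because $B_Q$ depends on $U$, which is correlated with $S_Q$, and the constants $n^{1/6}$ versus $n^{1/5}$ leave room only if the $U$ term costs $O(H(U)/m)$. I would first try treating $(Z_{\mathrm{alice}},U)$ jointly and then applying convexity of divergence to drop $U$.
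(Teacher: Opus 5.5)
Your proposal is correct and is essentially the paper's argument. It uses the same direct-sum bound $\frac1m\big(H(U)+H(\Aall)\big)$ on the information $S_Q^{(i)}$ shares with $(Z_{\mathrm{alice}},U)$, the same factor $1/\rho_1$ for restricting to $B_Q=1$, and the same Markov-plus-smoothing step with $\alpha=4$. The only difference is that the paper removes $U$ by data processing, $I(S_Q^{(i)};Z_{\mathrm{alice}},B_Q)\le I(S_Q^{(i)};Z_{\mathrm{alice}},U)$, before restricting to $B_Q=1$. This is equivalent to your convexity step and makes your Step~1 redundant. In Step~1, incidentally, $I(\vS;\Aall\mid\Phi)=H(\Aall\mid\Phi)$ rather than zero, though you rightly fall back on the bound $H(\Aall)$.
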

\begin{proof}
First we upper bound $I\big(S_Q^{(i)}; Z_{\mathrm{alice}}, B_Q\big)$. Since $S_Q^{(i)}$ is part of $S_Q$, and $S_Q$ is independent of $(Q, S_{<Q})$, the chain rule (\cref{fact:chain}) gives
\[ I\big(S_Q^{(i)}; Z_{\mathrm{alice}}, B_Q\big) \;\le\; I\big(S_Q; \Phi, \Aall, Q, S_{<Q}, B_Q\big) \;=\; I\big(S_Q; \Phi, \Aall, B_Q \bigm| Q, S_{<Q}\big). \]
Since $B_Q$ is a function of $(\Phi, \Aall, Q, U)$, the chain rule and \cref{clm:direct-sum} (with $V$ constant) give
\[ I\big(S_Q; \Phi, \Aall, B_Q \bigm| Q, S_{<Q}\big) \;\le\; I\big(S_Q; \Phi, \Aall, U \bigm| Q, S_{<Q}\big) \;=\; \frac1m\, I\big(\vS; \Phi, \Aall, U\big). \]
By the chain rule and \cref{fact:basic},
\[ I\big(\vS; \Phi, \Aall, U\big) \;=\; I\big(\vS; \Phi, U\big) + I\big(\vS; \Aall \bigm| \Phi, U\big) \;\le\; H(U) + H(\Aall), \]
where $I(\vS; \Phi, U) \le I(\vS; X^{(\ne i)}, R_{\mathrm{Bob}}, U) = I(\vS; U | X^{(\ne i)}, R_{\mathrm{Bob}}) \le H(U)$ by \cref{fact:basic}(e) and the chain rule, since $\Psi$ is a function of $(U, R_{\mathrm{Bob}})$ (\cref{thm:too-good}(e)) and $\vS$ is independent of $(X^{(\ne i)}, R_{\mathrm{Bob}})$. By \eqref{eq:HU} and $m = n$, $H(U)/m \le n^{-2/5}$, and by \cref{thm:too-good}(c) and \cref{fact:basic}(a), $H(\Aall)/m \le 17 t_q w$. Since $t_q \le \log^2 n$ and $w = \Theta(\log n)$,
\[ I\big(S_Q^{(i)}; Z_{\mathrm{alice}}, B_Q\big) \;\le\; n^{-2/5} + 17 t_q w \;=\; O(\log^3 n). \]
Next we condition on $B_Q = 1$, an event of probability $\rho_1$. Let $\cU$ be the uniform distribution on $\binom{B_i}{s_i}$, which is the distribution of $S_Q^{(i)}$. By \cref{fact:kl-mi}, and since divergence is nonnegative (\cref{fact:basic}(d)),
\begin{align*}
I\big(S_Q^{(i)}; Z_{\mathrm{alice}}, B_Q\big) \;&=\; \E_{(z, b) \sim (Z_{\mathrm{alice}}, B_Q)}\Big[D\big(S_Q^{(i)}|_{Z = (z, b)} \,\big\|\, \cU\big)\Big] \\
\;&\ge\; \rho_1 \cdot \E_{z \sim Z_{\mathrm{alice}}|_{B_Q=1}}\Big[D\big(S_Q^{(i)}|_{Z = (z, 1)} \,\big\|\, \cU\big)\Big].
\end{align*}
Hence, since $\rho_1 \ge n^{-1/6}$,
\[ \E_{z \sim Z_{\mathrm{alice}}|_{B_Q=1}}\Big[D\big(S_Q^{(i)}|_{Z = (z, 1)} \,\big\|\, \cU\big)\Big] \;\le\; \frac{O(\log^3 n)}{\rho_1} \;\le\; n^{1/5}. \]
Finally, by Markov's inequality, $D\big(S_Q^{(i)}|_{Z = (z, 1)} \,\|\, \cU\big) \le 4 n^{1/5}$ with probability at least $3/4$ over $z \sim Z_{\mathrm{alice}}|_{B_Q=1}$. For such $z$, \cref{fact:smooth}, applied to the distribution of $S_Q^{(i)}$ given $Z = (z, 1)$ with $\cS = \binom{B_i}{s_i}$, $t = 4 n^{1/5}$ and $\alpha = 4$, gives the desired set $E$.
\end{proof}

\subsubsection{Information about \texorpdfstring{$X$}{X}}

Next, we show that conditioning on $Z = (z, 1)$ costs little min-entropy of $X^{(i)}$. The only information about $X^{(i)}$ in $Z_{\mathrm{alice}}$ is Bob's message $\Psi$, which is short, and conditioning on $B_Q = 1$ costs $\log(1/\rho_1)$ more bits.

\begin{proposition}[The update side]\label{prop:X-leak}
\[ \Pr_{z \sim Z_{\mathrm{alice}}|_{B_Q=1}}\left[\, \Hmin\big( X^{(i)} \bigm| Z = (z, 1) \big) \;\ge\; \log\binom{n_i}{s_i} - \frac{\eps s_i}{4} - \log\frac{1}{\rho_1} - 2 \,\right] \;\ge\; \frac34. \]
\end{proposition}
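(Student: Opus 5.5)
We bound the average min-entropy of $X^{(i)}$ given $Z$ restricted to $B_Q=1$, and then pass to a high-probability statement with \cref{fact:avg-minent}(b) at $r=2$. Before conditioning on $B_Q$, the prior of $X^{(i)}$ is uniform on $\binom{B_i}{s_i}$, so $\Hmin(X^{(i)}) = \log\binom{n_i}{s_i}$. The plan is to show
\[ \tHmin\big(X^{(i)} \bigm| Z_{\mathrm{alice}}\big) \;\ge\; \log\binom{n_i}{s_i} - \frac{\eps s_i}{4}. \]
Then conditioning on the event $B_Q=1$ costs $\log(1/\rho_1)$ by \cref{fact:event}. Since $B_Q$ is then fixed to $1$, conditioning on $Z=(z,1)$ is the same as conditioning on $Z_{\mathrm{alice}}=z$ within the event $B_Q=1$.

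\emph{First step.} We identify what in $Z_{\mathrm{alice}} = (\Phi, \Aall, Q, S_{<Q})$ can depend on $X^{(i)}$. We have $\Phi = (X^{(\ne i)}, \Psi)$, and $\Aall$ is a function of $(\vS, \Phi)$ by \cref{thm:too-good}(e). $Q$ is independent of everything else. Hence $Z_{\mathrm{alice}}$ is a function of $(W, \Psi)$ with $W := (\vS, X^{(\ne i)}, Q)$. By \cref{fact:avg-minent}(c) it suffices to lower bound $\tHmin(X^{(i)} \mid W, \Psi)$. Under $\cD$, $X^{(i)}$ is independent of $W$ and uniform, so $\tHmin(X^{(i)} \mid W) = \log\binom{n_i}{s_i}$. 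The message $\Psi$ is padded to a fixed length of at most $\eps s_i/4$ bits by \cref{thm:too-good}(b), so it takes at most $2^{\eps s_i/4}$ values. \Cref{fact:avg-minent}(a) therefore gives $\tHmin(X^{(i)} \mid W, \Psi) \ge \log\binom{n_i}{s_i} - \eps s_i/4$. Note that $\Psi$ depends on $U$, which depends on $(\vS, X)$ jointly, and on $R_{\mathrm{Bob}}$. This causes no trouble, because \cref{fact:avg-minent}(a) holds for an arbitrary jointly distributed $B$.

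\emph{Second step.} Apply \cref{fact:event} with $Y := Z_{\mathrm{alice}}$ and $E := B_Q$. This gives $\tHmin(X^{(i)} \mid Z_{\mathrm{alice}}, B_Q{=}1) \ge \log\binom{n_i}{s_i} - \eps s_i/4 - \log(1/\rho_1)$, with respect to the distribution conditioned on $B_Q=1$. We assume $\rho_1>0$, as fixed above. \Cref{fact:avg-minent}(b) with $r=2$, applied to that conditional distribution, then says that with probability at least $1-2^{-2} = 3/4$ over $z \sim Z_{\mathrm{alice}}|_{B_Q=1}$,
\[ \Hmin\big(X^{(i)} \bigm| Z=(z,1)\big) \;\ge\; \log\binom{n_i}{s_i} - \frac{\eps s_i}{4} - \log\frac1{\rho_1} - 2, \]
which is the claim.

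The only delicate point is the first step: verifying that no component of $Z_{\mathrm{alice}}$ other than $\Psi$ leaks information about $X^{(i)}$ beyond what $W$ determines. In particular, Alice's transcripts in $\Aall$ could encode cell contents of $C_i$. This is handled by \cref{thm:too-good}(e): she sees only $\vS$ and $\Phi$, so her transcripts are functions of $(W, \Psi)$. Keeping $U$ out of $Z$, as the paper does, is what makes this work. Everything else is direct bookkeeping with \cref{fact:avg-minent,fact:event}.
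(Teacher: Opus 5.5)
Your proposal is correct and follows the paper's proof step for step. You write $Z_{\mathrm{alice}}$ as a function of $(\Psi, \vS, X^{(\ne i)}, Q)$ and apply \cref{fact:avg-minent}(c) and then (a), losing $\eps s_i/4$. You then pay $\log(1/\rho_1)$ for conditioning on $B_Q=1$ via \cref{fact:event}, and finish with \cref{fact:avg-minent}(b) at $r=2$.
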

\begin{proof}
First we lower bound $\tHmin\big(X^{(i)} \bigm| Z_{\mathrm{alice}}\big)$. The set $X^{(i)}$ is uniform on $\binom{B_i}{s_i}$ and independent of $(\vS, X^{(\ne i)}, Q)$, and $Z_{\mathrm{alice}}$ is a function of $(\Psi, \vS, X^{(\ne i)}, Q)$ by \cref{thm:too-good}(e). By \cref{fact:avg-minent}(c), and then (a) with $\abs{\Psi} \le \eps s_i/4$ (\cref{thm:too-good}(b)),
\[ \tHmin\big(X^{(i)} \bigm| Z_{\mathrm{alice}}\big) \;\ge\; \tHmin\big(X^{(i)} \bigm| \Psi, \vS, X^{(\ne i)}, Q\big) \;\ge\; \log\binom{n_i}{s_i} - \frac{\eps s_i}{4}. \]
Next we condition on $B_Q = 1$, an event of probability $\rho_1$. By \cref{fact:event},
\[ \tHmin\big(X^{(i)} \bigm| Z_{\mathrm{alice}}, B_Q {=} 1\big) \;\ge\; \tHmin\big(X^{(i)} \bigm| Z_{\mathrm{alice}}\big) - \log\frac{1}{\rho_1} \;\ge\; \log\binom{n_i}{s_i} - \frac{\eps s_i}{4} - \log\frac{1}{\rho_1}, \]
where the left side is the average min-entropy with respect to the distribution conditioned on $B_Q = 1$:
\[ \tHmin\big(X^{(i)} \bigm| Z_{\mathrm{alice}}, B_Q {=} 1\big) \;=\; -\log \E_{z \sim Z_{\mathrm{alice}}|_{B_Q=1}}\Big[2^{-\Hmin(X^{(i)} \,|\, Z = (z, 1))}\Big]. \]
Finally, by \cref{fact:avg-minent}(b) with $r = 2$, applied to this distribution, $\Hmin(X^{(i)} | Z = (z, 1))$ is at least the left side minus $2$ with probability at least $1 - 2^{-2} = 3/4$ over $z \sim Z_{\mathrm{alice}}|_{B_Q=1}$.
\end{proof}

\subsubsection{Correlation between \texorpdfstring{$S_Q$}{S\_Q} and \texorpdfstring{$X$}{X}}

Finally, we show that $S_Q^{(i)}$ and $X^{(i)}$ are weakly correlated after conditioning on $Z = (z, 1)$. Merlin's message $U$ is long, but it is shared by all $m$ query sets, so it reveals little about the random query set $S_Q$. Since $B_Q$ is a function of $(Z_{\mathrm{alice}}, U)$, conditioning on it adds no correlation beyond what $U$ already carries. This is again a standard technique from \cite{ko_adaptive_2020} and \cite{ko_omega_2026}.

\begin{proposition}[Correlation on disjoint answers]\label{prop:key}
\[ \E_{z \sim Z_{\mathrm{alice}}|_{B_Q=1}}\Big[I\big(S_Q^{(i)}; X^{(i)} \bigm| Z = (z, 1)\big)\Big] \;\le\; \frac{n^{-2/5}}{\rho_1}. \]
\end{proposition}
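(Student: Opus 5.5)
We will bound the unconditioned quantity $I\big(S_Q^{(i)}; X^{(i)} \bigm| Z_{\mathrm{alice}}, B_Q\big)$ and then pass to the event $B_Q=1$. By \cref{fact:basic}(c) and nonnegativity,
\[ I\big(S_Q^{(i)}; X^{(i)} \bigm| Z_{\mathrm{alice}}, B_Q\big) \;\ge\; \rho_1 \cdot \E_{z \sim Z_{\mathrm{alice}}|_{B_Q=1}}\Big[I\big(S_Q^{(i)}; X^{(i)} \bigm| Z = (z,1)\big)\Big]. \]
It therefore suffices to show $I\big(S_Q^{(i)}; X^{(i)} \bigm| Z_{\mathrm{alice}}, B_Q\big) \le n^{-2/5}$. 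Since $B_Q$ is a function of $(Z_{\mathrm{alice}}, U)$, we first add $U$ to the conditioning. The chain rule gives
\[ I\big(S_Q^{(i)}; X^{(i)} \bigm| Z_{\mathrm{alice}}, B_Q\big) \le I\big(S_Q^{(i)}; X^{(i)}, U \bigm| Z_{\mathrm{alice}}, B_Q\big) = I\big(S_Q^{(i)}; U \bigm| Z_{\mathrm{alice}}, B_Q\big) + I\big(S_Q^{(i)}; X^{(i)} \bigm| Z_{\mathrm{alice}}, U\big). \]
The equality uses that $(Z_{\mathrm{alice}}, U, B_Q)$ and $(Z_{\mathrm{alice}}, U)$ generate the same conditioning.

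\textbf{The term $I(S_Q^{(i)}; X^{(i)} \mid Z_{\mathrm{alice}}, U)$ vanishes.} Given $(\Phi, U, Q, S_{<Q}, \Aall)$, we argue that $S_Q$ and $X$ are conditionally independent. Before adding $\Aall$, we have $\vS$ independent of $(X, R_{\mathrm{Bob}})$ given $(U, X^{(\ne i)}, \Psi)$, provided $U$ is viewed appropriately. The subtlety is that $U$ depends jointly on $(\vS, X)$, so this is not literally a product structure. Alice's transcripts $\Aall$ are functions of $(\vS, \Phi)$, so they only restrict the Alice side once $\Phi$ is fixed.

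To make this rigorous, I would instead use a rectangle/Markov argument. Conditioned on $(\Phi, U, Q, S_{<Q})$, the remaining randomness is $(\vS, X^{(i)}, R_{\mathrm{Bob}})$, subject to the constraints that $U$ equals the memory function $U(\vS,X)$ and that $\Psi$ is consistent. This is the step I expect to be the main obstacle, since $U$ couples both sides. The standard fix from \cite{ko_adaptive_2020,ko_omega_2026} avoids claiming exact independence. It bounds the combined quantity via the chain rule as
\[ I\big(S_Q; X^{(i)}, U \bigm| \Phi, \Aall, Q, S_{<Q}\big) \le I\big(S_Q; X, U, R_{\mathrm{Bob}} \bigm| \Phi, \Aall, Q, S_{<Q}\big). \]
We then apply \cref{fact:drop-cond} to remove $\Aall$ and $\Phi$ from the conditioning. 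The justification is that $\Aall$ and $\Phi$ are functions of $(\vS, X, R_{\mathrm{Bob}})$ and of $(X, U, R_{\mathrm{Bob}})$ respectively, and information gained by conditioning on a function of one side is controlled. I would aim to reduce everything to $I(S_Q; U \mid X, R_{\mathrm{Bob}}, Q, S_{<Q})$, which is legitimate because $S_Q$ is independent of $(X, R_{\mathrm{Bob}})$.

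\textbf{The remaining bound.} By \cref{clm:direct-sum} with $V = (X, R_{\mathrm{Bob}})$ and $W = U$, this quantity equals $\frac1m I(\vS; U \mid X, R_{\mathrm{Bob}}) \le H(U)/m$. By \eqref{eq:HU} and $m=n$, this is at most $n^{-2/5}$. Dividing by $\rho_1$ then yields the proposition. The first term $I(S_Q^{(i)}; U \mid Z_{\mathrm{alice}}, B_Q)$ is absorbed by the same direct-sum bound, conditioned additionally on $X$ and $R_{\mathrm{Bob}}$.
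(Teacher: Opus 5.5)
Your eventual route, which bounds the combined quantity $I(S_Q^{(i)}; X^{(i)}, U \mid Z_{\mathrm{alice}})$ by $H(U)/m$ through the direct sum and absorbs $B_Q$ because it is a function of $(Z_{\mathrm{alice}}, U)$, is the paper's. As written, however, the proposal rests on a false claim and a misapplied fact.

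\emph{The vanishing claim is false.} You assert that $I(S_Q^{(i)}; X^{(i)} \mid Z_{\mathrm{alice}}, U)$ vanishes, and that $\vS$ is independent of $(X, R_{\mathrm{Bob}})$ given $(U, X^{(\ne i)}, \Psi)$. Both fail. $U$ holds the final contents of the cells written in epochs $i,\dots,1$, and the data structure may fill these with joint functions of $\vS$ and $X^{(i)}$. For instance, on inserting $x \in B_i$ it may record whether $x \in S_1$. Conditioning on $U$ then genuinely correlates the two sides. This term is only small, being at most $I(S_Q^{(i)}; X^{(i)}, U \mid Z_{\mathrm{alice}})$.

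The two-term split is also unnecessary; the paper writes directly
\[ I\big(S_Q^{(i)}; X^{(i)} \bigm| Z_{\mathrm{alice}}, B_Q\big) \;\le\; I\big(S_Q^{(i)}; X^{(i)}, U, B_Q \bigm| Z_{\mathrm{alice}}\big) \;=\; I\big(S_Q^{(i)}; X^{(i)}, U \bigm| Z_{\mathrm{alice}}\big). \]
If you keep the split, ``absorbed'' must mean $I(S_Q^{(i)}; U \mid Z_{\mathrm{alice}}, B_Q) \le I(S_Q^{(i)}; U \mid Z_{\mathrm{alice}})$, again because $B_Q$ is a function of $(Z_{\mathrm{alice}}, U)$. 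The two terms then sum to at most $I(S_Q^{(i)}; X^{(i)}, U \mid Z_{\mathrm{alice}})$. Bounding each term separately by $H(U)/m$ yields only $2n^{-2/5}/\rho_1$ from \eqref{eq:HU}.

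\emph{The order of your reductions matters.} You cannot remove $\Aall$ with \cref{fact:drop-cond} while the information is still about $S_Q$. With conditioning $(\Phi, Q, S_{<Q})$, its hypothesis $I(X, U, R_{\mathrm{Bob}}; \Aall \mid S_Q, \Phi, Q, S_{<Q}) = 0$ fails, since $\Aall$ depends on $S_{>Q}$, on which $U$ also depends. For a single coordinate, conditioning on a joint function can create information: for independent bits, $I(S_1; S_2) = 0$ but $I(S_1; S_2 \mid S_1 \oplus S_2) = 1$. The averaged inequality you want does hold, but only because \cref{clm:direct-sum} turns both sides into $\frac1m$ times the corresponding quantities for $\vS$.

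The relevant property is not that $\Aall$ is a function of $(\vS, X, R_{\mathrm{Bob}})$, but that it is a function of $(\vS, \Phi)$. That property can be used only after \cref{clm:direct-sum}, with $V = (\Phi, \Aall)$, has converted the quantity into $\frac1m I(\vS; X^{(i)}, U \mid \Phi, \Aall)$. This is why the paper applies the direct sum first and then drops $\Aall$ with $A = \vS$.

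Similarly, $\Phi$ cannot be removed by \cref{fact:drop-cond}, whose hypothesis fails because $\Phi$ is a function of $W := (X, U, R_{\mathrm{Bob}})$. Remove it instead by the chain rule: $I(\vS; W \mid \Phi) \le I(\vS; W, \Phi) = I(\vS; W)$. In this order you reach $\frac1m I(\vS; U \mid X, R_{\mathrm{Bob}}) \le H(U)/m \le n^{-2/5}$, as intended.
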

\begin{proof}
We first bound the correlation before conditioning on $B_Q$. Since $S_Q^{(i)}$ is part of $S_Q$, the chain rule (\cref{fact:chain}) and \cref{clm:direct-sum} give
\[ I\big(S_Q^{(i)}; X^{(i)}, U \bigm| Z_{\mathrm{alice}}\big) \;\le\; I\big(S_Q; X^{(i)}, U \bigm| \Phi, \Aall, Q, S_{<Q}\big) \;=\; \frac1m\, I\big(\vS; X^{(i)}, U \bigm| \Phi, \Aall\big). \]
Since $\Aall$ is a function of $(\vS, \Phi)$, \cref{fact:drop-cond} lets us drop it from the conditioning:
\[ I\big(\vS; X^{(i)}, U \bigm| \Phi, \Aall\big) \;\le\; I\big(\vS; X^{(i)}, U \bigm| \Phi\big). \]
Recall that $\Phi = (X^{(\ne i)}, \Psi)$, where $\Psi$ is a function of $(U, R_{\mathrm{Bob}})$ (\cref{thm:too-good}(e)). Since $\vS$ is independent of $X$, and $R_{\mathrm{Bob}}$ is independent of $(\vS, X)$, the chain rule and \cref{fact:basic}(e) and (f) give
\[ I\big(\vS; X^{(i)}, U \bigm| X^{(\ne i)}, \Psi\big) \;\le\; I\big(\vS; X^{(i)}, U, R_{\mathrm{Bob}} \bigm| X^{(\ne i)}\big) \;=\; I\big(\vS; U \bigm| X^{(i)}, X^{(\ne i)}\big) \;\le\; H(U). \]
By \eqref{eq:HU}, we conclude that $I(S_Q^{(i)}; X^{(i)}, U | Z_{\mathrm{alice}}) \le H(U)/m \le n^{-2/5}$, since $m = n$. Next we condition on $B_Q$. Since $B_Q$ is a function of $(Z_{\mathrm{alice}}, U)$, the chain rule gives
\[ I\big(S_Q^{(i)}; X^{(i)} \bigm| Z_{\mathrm{alice}}, B_Q\big) \;\le\; I\big(S_Q^{(i)}; X^{(i)}, U, B_Q \bigm| Z_{\mathrm{alice}}\big) \;=\; I\big(S_Q^{(i)}; X^{(i)}, U \bigm| Z_{\mathrm{alice}}\big) \;\le\; n^{-2/5}. \]
Finally, by \cref{fact:basic}(b) and (c),
\begin{align*}
I\big(S_Q^{(i)}; X^{(i)} \bigm| Z_{\mathrm{alice}}, B_Q\big) \;&=\; \E_{(z,b) \sim (Z_{\mathrm{alice}}, B_Q)}\Big[I\big(S_Q^{(i)}; X^{(i)} \bigm| Z = (z, b)\big)\Big] \\
\;&\ge\; \rho_1 \cdot \E_{z \sim Z_{\mathrm{alice}}|_{B_Q=1}}\Big[I\big(S_Q^{(i)}; X^{(i)} \bigm| Z = (z, 1)\big)\Big]. \qedhere
\end{align*}
\end{proof}

\subsubsection{The main contradiction}

Combining the three bounds gives a monochromatic near-rectangle after conditioning $S_Q^{(i)}$ on an event of constant probability.

\begin{lemma}[A monochromatic near-rectangle]\label{lem:near-rect}
If $\rho_1 \ge n^{-1/6}$, then there are a value $(z, 1)$ of $Z$ and a set $E \subseteq \binom{B_i}{s_i}$ with $\Pr\big[S_Q^{(i)} \in E \bigm| Z = (z, 1)\big] \ge 3/4$ such that, writing $\cE$ for the event that $Z = (z, 1)$ and $S_Q^{(i)} \in E$,
\begin{enumerate}[label=(\roman*)]
  \item $\displaystyle \Hmin\big(S_Q^{(i)} \bigm| \cE\big) \;\ge\; \log\binom{n_i}{s_i} - \frac{\eps s_i}{2}$ \quad and \quad $\displaystyle \Hmin\big(X^{(i)} \bigm| \cE\big) \;\ge\; \log\binom{n_i}{s_i} - \frac{\eps s_i}{2}$, and
  \item $I\big(S_Q^{(i)}; X^{(i)} \bigm| \cE\big) \;\le\; n^{-1/5}$.
\end{enumerate}
In particular, the distribution of $(S_Q^{(i)}, X^{(i)})$ given $\cE$ is a monochromatic $(\eps s_i/2,\, n^{-1/5})$-near-rectangle on $\binom{B_i}{s_i} \times \binom{B_i}{s_i}$.
\end{lemma}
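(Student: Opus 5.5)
We combine \cref{prop:S-entropy,prop:X-leak,prop:key} by a union bound over $z \sim Z_{\mathrm{alice}}|_{B_Q=1}$, and then check how much each guarantee degrades when we additionally condition on $S_Q^{(i)} \in E$. Assume $\rho_1 \ge n^{-1/6}$.

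\textbf{Choosing $z$.} By Markov's inequality applied to \cref{prop:key}, with probability at least $3/4$ over $z$ we have
\[
I\big(S_Q^{(i)}; X^{(i)} \bigm| Z=(z,1)\big) \le \frac{4n^{-2/5}}{\rho_1} \le 4n^{-2/5+1/6}.
\]
Each of \cref{prop:S-entropy,prop:X-leak} also holds with probability at least $3/4$. A union bound over the three failure events, each of probability at most $1/4$, leaves probability at least $1/4$, so some $z$ satisfies all three. Fix such a $z$ and the set $E$ given by \cref{prop:S-entropy}.

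\textbf{Query side.} The min-entropy bound for $S_Q^{(i)}$ given $\cE$ is exactly the one in \cref{prop:S-entropy}, namely $\log\binom{n_i}{s_i} - 32n^{1/5}$. Since $s_i \ge n^{1/3}$ for $i \in \cI$ and $\eps$ is a constant, $32n^{1/5} \le \eps s_i/2$ for large $n$.

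\textbf{Update side and correlation.} Work in the distribution conditioned on $Z=(z,1)$. There the event $S_Q^{(i)}\in E$ is a function of $A := S_Q^{(i)}$ and has probability $\pi \ge 3/4$. So \cref{fact:one-side} applies with $B := X^{(i)}$.
\begin{itemize}
\item It gives $\Hmin(X^{(i)}\mid\cE) \ge \log\binom{n_i}{s_i} - \eps s_i/4 - \log(1/\rho_1) - 2 - \log(4/3)$. Here $\log(1/\rho_1) \le (1/6)\log n$, and this together with the constants is at most $\eps s_i/4$, since $s_i \ge n^{1/3}$.
\item It also gives $I(S_Q^{(i)}; X^{(i)}\mid\cE) \le \frac43\cdot 4n^{-2/5+1/6} \le n^{-1/5}$, because $-2/5+1/6 = -7/30 < -1/5$.
\end{itemize}

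\textbf{Monochromaticity.} Every pair in the support of the distribution given $\cE$ is disjoint. Indeed, $\cE$ implies $B_Q=1$, so $S_Q\cap X=\emptyset$ by \cref{thm:too-good}(a), and in particular $S_Q^{(i)}\cap X^{(i)}=\emptyset$. The deficits are at most $\eps s_i/2$ by (i), so \cref{def:rect} gives a monochromatic $(\eps s_i/2, n^{-1/5})$-near-rectangle.

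\textbf{Main obstacle.} The delicate point is bookkeeping rather than any hard step. The $\log(1/\rho_1)$ loss and the smoothing loss $32n^{1/5}$ must both fit inside the slack $\eps s_i/4$, and the correlation bound must survive both the $1/\rho_1$ blow-up from \cref{prop:key} and the extra conditioning on $E$. Both work because $\rho_1 \ge n^{-1/6}$ and $s_i \ge n^{1/3}$.
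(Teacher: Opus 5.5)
Your proposal is correct and follows the paper's proof essentially step for step. It uses the same union bound over three failure events of probability at most $1/4$ each, from \cref{prop:S-entropy,prop:X-leak,prop:key} (with Markov's inequality for the last), and the same application of \cref{fact:one-side} within $Z=(z,1)$ with $\pi\ge 3/4$, where your $2+\log(4/3)$ plays the role of the paper's $3$; monochromaticity then follows from \cref{thm:too-good}(a).
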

\begin{proof}
Draw $z \sim Z_{\mathrm{alice}}|_{B_Q=1}$. Each of the following fails with probability at most $1/4$:
\begin{enumerate}[label=(\alph*)]
  \item there is a set $E$ as in \cref{prop:S-entropy};
  \item $\Hmin(X^{(i)} | Z = (z, 1)) \ge \log\binom{n_i}{s_i} - \eps s_i/4 - \log(1/\rho_1) - 2$, by \cref{prop:X-leak};
  \item $I(S_Q^{(i)}; X^{(i)} | Z = (z, 1)) \le 4\, n^{-2/5}/\rho_1 \le 4\, n^{-7/30}$, by \cref{prop:key} and Markov's inequality.
\end{enumerate}
So some $z$ satisfies all three. Fix such $z$ and $E$, and let $\cE$ be the event that $Z = (z, 1)$ and $S_Q^{(i)} \in E$. Given $Z = (z, 1)$, the event $S_Q^{(i)} \in E$ is a function of $S_Q^{(i)}$ and has probability at least $3/4$, so \cref{fact:one-side} applies to it.
\begin{itemize}
  \item \emph{Min-entropy of $S_Q^{(i)}$.} By (a), $\Hmin(S_Q^{(i)} | \cE) \ge \log\binom{n_i}{s_i} - 32\, n^{1/5} \ge \log\binom{n_i}{s_i} - \eps s_i/2$, since $s_i \ge n^{1/3}$.
  \item \emph{Min-entropy of $X^{(i)}$.} By (b) and \cref{fact:one-side}, $\Hmin(X^{(i)} | \cE) \ge \log\binom{n_i}{s_i} - \eps s_i/4 - \log(1/\rho_1) - 3 \ge \log\binom{n_i}{s_i} - \eps s_i/2$, since $\log(1/\rho_1) \le (\log n)/6$.
  \item \emph{Correlation.} By (c) and \cref{fact:one-side}, $I(S_Q^{(i)}; X^{(i)} | \cE) \le \frac43 \cdot 4\, n^{-7/30} \le n^{-1/5}$.
\end{itemize}
Finally, by \cref{thm:too-good}(a), $S_Q \cap X = \emptyset$ whenever $B_Q = 1$, so the distribution of $(S_Q^{(i)}, X^{(i)})$ given $\cE$ is supported on disjoint pairs.
\end{proof}

\paragraph{Passing to a product distribution.}\phantomsection\label{sec:step3}
As in \cite{ko_omega_2026}, we replace this distribution by the product of its two marginals and apply \cref{lem:rect}.

\begin{corollary}[A nearly monochromatic rectangle]\label{cor:rect}
If $\rho_1 \ge n^{-1/6}$, let $\cE$ be the event of \cref{lem:near-rect}. Then
\[ R \;:=\; S_Q^{(i)}|_{\cE} \otimes X^{(i)}|_{\cE} \]
is an $(\eps s_i/2)$-rectangle on $\binom{B_i}{s_i} \times \binom{B_i}{s_i}$, and
\[ \Pr_{(S, X) \sim R}\big[S \cap X \ne \emptyset\big] \;\le\; n^{-1/10}. \]
\end{corollary}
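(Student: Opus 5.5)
The plan is to apply \cref{lem:rect} directly to the conditional distribution produced by \cref{lem:near-rect}. Assume $\rho_1 \ge n^{-1/6}$, and fix the value $(z,1)$ and the set $E$ given by \cref{lem:near-rect}, with $\cE$ the corresponding event. Let $\mu$ be the distribution of $(S_Q^{(i)}, X^{(i)})$ given $\cE$, viewed on $\cS \times \cS$ with $\cS := \binom{B_i}{s_i}$. By \cref{lem:near-rect}, $\mu$ is a monochromatic, i.e.\ $0$-monochromatic, $(\eps s_i/2,\, n^{-1/5})$-near-rectangle. Its marginals are exactly $\mu^A = S_Q^{(i)}|_{\cE}$ and $\mu^B = X^{(i)}|_{\cE}$, so $R = \mu^A \otimes \mu^B$ is the product in \cref{lem:rect}.

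\Cref{lem:rect} with $\delta = 0$, $d = \eps s_i/2$ and $\kappa = n^{-1/5}$ then gives two conclusions. First, $R$ is an $(\eps s_i/2)$-rectangle, since both marginals have deficit at most $\eps s_i/2$ by \cref{lem:near-rect}(i). Second, $R$ is $\sqrt{(\ln 2/2)\, n^{-1/5}}$-monochromatic.

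It remains to bound this error by $n^{-1/10}$. We have $\sqrt{(\ln 2/2)\, n^{-1/5}} = \sqrt{\ln 2/2}\cdot n^{-1/10} \le n^{-1/10}$, since $\ln 2/2 < 1$. This gives $\Pr_{(S,X)\sim R}[S\cap X\neq\emptyset] \le n^{-1/10}$.

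No step here is a real obstacle. The only point needing care is that the definitions line up. The near-rectangle in \cref{lem:near-rect} is defined via min-entropy of each marginal of the conditional distribution, which matches the deficit condition in \cref{def:rect}(b). Its mutual information bound is stated under the same conditioning $\cE$, so \cref{lem:rect} applies verbatim. Monochromaticity comes from \cref{thm:too-good}(a), which is already packaged into \cref{lem:near-rect}.
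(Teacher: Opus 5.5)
Your proposal is correct and follows the paper's proof exactly: you read off from \cref{lem:near-rect} that the conditional distribution given $\cE$ is a monochromatic $(\eps s_i/2,\, n^{-1/5})$-near-rectangle with marginals $S_Q^{(i)}|_{\cE}$ and $X^{(i)}|_{\cE}$. You then apply \cref{lem:rect} with $\delta = 0$ and bound $\sqrt{(\ln 2/2)\, n^{-1/5}} \le n^{-1/10}$.
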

\begin{proof}
By \cref{lem:near-rect}, the distribution of $(S_Q^{(i)}, X^{(i)})$ given $\cE$ is a monochromatic $(\eps s_i/2,\, n^{-1/5})$-near-rectangle, and its marginals are $S_Q^{(i)}|_{\cE}$ and $X^{(i)}|_{\cE}$. By \cref{lem:rect} with $\delta = 0$, $R$ is an $(\eps s_i/2)$-rectangle with
\[ \Pr_{(S, X) \sim R}\big[S \cap X \ne \emptyset\big] \;\le\; \sqrt{\frac{\ln 2}{2}\, n^{-1/5}} \;\le\; n^{-1/10}. \qedhere \]
\end{proof}

\paragraph{Applying the corruption bound.}\phantomsection\label{sec:step4}
A corruption bound shows that no such rectangle exists: two independent sparse sets with high min-entropy intersect with noticeable probability. For sets of size $\sqrt N$ in a universe of size $N$, this is the rectangle bound of Babai, Frankl and Simon \cite{babai_complexity_1986}. We need it for sparser sets, and prove it in \cref{sec:corruption}.

\begin{corollary}[High min-entropy rectangles are not monochromatic]\label{cor:product}
Let $\eps := 1/40$ and $c := 1/100$. Let $s \ge 200$, let $N$ be a multiple of $s^2$, let $\cS := \binom{[N]}{s}$, and let $\nu_A, \nu_B$ be distributions on $\cS$ with
\[ \Hmin(\nu_A),\, \Hmin(\nu_B) \;\ge\; \log\abs\cS - (\eps s - 1). \]
Then $\Pr_{a \sim \nu_A,\, b \sim \nu_B}[a \cap b \ne \emptyset] \ge c s^2/N$.
\end{corollary}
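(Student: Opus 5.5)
The plan is to derive this from the combinatorial corruption bound of \cref{sec:corruption}, \cref{lem:corruption}, by flat decomposition and averaging. I take that lemma to be the statement for uniform distributions on large rectangles: if $\mathcal{A},\mathcal{B}\subseteq\cS$ both have size at least $2^{-\eps s}\abs{\cS}$, then a uniform pair from $\mathcal{A}\times\mathcal{B}$ intersects with probability at least $c s^2/N$. The present corollary extends this from flat rectangles to arbitrary product distributions with high min-entropy marginals. Product distributions are exactly convex combinations of flat rectangles, so the extension should cost only the rounding slack of one bit built into the hypothesis $\Hmin\ge\log\abs\cS-(\eps s-1)$.

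\emph{Step 1: choose the flat size.} Set $d:=\eps s-1$ and $K:=\lfloor 2^{-d}\abs{\cS}\rfloor$. Then $\nu_A\otimes\nu_B$ is a $d$-rectangle in the sense of \cref{def:rect}. Since $s\ge 200$ and $N\ge s^2$, we have $\abs{\cS}=\binom{N}{s}\gg 2^{\eps s}$. Hence $K\ge 1$, and
\[ K \;\ge\; 2^{-d}\abs{\cS}-1 \;\ge\; 2^{-\eps s}\abs{\cS}, \]
because $2^{-d}\abs{\cS}=2\cdot 2^{-\eps s}\abs{\cS}$ and $2^{-\eps s}\abs{\cS}\ge 1$.

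\emph{Step 2: decompose and average.} By \cref{cor:flat-rect}, write
\[ \nu_A\otimes\nu_B=\sum_{j,k}p_jq_k\,\cU_{\mathcal{A}_j\times\mathcal{B}_k} \]
with $\abs{\mathcal{A}_j}=\abs{\mathcal{B}_k}=K\ge 2^{-\eps s}\abs{\cS}$. Each term satisfies the hypothesis of \cref{lem:corruption}, so it assigns mass at least $c s^2/N$ to intersecting pairs. The probability of intersection is linear in the distribution, and the weights sum to $1$, so the mixture also has intersection probability at least $c s^2/N$.

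\emph{Main obstacle: the combinatorial lemma itself.} All the difficulty sits in \cref{lem:corruption}, which is the sparse analogue of the Babai--Frankl--Simon bound. My first attempt would reduce to the classical BFS setting using the divisibility assumption. Partition $[N]$ into $N/s^2$ blocks of size $s^2$, and restrict to sets that meet each block in a controlled way. Then use the entropy deficit $\eps s$ to find a block in which both families, conditioned on the remaining coordinates, are still large families of roughly $\sqrt{\text{universe}}$-size subsets. If that works, BFS gives a constant intersection probability inside that block.

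The weakness of this route is that a random $s$-set meets only about $s^3/N$ elements of each block. That is far from the BFS regime, so the reduction does not obviously apply. If it fails, I would instead redo the BFS proof directly. First, a large family $\mathcal{A}$ contains about $\Omega(s)$ sets whose union has size $\Omega(N)$; this follows from a counting argument using $\abs{\mathcal{A}}\ge 2^{-\eps s}\binom{N}{s}$, since families concentrated on small unions have few members. Second, a large family $\mathcal{B}$ cannot mostly avoid such a union, because $\binom{N-\Omega(N)\cdot(\text{fraction})}{s}$ is too small; tuning $\eps=1/40$ against these estimates is what yields the intersection mass $c s^2/N$.
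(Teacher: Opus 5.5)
Your proof is correct and is the paper's argument: $\nu_A\otimes\nu_B$ is an $(\eps s-1)$-rectangle, you choose an integer $K$ with $2^{-\eps s}\abs{\cS}\le K\le 2^{1-\eps s}\abs{\cS}$ (your $K=\lfloor 2^{1-\eps s}\abs{\cS}\rfloor$ works because $2^{-\eps s}\abs{\cS}\ge 1$), decompose via \cref{cor:flat-rect}, apply \cref{lem:corruption} to each flat rectangle, and average by linearity.

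Your closing discussion of how to prove \cref{lem:corruption} is not needed here, since the paper proves that lemma separately. If you pursue it, note two corrections to your sketch. First, the chain must have $\Theta(N/s)$ rows, not order $s$: order-$s$ rows of size $s$ cover only $O(s^2)$ elements, which is not a constant fraction of $[N]$ when $N\gg s^2$. Second, the final union bound must be taken over $\lfloor s/6\rfloor$ groups of $N/s^2$ pieces each, rather than over the pieces themselves.
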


\noindent The following lemma completes the proof of the communication lower bound.

\begin{lemma}[Communication lower bound]\label{lem:comm-lb}
For every $i \in \cI$, the protocol $G_i$ of \cref{thm:too-good} has $\rho_1 < n^{-1/6}$.
\end{lemma}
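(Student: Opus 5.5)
We argue by contradiction: fix $i\in\cI$ and suppose $\rho_1\ge n^{-1/6}$. Then \cref{lem:near-rect} applies and yields a value $(z,1)$ of $Z$ and an event $\cE$. By \cref{cor:rect}, the product distribution $R = S_Q^{(i)}|_{\cE}\otimes X^{(i)}|_{\cE}$ is an $(\eps s_i/2)$-rectangle on $\binom{B_i}{s_i}\times\binom{B_i}{s_i}$. Moreover, $R$ assigns probability at most $n^{-1/10}$ to intersecting pairs. We will contradict this with \cref{cor:product}.

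To apply \cref{cor:product}, we identify $B_i$ with $[N]$, where $N=n_i=\ell\beta^i$ and $s=s_i=\beta^{i/2}$. The hypotheses then need to be checked one by one.
\begin{itemize}
\item Since $s_i^2=\beta^i$, the integer $N=\ell s_i^2$ is a multiple of $s^2$.
\item By \cref{clm:params}(b), $s_i\ge n^{1/3}\ge 200$ for large $n$.
\item Each marginal of $R$ has deficit at most $\eps s_i/2$. This is at most $\eps s_i - 1$ because $\eps s_i/2\ge 1$, which holds as $s_i\ge n^{1/3}$ and $\eps=1/40$.
\end{itemize}
\Cref{cor:product} then gives
\[ \Pr_R[S\cap X\ne\emptyset]\;\ge\; \frac{c\,s_i^2}{n_i} \;=\; \frac{c}{\ell} \;\ge\; \frac{1}{100\log n}, \]
using $\ell\le\log n$ from \cref{clm:params}(a). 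For large $n$ this exceeds $n^{-1/10}$, which contradicts \cref{cor:rect}. Hence $\rho_1<n^{-1/6}$.

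All the substantive work sits in the earlier results: the near-rectangle of \cref{lem:near-rect}, the passage to a product in \cref{cor:rect}, and the corruption bound of \cref{cor:product}, which is proved separately in \cref{sec:corruption}. Granting these, the only step needing care is the parameter matching. The two points to verify are that the deficit $\eps s_i/2$ fits under $\eps s-1$, and that $c/\ell$ dominates $n^{-1/10}$; this is where $\ell=O(\log n)$ matters. I expect no real obstacle beyond these checks.
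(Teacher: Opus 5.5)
Your proposal is correct and follows essentially the same route as the paper's proof: argue by contradiction, take the nearly monochromatic $(\eps s_i/2)$-rectangle from \cref{cor:rect}, and apply \cref{cor:product} with $N=n_i=\ell s_i^2$ and $s=s_i$. The parameter checks match the paper's ($s_i\ge n^{1/3}\ge 200$, $\eps s_i/2\le \eps s_i-1$, and $c/\ell\ge c/\log n>n^{-1/10}$).
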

\begin{proof}
Suppose, for contradiction, that $\rho_1 \ge n^{-1/6}$. By \cref{cor:rect}, $R$ is an $(\eps s_i/2)$-rectangle on $\binom{B_i}{s_i} \times \binom{B_i}{s_i}$ that puts mass at most $n^{-1/10}$ on intersecting pairs. On the other hand, apply \cref{cor:product} with $N := n_i$ and $s := s_i$, identifying $B_i$ with $[n_i]$; this is allowed since $n_i = \ell s_i^2$, $s_i \ge n^{1/3} \ge 200$ and $\eps s_i/2 \le \eps s_i - 1$. So $R$ puts mass at least $c s_i^2/n_i = c/\ell \ge c/\log n > n^{-1/10}$ on intersecting pairs, a contradiction.
\end{proof}

\begin{proofof}{\cref{thm:main}}
Recall that $\ell = \Theta(\log n/\log(t_u w))$ by \cref{clm:params}(a). Let $c_0:=1/192$, and suppose, for contradiction, that $t_q \le c_0\ell^2$. Since $t_u w \le n^{1/95}$, \cref{clm:params}(a) gives $\ell\ge20$, and hence $\abs{\cI}\ge(\ell-4)/3\ge\ell/4$. Hence, using $\sum_{i \in \cI}\tau_i\le \E_{\cD,Q}[T(Q)]\le t_q$, some $i\in\cI$ satisfies $\tau_i\le4c_0\ell$. As in \cref{cor:too-good}, $\log(1/p)\le2\log(t_u w)$, and therefore
\[
 p^{8\tau_i} \;\ge\; 2^{-16\tau_i\log(t_u w)}
 \;\ge\; 2^{-64c_0\ell\log(t_u w)}
 \;\ge\; n^{-16c_0}=n^{-1/12},
\]
where the last inequality uses $4\ell\log(t_u w)=\ell\log\beta\le\log n$. By \cref{thm:too-good}(d),
\[
\rho_1 \;\ge\; \frac18 n^{-1/12}-e^{-n^{1/4}/3} \;>\; n^{-1/6},
\]
contradicting \cref{lem:comm-lb}. Thus $t_q>c_0\ell^2$, and \cref{clm:params}(a) gives the claimed $\Omega((\log n/\log(t_u w))^2)$ bound. For the bound on $\max(t_u,t_q)$, if $t_u\ge\log^2 n$ there is nothing to prove; otherwise $\log(t_u w)=O(\log\log n)$.
\end{proofof}

\section{A corruption bound for sparse sets}\label{sec:corruption}

We prove a combinatorial bound: two independent sparse sets, each drawn from a sufficiently large family, intersect with noticeable probability. The case $N=s^2$ is the rectangle bound of Babai, Frankl and Simon \cite{babai_complexity_1986}. We extend their argument to $s$-sets in a universe of size $N$, where $N$ is a multiple of $s^2$.

Write $\cS:=\binom{[N]}s$, let $\cA,\cB\subseteq\cS$, and let $\cU_{\cA}$ and $\cU_{\cB}$ be the uniform distributions on these families. We write $\cU$ for the uniform distribution on $\cS$.

\begin{lemma}[Corruption for sparse sets]\label{lem:corruption}
Let $\eps := 1/40$ and $c := 1/100$. Let $s \ge 200$, and let $N$ be a multiple of $s^2$. If $\cA, \cB \subseteq \cS$ satisfy $\abs \cA, \abs \cB \ge 2^{-\eps s}\abs\cS$, then
\begin{equation*}
    \Pr_{\substack{ a \sim\cU_{\cA} \\ b \sim \cU_{\cB}}}[a \cap b \ne \emptyset] \;\ge\; c\cdot\frac{s^2}{N}
\end{equation*}
\end{lemma}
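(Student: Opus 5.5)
We adapt the Babai--Frankl--Simon argument. Write $N = L s^2$ with $L := N/s^2$ an integer. The plan is to reduce to the BFS setting by restricting both families to a random sub-universe of size $s^2$. We pick a uniformly random partition of $[N]$ into $L$ consecutive-in-a-random-order blocks $P_1,\dots,P_L$, each of size $s^2$. Then we condition on how many elements of $a$ and of $b$ fall into each part.

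\textbf{Step 1: reduction to a single block.} For a fixed partition, call $a$ \emph{good on $P_j$} if $\abs{a\cap P_j}$ is close to its mean $s/L$. Since $s/L$ may be below $1$, instead we use the other route. We choose a random subset $W\subseteq[N]$ of size $s^2$ and compare with disjointness restricted to $W$.

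Concretely, $\Pr[a\cap b\ne\emptyset]\ge \Pr[a\cap b\cap W\ne\emptyset]$ for every $W$. Averaging over a uniform $W$, it suffices to lower bound the probability that the traces $a\cap W$ and $b\cap W$ intersect. Under $\cU$, $\abs{a\cap W}$ is hypergeometric with mean $s/L$. So the traces are sparse sets in a universe of size $s^2$, and the expected intersection is about $1/L = s^2/N$. This is the target order.

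\textbf{Step 2: the main density argument, done directly.} Rather than tracing, we follow BFS's proof structure on $[N]$ itself. Fix $b$-side and define, for $a\in\cA$, the \emph{spread} condition: for a random $x\in a$, the family $\cB$ should not be concentrated on sets avoiding many elements of $a$. BFS's key step is the following. If $\cA$ is large, then there exist $a_1,\dots,a_k\in\cA$ (with $k\approx \eps' s$) whose union $Y$ covers many points, namely $\abs{Y}\ge \Omega(s\cdot k)$. Moreover, every $b$ disjoint from all of them lies in $\binom{[N]\setminus Y}{s}$. That family has measure at most $(1-\abs Y/N)^s\le \exp(-s\abs{Y}/N)$.

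We therefore aim for the following counting statement. Suppose $\Pr[a\cap b\ne\emptyset]<cs^2/N$. Then, by Markov, at least half of $\cB$ (a family $\cB'$, still of density $\ge 2^{-\eps s-1}$) consists of $b$ whose intersection probability with a random $a\in\cA$ is below $2cs^2/N$. Pick $a_1,\dots,a_k$ greedily: each new $a_j$ should add at least $s/2$ fresh points to $Y$. This is possible as long as $\abs{Y}\le sN/(\text{const}\cdot s)$. The number of $s$-sets sharing more than $s/2$ points with $Y$ is at most $\binom{\abs Y}{s/2}\binom{N}{s/2}$, which is a $2^{-\Omega(s)}$ fraction of $\cS$ when $\abs Y\le N/(\text{const})$. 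That fraction is smaller than $\abs\cA/\abs\cS\ge 2^{-\eps s}$ for $\eps=1/40$.

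The points of $Y$ must be charged against $\cB'$ via a double count. Since the average $b\in\cB'$ meets only a $2cs^2/N$ fraction of the chosen $a_j$'s, we randomize the choice of the $a_j$ within the admissible part of $\cA$. Then the expected number of $a_j$ met by $b$ is at most about $k\cdot 4cs^2/N$. A positive fraction of $\cB'$ is therefore disjoint from a subfamily of the $a_j$ whose union has size $\Omega(N/s)$. Such $b$ lie in a family of measure $\exp(-\Omega(s))\cdot 2^{H(\cdot)}$ combinatorially. After a union bound over which subfamily, this is below $2^{-\eps s-1}$. That contradicts the density of $\cB'$.

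\textbf{Main obstacle.} The delicate step is the BFS ``many points of large degree'' lemma transplanted to density $s^2/N$. We need $k=\Theta(N/s^2)\cdot\Theta(s)$ sets $a_j$, a number that grows with $L$. The union bound over subfamilies in the final step must then still beat $2^{-\eps s}$. I would first try to keep the union bound over subfamilies of a fixed chosen $k$-tuple only, costing $2^{k}$. Then $\abs Y\approx ks/2$ gives measure $\exp(-s\cdot ks/(2N))=\exp(-k/(2L))$. This beats $2^k$ only if $L$ is constant, so the choice must instead be made among the $L$ blocks of a random partition. Concretely, for each block we run the original BFS argument with universe size $s^2$ and traces, and combine the blocks by independence across blocks. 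This yields a per-block intersection probability of order $1/L$ with constant $c=1/100$. Matching the constants $\eps=1/40$ and $s\ge200$ comes from BFS's estimates with routine binomial bounds.
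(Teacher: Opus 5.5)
\textbf{Where your plan agrees with the paper.} Your skeleton is the right one up to the last step. You build a greedy chain of rows, each bringing about $s/2$ fresh elements while the union of earlier rows stays below a constant fraction of $N$. You use Markov to pass to columns that meet few chain rows. You bound avoidance by $(1-\abs V/N)^s\le e^{-s\abs V/N}$. Your order of Markov steps (on $\cB$ first, then a randomized chain) can be made to work with slightly worse constants. The paper's order is simpler: keep the rows $\cA'$ that meet at most a $2\delta$ fraction of $\cB$, build the chain deterministically inside $\cA'$, and then use linearity of expectation and Markov to get $\cB'$. You also rightly see that $\Theta(N/s)$ rows are needed, so that the fresh pieces cover $\Omega(N)$ points. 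Your phrase ``a subfamily whose union has size $\Omega(N/s)$'' would only give $e^{-\Omega(1)}$, not $e^{-\Omega(s)}$.

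\textbf{The gap: the final counting step.} You flag this step as the obstacle but do not resolve it, and the fix you propose does not work. Let $L:=N/s^2$ and take $t=\Theta(Ls)$ rows. A column of $\cB'$ meets at most $4\delta t=4ct/L$ of them. A union bound over which rows it meets costs about $2^{H_{\mathrm b}(4c/L)t}$, which is $\Theta\big((c/L)\log(L/c)\big)\,t$ bits. Avoiding the remaining rows saves only about $t/(2L)$ nats. So the estimate loses a factor $\Theta(\log L)$ and fails for large $L$; in the application $L=\ell$ grows with $n$.

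Your remedy, per-block BFS on a random partition into blocks of size $s^2$ combined by independence, fails for two reasons:
\begin{itemize}
  \item On such a block, the trace of a uniform $s$-set has expected size $s/L$, possibly below $1$. BFS needs sets of size about the square root of the universe, so it does not apply per block.
  \item $\cU_{\cA}$ for an arbitrary family has no independence across blocks.
\end{itemize}
A per-block intersection probability of order $1/L$ would also make the expected number of intersecting blocks constant. For uniform families that expectation is $s^2/N=1/L$.

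\textbf{The missing idea: group the pieces the other way.} Take $t=(N/s^2)\lfloor s/6\rfloor$ rows, and partition the fresh pieces $D_1,\dots,D_t$ into $g=\lfloor s/6\rfloor$ groups $E_u$ of $N/s^2$ pieces each. Each group has at least $N/(2s)$ elements, so a uniform $s$-set avoids a given group with probability at most $e^{-1/2}$. This is exactly how it avoids a single piece when $N=s^2$. A column that meets $E_u$ meets some row of that group. Hence each $b\in\cB'$ meets at most $4\delta t=4cg=g/25$ groups. The union bound now runs over subsets of only $g$ groups:
\[ \abs{\cB'}\le 2^{H_{\mathrm b}(1/25)g}e^{-12g/25}\abs\cS\le 2^{-0.45g}\abs\cS. \]
For $g\ge s/6-1$ and $s\ge200$, this contradicts $\abs\cB\ge 2^{-s/40}\abs\cS$.
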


First we derive the min-entropy form used in Step~4 (\cref{cor:product}) by applying the flat decomposition in \cref{cor:flat-rect}. We restate the corollary for convenience.

\begin{repcorollary}{cor:product}
Let $s \ge 200$, let $N$ be a multiple of $s^2$, and let $\nu_A, \nu_B$ be distributions on $\cS$ with
\[ \Hmin(\nu_A),\, \Hmin(\nu_B) \;\ge\; \log\abs\cS - (\eps s - 1). \]
Then $\Pr_{a \sim \nu_A,\, b \sim \nu_B}[a \cap b \ne \emptyset] \ge c s^2/N$.
\end{repcorollary}
\begin{proof}
The product $\nu_A \otimes \nu_B$ is an $(\eps s - 1)$-rectangle. Since $\abs\cS \ge 2^{\eps s}$, some integer $K$ satisfies $2^{-\eps s}\abs\cS \le K \le 2^{1 - \eps s}\abs\cS$. By \cref{cor:flat-rect} with this $K$, the product is a convex combination of uniform distributions on combinatorial rectangles $\cA \times \cB$ with $\abs\cA = \abs\cB = K \ge 2^{-\eps s}\abs\cS$. By \cref{lem:corruption}, each of them puts mass at least $c s^2/N$ on intersecting pairs. Linearity gives the same bound for $\nu_A \otimes \nu_B$.
\end{proof}

The one-sided formulation is essential: an analogous lower bound on the probability of disjointness is false. If $\log(N/s) \le \eps s$, as in our application, then the family $\cA$ of sets that contain a fixed element has $\abs\cA = (s/N)\abs\cS \ge 2^{-\eps s}\abs\cS$, but all pairs in $\cA \times \cA$ intersect. This is why the main proof uses only runs whose answer is ``disjoint''.

\subsection{Counting facts}\label{sec:bfs-tools}

We use the following counting estimates, as in \cite{babai_complexity_1986}. Write $H_{\mathrm b}(\alpha) := -\alpha\log\alpha - (1-\alpha)\log(1-\alpha)$ for the binary entropy function.

\begin{fact}[Binomial estimates]\label{fact:binom}
For all integers $u \ge 0$ and $k \ge 1$, $\binom uk \le \big(\frac{eu}{k}\big)^k$. For every integer $g \ge 1$ and every $0 < \alpha \le 1/2$,
\[ \sum_{0 \le h \le \alpha g}\binom gh \;\le\; 2^{H_{\mathrm b}(\alpha)g}. \]
\end{fact}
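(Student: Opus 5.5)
Both estimates are standard, and I will prove them directly with elementary inequalities, with no appeal to earlier results.

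For the first estimate, if $u < k$ then $\binom uk = 0$ and there is nothing to prove. Otherwise I will use $\binom uk = \frac{u(u-1)\cdots(u-k+1)}{k!} \le \frac{u^k}{k!}$. It then suffices to show $k! \ge (k/e)^k$. For this I will use the exponential series: $e^k = \sum_{j \ge 0} k^j/j! \ge k^k/k!$, because every term is nonnegative. Rearranging gives $k! \ge k^k e^{-k}$, and hence $\binom uk \le (eu/k)^k$.

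For the second estimate, I will use the binomial theorem with weights $\alpha$ and $1-\alpha$:
\[ 1 \;=\; \big(\alpha + (1-\alpha)\big)^g \;\ge\; \sum_{0 \le h \le \alpha g} \binom gh \alpha^h (1-\alpha)^{g-h} \;=\; (1-\alpha)^g \sum_{0 \le h \le \alpha g} \binom gh \Big(\frac{\alpha}{1-\alpha}\Big)^h. \]
The key observation is that $\alpha \le 1/2$ gives $\alpha/(1-\alpha) \le 1$. So the map $h \mapsto (\alpha/(1-\alpha))^h$ is nonincreasing, and for every $h \le \alpha g$ it is at least $(\alpha/(1-\alpha))^{\alpha g}$; this uses real exponents, which is fine since $\alpha g$ need not be an integer. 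Pulling this common lower bound out of the sum gives
\[ 1 \;\ge\; \alpha^{\alpha g}(1-\alpha)^{(1-\alpha)g} \sum_{0 \le h \le \alpha g}\binom gh. \]
Finally, $\alpha^{\alpha g}(1-\alpha)^{(1-\alpha) g} = 2^{g(\alpha\log\alpha + (1-\alpha)\log(1-\alpha))} = 2^{-H_{\mathrm b}(\alpha) g}$, and rearranging yields the claim.

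I expect no real obstacle. The only point needing care is the monotonicity step, which is exactly where the hypothesis $\alpha \le 1/2$ is used. The case $\alpha = 1/2$ is consistent, since then $H_{\mathrm b}(1/2) = 1$ and the claim reduces to the trivial bound $\sum_h \binom gh \le 2^g$.
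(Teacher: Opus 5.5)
Your proof is correct: $k!\ge(k/e)^k$ from the exponential series gives the first estimate, and weighting the binomial expansion of $(\alpha+(1-\alpha))^g$ and using $\alpha/(1-\alpha)\le 1$ gives the second. The paper gives no proof of this fact, citing it as a standard counting estimate used as in Babai, Frankl and Simon, and your argument is the usual textbook proof.
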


\noindent For $a \sim \cU$, the probability that $a$ contains a fixed set or avoids a fixed set is expressed by binomial coefficients. For $T, V \subseteq [N]$ with $\abs T = k \le s$,
\begin{align*}
\Pr_{a \sim \cU}[T \subseteq a] \;&=\; \binom{N-k}{s-k}\Big/\binom Ns \;=\; \prod_{h=0}^{k-1}\frac{s-h}{N-h} \;\le\; \Big(\frac sN\Big)^k, \\
\Pr_{a \sim \cU}[a \cap V = \emptyset] \;&=\; \binom{N - \abs V}{s}\Big/\binom Ns \;=\; \prod_{h=0}^{s-1}\frac{N - \abs V - h}{N-h} \;\le\; \Big(1 - \frac{\abs V}{N}\Big)^s,
\end{align*}
and $\Pr[T \subseteq a] = 0$ if $\abs T > s$. We use the following two consequences.

\begin{claim}[Two counting facts]\label{clm:counting}
Let $1 \le s \le N$.
\begin{enumerate}[label=(\alph*)]
  \item For every $P \subseteq [N]$ and every integer $k \ge 1$, $\Pr_{a \sim \cU} \big[\abs{a \cap P} \ge k\big] \le \binom{\abs P}{k}\big(\frac sN\big)^k \le \big(\frac{e\abs{P}s}{kN}\big)^k$.
  \item For every $V \subseteq [N]$, $\Pr_{a \sim \cU} [a \cap V = \emptyset] \le \big(1 - \frac{\abs V}{N}\big)^s \le e^{-s\abs{V}/N}$.
\end{enumerate}
\end{claim}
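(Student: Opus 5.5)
Both parts of \cref{clm:counting} follow directly from the two displayed identities for $\Pr_{a\sim\cU}[T\subseteq a]$ and $\Pr_{a\sim\cU}[a\cap V=\emptyset]$, together with \cref{fact:binom}.

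For (a), the event $\abs{a\cap P}\ge k$ implies that $a$ contains some $k$-element subset $T\subseteq P$. A union bound over the $\binom{\abs P}{k}$ such subsets gives
\[
\Pr_{a\sim\cU}\big[\abs{a\cap P}\ge k\big]\;\le\;\sum_{T\in\binom{P}{k}}\Pr_{a\sim\cU}[T\subseteq a].
\]
If $k\le s$, each term is at most $(s/N)^k$ by the first identity. If $k>s$, each term is $0$, so the bound holds trivially. This yields the first inequality. The second inequality follows from $\binom{\abs P}{k}\le(e\abs P/k)^k$ in \cref{fact:binom}. That estimate holds for all $u\ge0$, including $\abs P<k$, where the left side is $0$.

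For (b), the first inequality is the second displayed identity. Each factor satisfies
\[
\frac{N-\abs V-h}{N-h}\;=\;1-\frac{\abs V}{N-h}\;\le\;1-\frac{\abs V}{N}.
\]
If $\abs V>N-s$, some factor is $\le0$. In that case the probability is $0$, since $a$ has $s$ elements and must meet $V$. The last step is $1-x\le e^{-x}$ with $x=\abs V/N\in[0,1]$, raised to the $s$-th power.

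The only point needing care is these degenerate ranges, namely $k>s$ or $k>\abs P$ in (a) and $\abs V>N-s$ in (b), where the product formulas involve zero or negative factors. I would dispose of them separately, since in each case the left-hand probability is $0$. There is no substantive obstacle beyond this.
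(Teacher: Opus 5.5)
Your proposal is correct and follows the paper's proof essentially step for step. For (a) you both use a union bound over the $k$-subsets of $P$, the bound $\Pr[T\subseteq a]\le(s/N)^k$ (which is $0$ when $k>s$), and the estimate $\binom{u}{k}\le(eu/k)^k$; for (b) you both use the product formula with the per-factor bound $\frac{N-\abs V-h}{N-h}\le 1-\frac{\abs V}{N}$, a separate treatment of the case $N-\abs V<s$ where the probability is $0$, and $1-x\le e^{-x}$.
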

\begin{proof}
(a) If $\abs{a \cap P} \ge k$, then $a$ contains some $T \subseteq P$ with $\abs T = k$. By a union bound over the $\binom{\abs P}{k}$ choices of $T$, the bound $\Pr[T \subseteq a] \le (s/N)^k$, and \cref{fact:binom},
\begin{align*}
\Pr_{a \sim \cU}\big[\abs{a \cap P} \ge k\big]
\;&\le\; \Pr_{a \sim \cU}\big[\exists\, T \subseteq P \text{ with } \abs T = k \text{ and } T \subseteq a\big]
\;\le\; \sum_{\substack{T \subseteq P \\ \abs T = k}} \Pr_{a \sim \cU}[T \subseteq a] \\
\;&\le\; \binom{\abs P}{k}\Big(\frac sN\Big)^k \;\le\; \Big(\frac{e\abs P s}{kN}\Big)^k.
\end{align*}
(b) If $N-\abs V<s$, the avoidance probability is zero and the bound follows. Otherwise, all factors below are nonnegative and satisfy $\frac{N - \abs V - h}{N-h} \le 1 - \frac{\abs V}{N}$. Using $1-x\le e^{-x}$,
\begin{align*}
\Pr_{a \sim \cU}[a \cap V = \emptyset]
\;&=\; \binom{N - \abs V}{s}\Big/\binom Ns
\;=\; \prod_{h=0}^{s-1}\frac{N - \abs V - h}{N-h} \\
\;&\le\; \Big(1 - \frac{\abs V}{N}\Big)^s \;\le\; e^{-s\abs V/N}.
\end{align*}

\end{proof}

\subsection{Proof of the corruption bound}\label{sec:bfs}

For $N = s^2$, \cref{lem:corruption} is the rectangle bound behind the $\Omega(\sqrt N)$ lower bound for Disjointness of \cite{babai_complexity_1986} (see also \cite[Section~3.2, proof of Theorem~6]{chattopadhyay_story_2010}). We first recall their argument before explaining the changes needed for sparser sets. Suppose that $\cA \times \cB$ is large and nearly disjoint.
\begin{itemize}[leftmargin=*]
  \item \emph{A disjoint chain in $\cA$:} Since $\cA$ is large, it contains rows $a_1, \dots, a_t$ with $t = \Theta(s)$, each of which has at least $\lceil s/2\rceil$ elements outside the earlier rows. These pieces of new elements are pairwise disjoint and together cover a constant fraction of $[N]$.
  \item \emph{Columns of $\cB$ avoid the chain:} A typical row meets few columns of $\cB$, so a typical column of $\cB$ meets few of the pieces. That is, it lies in the complement of all but a few of the pieces.
  \item \emph{Few sets avoid the chain:} A uniform $s$-set meets each piece with constant probability. A union bound over which pieces it meets shows that only a $2^{-\Omega(s)}$ fraction of all $s$-sets meet so few pieces. Hence $\cB$ is small, a contradiction.
\end{itemize}
For sparser sets, i.e.\ when $N$ is a multiple of $s^2$, we make two changes.
\begin{itemize}[leftmargin=*]
  \item \emph{A longer chain:} We take $t=(N/s^2)\lfloor s/6\rfloor=\Theta(N/s)$ rows. The $t$ rows cover at most $N/6$ elements, so another row with many new elements can still be found, while the disjoint pieces cover a constant fraction of the universe.
  \item \emph{Grouping the pieces:} We group the $t$ pieces into $g=\lfloor s/6\rfloor$ groups of $N/s^2$ pieces each, and apply the union bound to groups instead of pieces. Each group has $\Theta(N/s)$ elements, so a uniform $s$-set meets it with constant probability, just as it meets a single piece when $N=s^2$. For $N = s^2$, every group is a single piece.
\end{itemize}

\begin{proofof}{\cref{lem:corruption}}
For brevity set $\delta := c s^2/N$. Recall that $\abs\cA, \abs\cB \ge 2^{-\eps s}\abs\cS$. Suppose, for contradiction, that
\[ \Pr_{a \sim \cU_{\cA},\, b \sim \cU_{\cB}}[a \cap b \ne \emptyset] \;<\; \delta. \]
We show that $\abs \cB < 2^{-\eps s}\abs\cS$.

\emph{Good rows.} Let $\cA' := \{a \in \cA : \Pr_{b \sim \cU_{\cB}}[a \cap b \ne \emptyset] \le 2\delta\}$. By Markov's inequality, $\abs{\cA'} \ge \abs \cA/2 \ge 2^{-\eps s - 1}\abs\cS$.

\emph{A chain of rows with new elements.} Put $r:=N/s^2$, $k:=\lceil s/2\rceil$, $g:=\lfloor s/6\rfloor$, and $t:=rg$. These are positive integers. We select $t$ rows in $\cA'$, each of which brings at least $k$ new elements.

\begin{claim}[A chain of rows with new elements]\label{clm:spread}
There are $a_1, \dots, a_t \in \cA'$ and sets $D_1, \dots, D_t$ such that, for every $j \in [t]$,
\[ D_j \;\subseteq\; a_j \setminus (a_1 \cup \dots \cup a_{j-1}) \qquad\text{and}\qquad \abs{D_j} = k. \]
In particular, $D_1, \dots, D_t$ are pairwise disjoint.
\end{claim}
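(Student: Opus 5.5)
The chain will be built greedily, one row at a time. Suppose $a_1,\dots,a_{j-1}$ have already been chosen for some $j\le t$, and set $W:=a_1\cup\dots\cup a_{j-1}$. Then $\abs W\le (j-1)s< ts=rgs\le (N/s^2)(s/6)s=N/6$. It suffices to find some $a\in\cA'$ with $\abs{a\cap W}\le s-k=\lfloor s/2\rfloor$. For such an $a$, the set $a\setminus W$ has at least $k$ elements, so we can take $a_j:=a$ and let $D_j$ be any $k$-subset of $a_j\setminus W$. Pairwise disjointness is then automatic: for $j'<j$ we have $D_{j'}\subseteq a_{j'}\subseteq a_1\cup\dots\cup a_{j-1}$, while $D_j$ avoids this union.

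The only real step is to show that the "bad" rows, those with $\abs{a\cap W}>\lfloor s/2\rfloor$, i.e.\ $\abs{a\cap W}\ge k$, cannot exhaust $\cA'$. I will bound their number with \cref{clm:counting}(a), taking $P:=W$ and threshold $k\ge s/2$:
\[
\Pr_{a\sim\cU}\big[\abs{a\cap W}\ge k\big]\;\le\;\Big(\frac{e\abs W s}{kN}\Big)^k\;\le\;\Big(\frac{e\,(N/6)\,s}{(s/2)N}\Big)^k=\Big(\frac e3\Big)^k\;\le\;(e/3)^{s/2}.
\]
Since $\log_2(3/e)\approx 0.142$, this probability is at most $2^{-0.071 s}$.

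On the other hand, $\abs{\cA'}/\abs\cS\ge 2^{-\eps s-1}=2^{-s/40-1}$. For $s\ge200$ we have $0.071s > s/40+1$, so $2^{-0.071s}<2^{-s/40-1}$. The bad sets therefore number fewer than $\abs{\cA'}$, and a good row exists. The argument does not depend on which earlier rows were chosen, so the greedy process runs for all $t$ steps.

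The main point to check is that the constants close, i.e.\ $e/3<1$ with enough room to beat $\eps s+1$ in the exponent. This is exactly why $g=\lfloor s/6\rfloor$ caps $\abs W$ at $N/6$ and $\eps=1/40$ is small. The numeric check at $s=200$ is $14.2>6$, which holds comfortably. One edge case also needs a remark: the condition $\abs W<N/6$ is strict because $j-1<t$, and it continues to hold even when $N=s^2$.
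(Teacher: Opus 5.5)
Your proof is correct and is essentially the paper's argument. You choose rows greedily and bound $\abs{a_1\cup\dots\cup a_{j-1}}<ts\le N/6$. You then apply \cref{clm:counting}(a) with threshold $k\ge s/2$ to bound the fraction of bad sets by $(e/3)^{s/2}\le 2^{-0.071s}$, and compare this with $\abs{\cA'}\ge 2^{-\eps s-1}\abs\cS$.

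One small nit: for even $s$, the bad event $\abs{a\cap W}>\lfloor s/2\rfloor$ is not equivalent to $\abs{a\cap W}\ge k$ but only contained in it. That is the direction you need, so nothing breaks.
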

\begin{proof}
We choose $a_1, a_2, \dots$ greedily. Let $1 \le j \le t$, suppose that $a_1, \dots, a_{j-1}$ have been chosen, and let $P_j := a_1 \cup \dots \cup a_{j-1}$ (so $P_1 = \emptyset$). It suffices to find $a_j\in\cA'$ with $\abs{a_j\cap P_j}<k$. Then $\abs{a_j\setminus P_j}\ge s-k+1\ge k$, so we can choose $D_j$ to contain $k$ new elements.

Each earlier row has $s$ elements, so $\abs{P_j}\le(j-1)s<ts\le N/6$. By \cref{clm:counting}(a), using $k\ge s/2$,
\[
\Pr_{a\sim\cU}[\abs{a\cap P_j}\ge k]
\le\left(\frac{e\abs{P_j}s}{kN}\right)^k
\le(e/3)^k\le(e/3)^{s/2}\le2^{-0.071s}.
\]
Since $\abs{\cA'}\ge2^{-\eps s-1}\abs\cS>2^{-0.071s}\abs\cS$, some $a_j\in\cA'$ has $\abs{a_j\cap P_j}<k$. Finally, $D_j \subseteq a_j \setminus P_j$ is disjoint from $D_1, \dots, D_{j-1} \subseteq P_j$.
\end{proof}

\emph{Grouping the pieces.} Partition $[t]$ into $g$ sets $\Gamma_1,\dots,\Gamma_g$, each of size $r=N/s^2$. For $u \in [g]$, let
\[ E_u \;:=\; \bigcup_{j \in \Gamma_u} D_j. \]
The sets $E_1, \dots, E_g$ are pairwise disjoint, and each has $rk\ge N/(2s)$ elements.

\emph{Most columns miss most of the groups.} Since every $a_j$ lies in $\cA'$,
\[ \E_{b \sim \cU_{\cB}}\big[\,\abs{\{j \in [t] : a_j \cap b \ne \emptyset\}}\,\big] \;=\; \sum_{j=1}^{t}\Pr_{b \sim \cU_{\cB}}[a_j \cap b \ne \emptyset] \;\le\; 2\delta t, \]
so by Markov's inequality the set $\cB'$ of $b \in \cB$ that meet at most $4\delta t$ of the rows $a_j$ satisfies $\abs{\cB'} \ge \abs \cB/2$. If $b$ meets $E_u$, then it meets $a_j \supseteq D_j$ for some $j \in \Gamma_u$, and the sets $\Gamma_u$ are disjoint. So every $b \in \cB'$ meets at most $4\delta t$ of the sets $E_u$, and by our choice of parameters
\[4\delta t=4\cdot\frac{cs^2}{N}\cdot\frac{N}{s^2}g=4cg=\frac{g}{25}.\]

\emph{Counting $\cB'$.} If $b\in\cB'$ meets $h\le g/25$ groups, it avoids the union $V$ of the other $g-h$ groups. Since $\abs V=(g-h)rk\ge(g-h)N/(2s)$,
\[
\frac{s\abs V}{N}\ge\frac{g-h}{2}\ge\frac{12g}{25}.
\]
For each choice of the groups met, \cref{clm:counting}(b) bounds the number of such sets by $e^{-12g/25}\abs\cS$. A union bound and \cref{fact:binom} give
\begin{align*}
\abs{\cB'}
&\le \sum_{0\le h\le g/25}\binom gh e^{-12g/25}\abs\cS\\
&\le 2^{H_{\mathrm b}(1/25)g}e^{-12g/25}\abs\cS
\le2^{-0.45g}\abs\cS,
\end{align*}
because $H_{\mathrm b}(1/25)\le0.2423$ and $12/(25\ln2)>0.69249$. As $g\ge s/6-1$ and $s\ge200$,
\[
\abs\cB\le2\abs{\cB'}\le2^{-0.075s+1.45}\abs\cS
<2^{-s/40}\abs\cS,
\]
contradicting the assumed size of $\cB$.
\end{proofof}

\section*{AI methodology}
Claude Opus 5.5 assisted in developing the proof of \cref{lem:corruption}, following the author's proposal to adapt the argument of Babai, Frankl and Simon \cite{babai_complexity_1986} to sparser sets. The modifications are a longer chain of rows and a union bound over groups of pieces. AI tools also assisted with the adaptations and exposition of the published reductions in \cref{sec:graphs}. The applications other than those originally in \cite{patrascu_mihai_towards_2010} were found with the assistance of ChatGPT (GPT-6.0 Astra). For applications based on published reductions, the tool served as an aid to searching the literature. AI tools assisted with application writeups, independent logical checks, and editorial revisions. The author is responsible for the final manuscript.

\bibliographystyle{alpha}
\bibliography{references}

\newcommand{\etalchar}[1]{$^{#1}$}
\begin{thebibliography}{HKM{\etalchar{+}}12}

\bibitem[AD16]{abboud_popular_2016}
Amir Abboud and S{\o}ren Dahlgaard.
\newblock Popular {Conjectures} as a {Barrier} for {Dynamic} {Planar} {Graph}
  {Algorithms}.
\newblock In {\em 2016 {IEEE} 57th {Annual} {Symposium} on {Foundations} of
  {Computer} {Science} ({FOCS})}, pages 477--486, New Brunswick, NJ, USA,
  October 2016. IEEE.
\newblock Full version:
  \href{https://arxiv.org/abs/1605.03797}{arXiv:1605.03797}.

\bibitem[AW14]{abboud_popular_2014}
Amir Abboud and Virginia~Vassilevska Williams.
\newblock Popular {Conjectures} {Imply} {Strong} {Lower} {Bounds} for {Dynamic}
  {Problems}.
\newblock In {\em 2014 {IEEE} 55th {Annual} {Symposium} on {Foundations} of
  {Computer} {Science}}, pages 434--443, October 2014.
\newblock Full version:
  \href{https://arxiv.org/abs/1402.0054}{arXiv:1402.0054}.

\bibitem[AW26]{alman_truly_2026}
Josh Alman and Virginia~Vassilevska Williams.
\newblock Truly {Subquadratic} {3SUM} and {Truly} {Subcubic} {APSP} via
  {Triangles} in {Sparse} {Lopsided} {Graphs}, October 2026.
\newblock arXiv:2610.06783 [cs.DS].

\bibitem[BFS86]{babai_complexity_1986}
L{\'a}szl{\'o} Babai, P{\'e}ter Frankl, and J{\'a}nos Simon.
\newblock Complexity classes in communication complexity theory (preliminary
  version).
\newblock In {\em 27th {Annual} {Symposium} on {Foundations} of {Computer}
  {Science}}, pages 337--347, 1986.

\bibitem[BL15]{brody_adapt_2015}
Joshua Brody and Kasper~Green Larsen.
\newblock Adapt or {Die}: {Polynomial} {Lower} {Bounds} for {Non}-{Adaptive}
  {Dynamic} {Data} {Structures}.
\newblock {\em Theory of Computing}, 11(19):471--489, 2015.

\bibitem[Blu85]{blum_single-operation_1985}
Norbert Blum.
\newblock On the {Single}-{Operation} {Worst}-{Case} {Time} {Complexity} on the
  {Disjoint} {Set} {Union} {Problem}.
\newblock In {\em {STACS} 85, 2nd {Symposium} of {Theoretical} {Aspects} of
  {Computer} {Science}}, pages 32--38, 1985.

\bibitem[BT10]{bille_regular_2010}
Philip Bille and Mikkel Thorup.
\newblock Regular {Expression} {Matching} with {Multi}-{Strings} and
  {Intervals}.
\newblock In {\em Proceedings of the {Twenty}-{First} {Annual} {ACM}-{SIAM}
  {Symposium} on {Discrete} {Algorithms}}, pages 1297--1308. Society for
  Industrial and Applied Mathematics, January 2010.

\bibitem[CDK{\etalchar{+}}11]{cohen_efficient_2011}
Edith Cohen, Nick Duffield, Haim Kaplan, Carsten Lund, and Mikkel Thorup.
\newblock Efficient {Stream} {Sampling} for {Variance}-{Optimal} {Estimation}
  of {Subset} {Sums}.
\newblock {\em SIAM Journal on Computing}, 40(5):1402--1431, January 2011.

\bibitem[CDL{\etalchar{+}}14]{chan_linear-space_2014}
Timothy~M. Chan, Stephane Durocher, Kasper~Green Larsen, Jason Morrison, and
  Bryan~T. Wilkinson.
\newblock Linear-{Space} {Data} {Structures} for {Range} {Mode} {Query} in
  {Arrays}.
\newblock {\em Theory of Computing Systems}, 55(4):719--741, November 2014.

\bibitem[CEEP16]{chattopadhyay_upper_2016}
Arkadev Chattopadhyay, Jeff Edmonds, Faith Ellen, and Toniann Pitassi.
\newblock Upper and {Lower} {Bounds} on the {Power} of {Advice}.
\newblock {\em SIAM Journal on Computing}, 45(4):1412--1432, January 2016.

\bibitem[CGL15]{clifford_new_2015}
Rapha{\"e}l Clifford, Allan Gr{\o}nlund, and Kasper~Green Larsen.
\newblock New {Unconditional} {Hardness} {Results} for {Dynamic} and {Online}
  {Problems}.
\newblock In {\em 2015 {IEEE} 56th {Annual} {Symposium} on {Foundations} of
  {Computer} {Science}}, pages 1089--1107, October 2015.

\bibitem[CGLS18]{clifford_upper_2018}
Rapha{\"e}l Clifford, Allan Gr{\o}nlund, Kasper~Green Larsen, and Tatiana
  Starikovskaya.
\newblock {Upper and Lower Bounds for Dynamic Data Structures on Strings}.
\newblock In {\em 35th Symposium on Theoretical Aspects of Computer Science
  (STACS 2018)}, volume~96 of {\em LIPIcs}, pages 22:1--22:14, 2018.

\bibitem[CKL{\etalchar{+}}24]{chen_almost-linear_2024}
Li~Chen, Rasmus Kyng, Yang~P. Liu, Simon Meierhans, and Maximilian
  Probst~Gutenberg.
\newblock Almost-{Linear} {Time} {Algorithms} for {Incremental} {Graphs}:
  {Cycle} {Detection}, {SCCs}, s-t {Shortest} {Path}, and {Minimum}-{Cost}
  {Flow}.
\newblock In {\em Proceedings of the 56th {Annual} {ACM} {Symposium} on
  {Theory} of {Computing}}, pages 1165--1173, Vancouver BC Canada, June 2024.
  ACM.

\bibitem[CKM{\etalchar{+}}19]{charalampopoulos_internal_2019}
Panagiotis Charalampopoulos, Tomasz Kociumaka, Manal Mohamed, Jakub
  Radoszewski, Wojciech Rytter, and Tomasz Wale{\'n}.
\newblock {Internal Dictionary Matching}.
\newblock In {\em 30th International Symposium on Algorithms and Computation
  (ISAAC 2019)}, volume 149 of {\em LIPIcs}, pages 22:1--22:17, 2019.

\bibitem[CP10]{chattopadhyay_story_2010}
Arkadev Chattopadhyay and Toniann Pitassi.
\newblock The story of set disjointness.
\newblock {\em ACM SIGACT News}, 41(3):59--85, September 2010.

\bibitem[CT06]{cover_elements_2006}
Thomas~M. Cover and Joy~A. Thomas.
\newblock {\em Elements of {Information} {Theory} ({Wiley} {Series} in
  {Telecommunications} and {Signal} {Processing})}.
\newblock Wiley-Interscience, New York, NY, USA, 2006.

\bibitem[DGGS22]{naor_streaming_2022}
Bartłomiej Dudek, Paweł Gawrychowski, Garance Gourdel, and Tatiana
  Starikovskaya.
\newblock Streaming {Regular} {Expression} {Membership} and {Pattern}
  {Matching}.
\newblock In Joseph~(Seffi) Naor and Niv Buchbinder, editors, {\em Proceedings
  of the 2022 {Annual} {ACM}-{SIAM} {Symposium} on {Discrete} {Algorithms}
  ({SODA})}, pages 670--694. Society for Industrial and Applied Mathematics,
  Philadelphia, PA, January 2022.

\bibitem[DI22]{dallant_conditional_2022}
Justin Dallant and John Iacono.
\newblock {Conditional Lower Bounds for Dynamic Geometric Measure Problems}.
\newblock In {\em 30th Annual European Symposium on Algorithms (ESA 2022)},
  volume 244 of {\em LIPIcs}, pages 39:1--39:17, 2022.
\newblock Full version:
  \href{https://arxiv.org/abs/2112.10095}{arXiv:2112.10095}.

\bibitem[DL20]{dvorak_lower_2020}
Pavel Dvo{\v{r}}{\'a}k and Bruno Loff.
\newblock Lower {Bounds} for {Semi}-adaptive {Data} {Structures} via
  {Corruption}.
\newblock In {\em 40th IARCS Annual Conference on Foundations of Software
  Technology and Theoretical Computer Science (FSTTCS 2020)}, volume 182 of
  {\em LIPIcs}, pages 20:1--20:15, 2020.

\bibitem[DORS08]{dodis_fuzzy_2008}
Yevgeniy Dodis, Rafail Ostrovsky, Leonid Reyzin, and Adam Smith.
\newblock Fuzzy {Extractors}: {How} to {Generate} {Strong} {Keys} from
  {Biometrics} and {Other} {Noisy} {Data}.
\newblock {\em SIAM Journal on Computing}, 38(1):97--139, January 2008.

\bibitem[FS89]{fredman_cell_1989}
Michael~L. Fredman and Michael~E. Saks.
\newblock The {Cell} {Probe} {Complexity} of {Dynamic} {Data} {Structures}.
\newblock In {\em Proceedings of the 21st {Annual} {ACM} {Symposium} on
  {Theory} of {Computing}}, pages 345--354, 1989.

\bibitem[GKKP20]{golan_streaming_2020}
Shay Golan, Tomasz Kociumaka, Tsvi Kopelowitz, and Ely Porat.
\newblock {The Streaming $k$-Mismatch Problem: Tradeoffs Between Space and
  Total Time}.
\newblock In {\em 31st Annual Symposium on Combinatorial Pattern Matching (CPM
  2020)}, volume 161 of {\em LIPIcs}, pages 15:1--15:15, 2020.

\bibitem[GU18]{gawrychowski_towards_2018}
Pawe{\l} Gawrychowski and Przemys{\l}aw Uzna{\'n}ski.
\newblock {Towards Unified Approximate Pattern Matching for Hamming and $L_1$
  Distance}.
\newblock In {\em 45th International Colloquium on Automata, Languages, and
  Programming (ICALP 2018)}, volume 107 of {\em LIPIcs}, pages 62:1--62:13,
  2018.

\bibitem[HF98]{henzinger_lower_1998}
Monika~R. Henzinger and Michael~L. Fredman.
\newblock Lower {Bounds} for {Fully} {Dynamic} {Connectivity} {Problems} in
  {Graphs}.
\newblock {\em Algorithmica}, 22(3):351--362, November 1998.

\bibitem[HKM{\etalchar{+}}12]{haeupler_incremental_2012}
Bernhard Haeupler, Telikepalli Kavitha, Rogers Mathew, Siddhartha Sen, and
  Robert~E. Tarjan.
\newblock Incremental {Cycle} {Detection}, {Topological} {Ordering}, and
  {Strong} {Component} {Maintenance}.
\newblock {\em ACM Transactions on Algorithms}, 8(1):1--33, January 2012.

\bibitem[HKNS15]{henzinger_unifying_2015}
Monika Henzinger, Sebastian Krinninger, Danupon Nanongkai, and Thatchaphol
  Saranurak.
\newblock Unifying and {Strengthening} {Hardness} for {Dynamic} {Problems} via
  the {Online} {Matrix}-{Vector} {Multiplication} {Conjecture}.
\newblock In {\em Proceedings of the {Forty}-seventh {Annual} {ACM} {Symposium}
  on {Theory} of {Computing}}, {STOC} '15, pages 21--30, New York, NY, USA,
  2015. ACM.
\newblock Full version:
  \href{https://arxiv.org/abs/1511.06773}{arXiv:1511.06773}.

\bibitem[IPZ01]{impagliazzo_which_2001}
Russell Impagliazzo, Ramamohan Paturi, and Francis Zane.
\newblock Which {Problems} {Have} {Strongly} {Exponential} {Complexity}?
\newblock {\em Journal of Computer and System Sciences}, 63(4):512--530,
  December 2001.

\bibitem[JRS09]{jain_property_2009}
Rahul Jain, Jaikumar Radhakrishnan, and Pranab Sen.
\newblock A property of quantum relative entropy with an application to privacy
  in quantum communication.
\newblock {\em Journal of the ACM}, 56(6):1--32, September 2009.

\bibitem[JX22]{jin_tight_2022}
Ce~Jin and Yinzhan Xu.
\newblock Tight dynamic problem lower bounds from generalized {BMM} and {OMv}.
\newblock In {\em Proceedings of the 54th {Annual} {ACM} {SIGACT} {Symposium}
  on {Theory} of {Computing}}, pages 1515--1528, Rome Italy, June 2022. ACM.
\newblock Full version:
  \href{https://arxiv.org/abs/2202.11250}{arXiv:2202.11250}.

\bibitem[KLMS26]{koucky_natural_2026}
Michal Koucký, Bruno Loff, Tulasimohan Molli, and Michael~E. Saks.
\newblock The {Natural} {Proofs} {Barrier} against {Data}-{Structure}
  {Lower}-{Bounds}.
\newblock In {\em Proceedings of the 58th {Annual} {ACM} {Symposium} on
  {Theory} of {Computing}}, pages 1338--1343, Salt Lake City UT USA, June 2026.
  ACM.

\bibitem[Ko26]{ko_omega_2026}
Young~Kun Ko.
\newblock {An $\Omega((\log n/\log\log n)^2)$ Cell-Probe Lower Bound for
  Dynamic Boolean Data Structures}.
\newblock In {\em Proceedings of the 67th {IEEE} {Symposium} on {Foundations}
  of {Computer} {Science} ({FOCS}) (to appear)}. IEEE, 2026.
\newblock Full version:
  \href{https://arxiv.org/abs/2603.25914v2}{arXiv:2603.25914v2}.

\bibitem[KPP15]{kopelowitz_dynamic_2015}
Tsvi Kopelowitz, Seth Pettie, and Ely Porat.
\newblock Dynamic set intersection.
\newblock In {\em Workshop on {Algorithms} and {Data} {Structures}}, pages
  470--481. Springer, 2015.

\bibitem[KW20]{ko_adaptive_2020}
Young~Kun Ko and Omri Weinstein.
\newblock An {Adaptive} {Step} {Toward} the {Multiphase} {Conjecture}.
\newblock In {\em 2020 {IEEE} 61st {Annual} {Symposium} on {Foundations} of
  {Computer} {Science} ({FOCS})}, pages 752--761, Durham, NC, USA, November
  2020. IEEE.

\bibitem[Lar12]{larsen_cell_2012}
Kasper~Green Larsen.
\newblock The {Cell} {Probe} {Complexity} of {Dynamic} {Range} {Counting}.
\newblock In {\em Proceedings of the {Forty}-fourth {Annual} {ACM} {Symposium}
  on {Theory} of {Computing}}, {STOC} '12, pages 85--94, New York, NY, USA,
  2012. ACM.

\bibitem[LWY20]{larsen_crossing_2020}
Kasper~Green Larsen, Omri Weinstein, and Huacheng Yu.
\newblock Crossing the {Logarithmic} {Barrier} for {Dynamic} {Boolean} {Data}
  {Structure} {Lower} {Bounds}.
\newblock {\em SIAM Journal on Computing}, 49(5):STOC18--323--STOC18--367,
  January 2020.

\bibitem[LY25]{larsen_super-logarithmic_2025}
Kasper~Green Larsen and Huacheng Yu.
\newblock Super-{Logarithmic} {Lower} {Bounds} for {Dynamic} {Graph}
  {Problems}.
\newblock {\em SIAM Journal on Computing}, pages FOCS23--42--FOCS23--69,
  February 2025.

\bibitem[MSVT94]{miltersen_complexity_1994}
Peter~Bro Miltersen, Sairam Subramanian, Jeffrey~Scott Vitter, and Roberto
  Tamassia.
\newblock Complexity models for incremental computation.
\newblock {\em Theoretical Computer Science}, 130(1):203--236, August 1994.

\bibitem[P{\u{a}}t08]{patrascu_unifying_2008}
Mihai P{\u{a}}tra{\c{s}}cu.
\newblock Unifying the {Landscape} of {Cell}-{Probe} {Lower} {Bounds}.
\newblock In {\em {FOCS} 2008}, pages 434--443, 2008.

\bibitem[P{\u{a}}t10]{patrascu_mihai_towards_2010}
Mihai P{\u{a}}tra{\c{s}}cu.
\newblock Towards polynomial lower bounds for dynamic problems.
\newblock In Leonard~J. Schulman, editor, {\em Proceedings of the 42nd {ACM}
  {Symposium} on {Theory} of {Computing}, {STOC} 2010, {Cambridge},
  {Massachusetts}, {USA}, 5-8 {June} 2010}, pages 603--610. ACM, 2010.

\bibitem[PD06]{patrascu_logarithmic_2006}
Mihai P{\u{a}}tra{\c{s}}cu and Erik~D. Demaine.
\newblock Logarithmic {Lower} {Bounds} in the {Cell}-{Probe} {Model}.
\newblock {\em SIAM J. Comput.}, 35(4):932--963, April 2006.

\bibitem[Tar75]{tarjan_efficiency_1975}
Robert~Endre Tarjan.
\newblock Efficiency of a {Good} {But} {Not} {Linear} {Set} {Union}
  {Algorithm}.
\newblock {\em J. ACM}, 22(2):215--225, 1975.

\bibitem[Vad12]{vadhan_pseudorandomness_2012}
Salil~P. Vadhan.
\newblock Pseudorandomness.
\newblock {\em Foundations and Trends® in Theoretical Computer Science},
  7(1–3):1--336, 2012.

\bibitem[vdBNS19]{van_den_brand_dynamic_2019}
Jan van~den Brand, Danupon Nanongkai, and Thatchaphol Saranurak.
\newblock Dynamic {Matrix} {Inverse}: {Improved} {Algorithms} and {Matching}
  {Conditional} {Lower} {Bounds}.
\newblock In {\em 2019 {IEEE} 60th {Annual} {Symposium} on {Foundations} of
  {Computer} {Science} ({FOCS})}, pages 456--480, Baltimore, MD, USA, November
  2019. IEEE.
\newblock Full version:
  \href{https://arxiv.org/abs/1905.05067}{arXiv:1905.05067}.

\bibitem[Yao81]{yao_should_1981}
Andrew Chi-Chih Yao.
\newblock Should {Tables} {Be} {Sorted}?
\newblock {\em J. ACM}, 28(3):615--628, July 1981.

\end{thebibliography}

\clearpage
\appendix
\crefalias{section}{appendix}

\section{Reductions to other problems}\label{sec:graphs}

For completeness and reader convenience, we collect the reductions yielding the applications of \cref{thm:main}. Each construction identifies its published source and any adaptations. \Cref{lem:app-transfer} gives the common simulation, \cref{tab:appendix-transfers} indexes the bounds, and \cref{sec:streaming-regex} proves the streaming separation.

\paragraph{Conventions.}
Let $S_1,\dots,S_m\subseteq[n]$ be the preprocessed sets, $X\subseteq[n]$ the evolving set, and $q\in[m]$ the final query. Sets are indexed by $i$ and elements by $j$. We write $M_{ij}:=\mathbf 1_{j\in S_i}$, $x_j:=\mathbf 1_{j\in X}$ and $c_i:=\abs{S_i\cap X}$, so that $S_q\cap X=\emptyset$ exactly when $c_q=0$, and we take $m=n$. The data structures we build keep the characteristic vector of $X$ and ignore repeated insertions, so every inserted element is new. The initial target instance is built during Phase~I, which is not charged. Every target instance has size $N=n^{\Theta(1)}$; padding with parts that no query sees, such as isolated vertices, extends it to every larger size with $\log N=\Theta(\log n)$, and we discuss the padding only when it needs care. Fixed thresholds may depend on $N$ but are chosen before updates. Numbered references to \cite{abboud_popular_2014,henzinger_unifying_2015,abboud_popular_2016,van_den_brand_dynamic_2019,dallant_conditional_2022,jin_tight_2022} are to the full versions listed in the bibliography, unless stated otherwise.

For a target data structure on instances of size $N$, in the model of \cref{sec:model}, with word size $w=\Theta(\log N)$, worst-case update time $t_u$ and expected query time $t_q$, let
\[
\mathcal L_N:=\left(\frac{\log N}{\log(t_u w)}\right)^2.
\]

\begin{lemma}[Transfer]\label{lem:app-transfer}
Consider a reduction from insert-only multiphase Disjointness with $m=n$ to a dynamic problem on instances of size $N=n^{\Theta(1)}$. The initial target instance depends only on $S_1,\dots,S_m$, and the reduction has the following properties.
\begin{enumerate}[label=(\roman*), leftmargin=*]
  \item Each insertion into $X$ is simulated by $O(1)$ target updates, and the query $q$ by $O(1)$ target updates and target queries, in some order. Each source operation also probes $O(1)$ further cells, and the answer to $q$ is determined by the answers to the target queries and the contents of these cells.
  \item The target operations form a valid operation sequence for the target problem, determined by $S_1,\dots,S_m$, the sequence of insertions and $q$, and each of their arguments occupies $O(1)$ words.
\end{enumerate}
Then every zero-error target data structure satisfies $t_u+t_q=\Omega(\mathcal L_N)$, and even $t_q=\Omega(\mathcal L_N)$ if the query $q$ is simulated without target updates. In either case, $\max(t_u,t_q)=\Omega\big((\log N/\log\log N)^2\big)$.
\end{lemma}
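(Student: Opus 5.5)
The plan is a direct simulation: from a target data structure $\mathcal T$ we build a data structure $\mathcal D$ for insert-only multiphase Disjointness and then apply \cref{thm:main}.

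\textbf{Construction.} In Phase~I, $\mathcal D$ computes the initial target instance from $S_1,\dots,S_m$ and runs $\mathcal T$'s preprocessing on it; this is uncharged. It also stores the $O(1)$-word auxiliary cells required by (i). $\mathcal D$ uses $\mathcal T$'s memory layout, with word size $w=\Theta(\log N)=\Theta(\log n)$. The auxiliary cells sit at addresses outside $\mathcal T$'s memory, for instance distinguished by one extra address bit, which costs only a constant factor in $w$. It also uses the same random tape.

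\emph{Insertions.} Each insertion into $X$ is turned into its $O(1)$ target updates. By (ii), their arguments are determined by the history and fit in $O(1)$ words, and they are fed to $\mathcal T$. Together with the $O(1)$ extra probes, an insertion costs $O(t_u)$ probes in the worst case.

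\emph{Query.} The query $q$ runs its $O(1)$ target updates and queries in the prescribed order. Multiphase Disjointness forbids writes in Phase~III, so every write by $\mathcal T$ during the query is kept in free temporary workspace: a table of overwritten cells that later probes consult first. This keeps $\mathcal D$'s query write-free without extra probes.

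\textbf{Cost and correctness.} The query then costs $O(t_u)+O(t_q)+O(1)$ probes in expectation. This is worst case for the updates, and for each query by linearity of expectation over the common tape on the fixed valid history. Correctness is zero-error, because each target answer is correct for every tape and (i) determines the answer from them.

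\textbf{Applying \cref{thm:main}.} This gives $\mathcal D$ update time $t_u'=O(t_u)$ and query time $t_q'=O(t_u+t_q)$, or $t_q'=O(t_q)$ when the query needs no target updates. \cref{thm:main} then yields $t_q'=\Omega\big((\log n/\log(t_u' w))^2\big)$. Since $\log N=\Theta(\log n)$ and $\log(t_u'w)=\log(t_u w)+O(1)=\Theta(\log(t_u w))$ (using $t_uw\ge 2$), this equals $\Omega(\mathcal L_N)$. Hence $t_u+t_q=\Omega(\mathcal L_N)$, or $t_q=\Omega(\mathcal L_N)$ in the update-free case.

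\textbf{The $\max$ bound.} If $t_u\ge\log^2 N$ it is immediate. Otherwise $\log(t_uw)=O(\log\log N)$, so $\mathcal L_N=\Omega\big((\log N/\log\log N)^2\big)$ and $\max(t_u,t_q)\ge(t_u+t_q)/2$.

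\textbf{Main obstacle.} The delicate part is the model bookkeeping rather than any inequality. Three points need care:
\begin{itemize}
\item the expected-time convention must hold for every fixed input history, not just on average;
\item Phase~III writes must be discharged through temporary workspace;
\item arguments and auxiliary data must fit in $O(1)$ words so that the word size stays $\Theta(\log n)$.
\end{itemize}
The instance size also needs attention. For padded $N$, the lower bound is stated at size $n$ with $\log N=\Theta(\log n)$, so the constants absorb the difference.
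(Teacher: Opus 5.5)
Your proposal is correct and follows the paper's proof essentially step for step. You store the target memory and the auxiliary tables in disjoint address ranges, buffer every query-phase write (including writes by target queries) in temporary workspace, and use (ii) so that each target query runs on a tape-independent valid history and so costs at most $t_q$ in expectation; you then invoke \cref{thm:main} with $\log N=\Theta(\log n)$ and finish the $\max(t_u,t_q)$ bound with the same case split on $t_u\ge\log^2N$, merely spelling out the check $\log(t_u'w)=\Theta(\log(t_uw))$ more explicitly than the paper does.
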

\begin{proof}
We build a data structure for insert-only multiphase Disjointness. It stores the memory of the target data structure and $O(1)$ auxiliary tables, such as the characteristic vector of $X$, in disjoint address ranges, which changes the word size by at most a constant factor; so each insertion probes $O(t_u)$ cells. During the query, it keeps every write of the simulated target operations, including writes by target queries, in temporary workspace, from which it answers later reads of the same cells. This reproduces the target execution while leaving the memory unchanged, as \cref{prob:multiphase} requires and \cref{sec:model} allows. By~(ii), each target query runs on an operation sequence that does not depend on the random tape, so it probes at most $t_q$ cells in expectation. Hence the query probes $O(t_u+t_q)$ cells in expectation, and $O(t_q)$ if it uses no target updates. Since $\log N=\Theta(\log n)$, \cref{thm:main} gives the first two bounds. For the last one, if $t_u\ge\log^2N$ there is nothing to prove; otherwise $\log(t_uw)=O(\log\log N)$.
\end{proof}

\begingroup
\small
\renewcommand{\arraystretch}{1.12}
\setlength{\LTcapwidth}{\textwidth}
\begin{longtable}{@{}>{\raggedright\arraybackslash}p{0.68\textwidth}>{\raggedright\arraybackslash}p{0.14\textwidth}p{\dimexpr0.18\textwidth-4\tabcolsep\relax}@{}}
\caption{The bounds of this appendix. Each listed quantity is $\Omega(\mathcal L_N)$, so in every case $\max(t_u,t_q)=\Omega\big((\log N/\log\log N)^2\big)$.}\label{tab:appendix-transfers}\\
\toprule
Problem & Bounded quantity & Section\\
\midrule
\endfirsthead
\multicolumn{3}{@{}l}{\tablename~\thetable\ (continued)}\\
\toprule
Problem & Bounded quantity & Section\\
\midrule
\endhead
\multicolumn{3}{r@{}}{\footnotesize Continued on the next page.}\\
\endfoot
\bottomrule
\endlastfoot
Directed reachability; undirected shortest paths, exact or within a factor $\alpha<2$; subgraph connectivity & $t_q$ & \ref{sec:graph-problems}\\
$s$-$t$ reachability; \#SSR; undirected $s$-$t$ shortest paths, exact or within a factor $\alpha<5/3$; strong connectivity; cycle detection; triangle detection & $t_u+t_q$ & \ref{sec:graph-problems}\\
Bipartite perfect matching; maximum matching size & $t_u+t_q$ & \ref{sec:bipartite-matching}\\
Exact densest subgraph, or a fixed density threshold & $t_u+t_q$ & \ref{sec:app-density}\\
Nearest marked vertex, exact or within a one-sided factor $\alpha<3$ & $t_q$ & \ref{sec:app-marked-distance}\\
Planar distance, unit or positive weights & $t_q$ & \ref{sec:planar-distance}\\
Planar directed girth, unit or positive weights & $t_u+t_q$ & \ref{sec:planar-girth}\\
Planar diameter, positive weights & $t_u+t_q$ & \ref{sec:planar-diameter}\\
Planar bipartite maximum-weight matching & $t_u+t_q$ & \ref{sec:planar-matching}\\
Set disjointness; Pagh's problem; range mode & $t_q$ & \ref{sec:other-problems}\\
Langerman's problem; Erickson's problem & $t_u+t_q$ & \ref{sec:other-problems}\\
Inverse-entry zero testing over prime fields & $t_q$ & \ref{sec:app-inverse}\\
Determinant over prime fields, or whether it equals one & $t_u+t_q$ & \ref{sec:app-determinant}\\
Three-dimensional convex-hull vertex counting & $t_u+t_q$ & \ref{sec:hull}\\
Three-dimensional maximal-point counting & $t_u+t_q$ & \ref{sec:maxima}\\
Two-dimensional distinct-color counting & $t_q$ & \ref{sec:distinct-color-counting}\\
Internal dictionary matching & $t_q$ & \ref{sec:internal-dictionary}\\
Binary Hamming-distance threshold & $t_q$ & \ref{sec:hamming-threshold}\\
Active-document existence & $t_q$ & \ref{sec:active-documents}\\
\end{longtable}
\endgroup

\subsection{Graph problems}\label{sec:graph-problems}

Here $N$ is the number of vertices. Except in subgraph connectivity and in \cref{sec:app-marked-distance}, the graph initially has no edges, and an update inserts an edge.

In \emph{directed reachability}, queries test whether one vertex reaches another; \emph{$s$-$t$ reachability} fixes both vertices, and \emph{\#SSR} counts vertices other than a fixed source reachable from it. \emph{Strong-connectivity} queries test a pair, count strongly connected components, or report the largest component's size~\cite{abboud_popular_2014,haeupler_incremental_2012}. \emph{Cycle detection} tests for a directed cycle~\cite{haeupler_incremental_2012,chen_almost-linear_2024}, and \emph{triangle detection} for an undirected triangle. \emph{Undirected shortest-path} queries return a distance $d$, or an approximation in $[d,\alpha d]$, preserving $d=\infty$; the $s$-$t$ version fixes both vertices. In \emph{subgraph connectivity}, updates turn on vertices of a fixed undirected graph, and queries test connectivity in the induced subgraph of vertices that are on.

\paragraph{The reduction.}
The reductions for reachability, shortest paths and subgraph connectivity are those of \Patrascu~\cite[Appendix~A]{patrascu_mihai_towards_2010}. Let $G$ be the graph with vertices $u_1,\dots,u_m$, $v_1,\dots,v_n$ and $t$, and with the edges $\{u_i,v_j\}$ for $j\in S_i$, inserted in Phase~I, and $\{v_j,t\}$ for $j\in X$: inserting $j$ into $X$ inserts $\{v_j,t\}$. In the directed problems, these edges are directed from $u_i$ to $v_j$ and from $v_j$ to $t$, so $G$ is a DAG of depth $2$. For subgraph connectivity, the fixed graph has the edges $\{u_i,v_j\}$ for $j\in S_i$ and $\{v_j,t\}$ for all $j$; initially exactly the $u_i$ and $t$ are on, and inserting $j$ into $X$ turns $v_j$ on. (\Patrascu instead turns on only the queried vertex $u_q$, in the final phase.)

\paragraph{Queries without an update.}
The out-neighbors of $u_q$ are the $v_j$ with $j\in S_q$, and $v_j$ has an edge to $t$ exactly when $j\in X$, so $t$ is reachable from $u_q$ exactly when $S_q\cap X\ne\emptyset$. Reversing all edges gives the version with the fixed source $t$. In the undirected graph, the distance between $u_q$ and $t$ is $2$ in this case, and at least $4$ otherwise, since $G$ is bipartite with $u_q$ and $t$ on the same side; so an approximation within any factor $\alpha<2$ tells the two cases apart. In subgraph connectivity, the neighbors of $u_q$ that are on are the $v_j$ with $j\in S_q\cap X$, and each of them is adjacent to $t$, so $u_q$ and $t$ are connected exactly when $S_q\cap X\ne\emptyset$.

\paragraph{Queries with one update.}
\begin{itemize}[leftmargin=*]
  \item \emph{$s$-$t$ reachability, \#SSR and $s$-$t$ shortest paths.} The directed cases are selector variants of \Patrascu's reduction, and the undirected case specializes Henzinger, Krinninger, Nanongkai and Saranurak~\cite[Lemma~3.2]{henzinger_unifying_2015} to a single set. Add a vertex $r$, isolated in Phases~I and~II, as the fixed source, with the fixed target $t$. The query $q$ inserts the edge $(r,u_q)$, or $\{r,u_q\}$ in the undirected case. In the directed case, $t$ is reachable from $r$ exactly when $S_q\cap X\ne\emptyset$, and the number of vertices other than $r$ that are reachable from $r$ is $\abs{S_q}+1+\mathbf 1_{S_q\cap X\ne\emptyset}$, which determines the answer since we store $\abs{S_q}$. In the undirected case, the distance between $r$ and $t$ is $3$ if $S_q\cap X\ne\emptyset$, and at least $5$ otherwise, so an approximation within any factor $\alpha<5/3$ tells the two cases apart.
  \item \emph{Strong connectivity and cycle detection.} The query $q$ inserts the edge $(t,u_q)$. Since $G$ is acyclic, every cycle uses this edge, so a cycle exists exactly when $t$ is reachable from $u_q$ in $G$, that is, when $S_q\cap X\ne\emptyset$. In that case $u_q$, $t$ and some $v_j$ lie in one strongly connected component; otherwise every strongly connected component is a single vertex. So each of the three strong-connectivity queries tells the two cases apart.
  \item \emph{Triangle detection.} This specializes \cite[Lemma~3.3]{henzinger_unifying_2015} to a single set. The query $q$ inserts the edge $\{u_q,t\}$. Since $G$ is bipartite with $u_q$ and $t$ on the same side, every triangle consists of $u_q$, $t$ and some $v_j$ with $j\in S_q\cap X$.
\end{itemize}

\begin{corollary}[Dynamic graph problems]\label{thm:reach}
Consider a zero-error cell-probe data structure on $N$ vertices with word size $w=\Theta(\log N)$, worst-case update time $t_u$ and expected query time $t_q$.
\begin{enumerate}[label=(\alph*), leftmargin=*]
  \item For incremental directed reachability, incremental undirected shortest paths, exact or within any factor $\alpha<2$, and incremental subgraph connectivity, $t_q=\Omega(\mathcal L_N)$. For directed reachability, this holds already on depth-$2$ DAGs whose queries ask whether a fixed sink is reachable, or whether a vertex is reachable from a fixed source. For undirected shortest paths, it holds already for distances from a fixed source.
  \item For incremental $s$-$t$ reachability, \#SSR, undirected $s$-$t$ shortest paths, exact or within any factor $\alpha<5/3$, strong connectivity in each of the three forms above, cycle detection, and triangle detection, $t_u+t_q=\Omega(\mathcal L_N)$.
\end{enumerate}
In particular, $\max(t_u,t_q)=\Omega\big((\log N/\log\log N)^2\big)$ for all of these problems.
\end{corollary}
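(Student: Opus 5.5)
The plan is to derive \cref{thm:reach} directly from \cref{lem:app-transfer}, applied once for each listed problem with the constructions described just above. For each problem I will check the hypotheses of the lemma. The first is that the initial target instance depends only on $S_1,\dots,S_m$: the edges $\{u_i,v_j\}$, or for subgraph connectivity the whole fixed graph with $u_1,\dots,u_m,t$ switched on, are built during Phase~I. The instance size must also be $N=m+n+O(1)=\Theta(n)$, with padding by isolated vertices giving every larger $N$ with $\log N=\Theta(\log n)$. Next, each insertion of $j$ into $X$ becomes a single target update: the edge $(v_j,t)$ or $\{v_j,t\}$, or switching $v_j$ on. Repeated insertions are filtered out using the stored characteristic vector of $X$, which costs $O(1)$ probes, so the operation sequence is always valid. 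Finally, every argument is a vertex name that fits in $O(1)$ words.

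For part (a), the query $q$ is simulated by a single target query, with no target update, so the lemma gives $t_q=\Omega(\mathcal L_N)$. The correctness arguments are the ones in ``Queries without an update.'' Reachability from $u_q$ to $t$ holds exactly when $S_q\cap X\ne\emptyset$, and reversing all edges gives the fixed-source version. In the bipartite undirected graph the $u_q$--$t$ distance is $2$ when the sets intersect and at least $4$ (possibly $\infty$) otherwise, so any $\alpha<2$ approximation separates the two cases; taking $t$ as the fixed source gives the fixed-source statement. Subgraph connectivity is handled the same way. The depth-$2$ DAG claim holds because all edges go $u\to v\to t$.

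For part (b), the query needs one target update followed by one target query, as listed in ``Queries with one update.'' Here the lemma bounds $t_u+t_q$. For \#SSR the answer is recovered from the count using $\abs{S_q}$, which is stored in an $O(1)$-probe table during Phase~I; this is allowed by clause (i). The $r$--$t$ distance is $3$ when the sets intersect and at least $5$ otherwise, so any $\alpha<5/3$ works. In the cycle, strong-connectivity and triangle cases, the inserted edge creates a cycle, a nontrivial strongly connected component, or a triangle exactly when the sets intersect. Each of these is checked directly from the acyclicity and bipartiteness of $G$. The ``in particular'' bound on $\max(t_u,t_q)$ then follows from the last sentence of \cref{lem:app-transfer}.

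The main point needing care is clause (ii) of the lemma. The target operations must form a valid sequence that does not depend on the random tape, so that the expected-query-time guarantee applies. The query-time update (for example inserting $(r,u_q)$) occurs only once and is last, so writes it makes stay in temporary workspace, as in the proof of the lemma. I will also confirm that in subgraph connectivity the ``turn on'' updates never repeat a vertex, which the filtering on $X$ ensures.
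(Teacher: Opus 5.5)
Your proposal is correct and takes the same route as the paper. The paper's proof notes that the graph has at most $2n+2$ vertices, that each insertion into $X$ is one target update, and that the query is one target query, preceded by one target update in part~(b); it then invokes \cref{lem:app-transfer}, and your extra hypothesis checks (Phase~I construction, filtering repeated insertions, storing $\abs{S_q}$ for \#SSR) match the paper's stated conventions.
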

\begin{proof}
The graph has at most $2n+2$ vertices. Each insertion into $X$ is simulated by one update, and the query by one query, preceded by one update in~(b). So \cref{lem:app-transfer} gives the bounds.
\end{proof}

\subsubsection{Bipartite matching}\label{sec:bipartite-matching}

The graph is bipartite and unweighted, an update inserts an edge, and a query asks for the size of a maximum matching, or whether a perfect matching exists. We specialize the reduction from reachability to bipartite matching of Abboud and Vassilevska Williams~\cite[Lemma~6.2]{abboud_popular_2014}. The two sides are
\[
 L=\{s_L\}\cup\{u_{i,L}:i\in[n]\}\cup\{v_{j,L}:j\in[n]\},
 \qquad
 R=\{t_R\}\cup\{u_{i,R}:i\in[n]\}\cup\{v_{j,R}:j\in[n]\}.
\]
In Phase~I, insert the edges $\{u_{i,L},u_{i,R}\}$ and $\{v_{j,L},v_{j,R}\}$, and the edges $\{u_{i,L},v_{j,R}\}$ for $j\in S_i$. Inserting $j$ into $X$ inserts $\{v_{j,L},t_R\}$. The query $q$ inserts $\{s_L,u_{q,R}\}$, and asks whether there is a perfect matching.

The edges $\{u_{i,L},u_{i,R}\}$ and $\{v_{j,L},v_{j,R}\}$ form a matching of size $2n$, which leaves exactly $s_L$ and $t_R$ unmatched, and its augmenting paths are exactly
\[
 s_L,\ u_{q,R},\ u_{q,L},\ v_{j,R},\ v_{j,L},\ t_R
 \qquad (j\in S_q\cap X).
\]
So a perfect matching, of size $2n+1$, exists exactly when $S_q\cap X\ne\emptyset$, and otherwise the maximum matching has size $2n$. The graph has $4n+2$ vertices. For perfect matching, take even $N$ and pad with disjoint edges; for maximum matching size, an isolated vertex also permits odd $N$. Each insertion into $X$ is one update, and the query is one update and one query, so \cref{lem:app-transfer} gives $t_u+t_q=\Omega(\mathcal L_N)$.

\subsubsection{Exact densest subgraph}\label{sec:app-density}

The density of a nonempty vertex set $U$ is $\abs{E(G[U])}/\abs U$. The graph is simple and unweighted, an update inserts an edge, and a query asks for the maximum density $\rho^*(G)$ over all nonempty $U$, or whether $\rho^*(G)\ge\theta$ for a threshold $\theta$ fixed in advance. We specialize the reduction of Henzinger, Krinninger, Nanongkai and Saranurak~\cite[Section~3.6, Theorems~3.23 and~3.25]{henzinger_unifying_2015}, whose vector-matrix-vector formulation we use with the row vector $e_q$, supplied in the final phase.

Let $k:=6n$. For each $i\in[n]$, the graph has a triple $R_i$ of vertices with a designated vertex $r_i\in R_i$, and for each $j\in[n]$, a triple $Y_j$ with a designated vertex $y_j\in Y_j$. For each $(i,j)\in[n]^2$, it has a set $P_{ij}$ of $k$ further vertices with two distinct endpoints $a_{ij}$ and $b_{ij}$. In Phase~I, insert the edges $r_ia_{ij}$ and $b_{ij}y_j$ for all $i,j$, and, if $M_{ij}=1$, a path from $a_{ij}$ to $b_{ij}$ through all vertices of $P_{ij}$. A triple is \emph{active} if its vertices form a triangle. Inserting $j$ into $X$ inserts the three edges of a triangle on $Y_j$. The query $q$ inserts the three edges of a triangle on $R_q$, and asks whether $\rho^*(G)\ge\theta$, where
\[
\theta:=\frac{k+7}{k+6}.
\]
The graph has $kn^2+6n=6n^3+6n$ vertices, and isolated vertices pad it without changing $\rho^*(G)$.

\begin{lemma}\label{lem:app-density}
After the query inserts the triangle on $R_q$, we have $\rho^*(G)\ge\theta$ if and only if $S_q\cap X\ne\emptyset$.
\end{lemma}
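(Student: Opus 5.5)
The plan is a direct counting argument on an optimal vertex set. First, the easy direction. Suppose $j\in S_q\cap X$ and take $U:=R_q\cup Y_j\cup P_{qj}$. This set has $k+6$ vertices. Its edges are the two triangles (three edges each), the $k-1$ path edges, and the connectors $r_qa_{qj}$ and $b_{qj}y_j$, for a total of $k+7$. So $\rho^*(G)\ge (k+7)/(k+6)=\theta$.

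For the converse, assume $S_q\cap X=\emptyset$ and let $U$ be a set of maximum density. If $\rho^*(G)<1$ we are done, so assume the density of $U$ is at least $1$.
\begin{itemize}[leftmargin=*]
  \item \emph{Connectivity.} The density of a disjoint union is at most the maximum density of its parts, so we may take $G[U]$ connected.
  \item \emph{Minimum degree $2$.} Removing a vertex of degree $d\le1$ from a set with $e\ge v$ does not decrease the density, since $(e-d)/(v-1)\ge e/v$ holds exactly when $e\ge dv$. So we may assume $G[U]$ has minimum degree at least $2$ (or $U$ is a single triangle, of density $1$).
  \item \emph{Structure.} Vertices of inactive triples other than the designated ones are isolated. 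A non-designated vertex of an active triple has degree $2$ only if its whole triangle lies in $U$. A vertex of $P_{ij}$ has degree $2$ only if the entire path, both connectors and both designated vertices $r_i,y_j$ lie in $U$.
\end{itemize}
Hence $U$ is a union of $a$ full active triangles, $h$ inactive designated vertices (\emph{hubs}), and $p$ full paths, each joining an $R$-side hub to a $Y$-side hub.

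Count vertices and edges:
\[ v = 3a + h + kp, \qquad e = 3a + (k+1)p, \qquad e - v = p - h. \]
Then density at least $\theta=1+1/(k+6)$ is equivalent to $(p-h)(k+6)\ge 3a+h+kp$, that is, to
\[ 6p \;\ge\; 3a + (k+7)h. \]
The active hubs are exactly $r_q$ and the $y_j$ with $j\in X$. A path between two active hubs would be $P_{qj}$ with $M_{qj}=1$ and $j\in X$, i.e.\ $j\in S_q\cap X$, which we excluded. So every path in $U$ has at least one inactive endpoint. If $h=0$, this forces $p=0$, and $U$ is a single triangle of density $1<\theta$. If $h\ge1$, each hub is incident to at most $n$ paths (one per index on the other side), so $p\le nh$. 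Then $6p\le 6nh=kh<(k+7)h\le 3a+(k+7)h$, contradicting the displayed inequality. Hence $\rho^*(G)<\theta$.

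The main obstacle is this last counting step, which is where the choice $k=6n$ matters. Configurations such as two inactive $r$-hubs joined through many active $Y$-hubs have density approaching $1+1/(k+3/2)$, which exceeds $\theta$, if the number of such hubs is not bounded. The reduction must therefore use the degree bound $n$ per hub, and must exclude paths between two active hubs, so that $6p\le kh$. The leaf-removal normalization also needs care: we must check that every non-designated vertex of degree at least $2$ forces its full gadget into $U$, so that the clean formula $e-v=p-h$ holds.
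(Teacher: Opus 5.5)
Your proof is correct in substance, and it ends with exactly the paper's count. Your criterion $6p\ge 3a+(k+7)h$ is the paper's identity $(k+7)v-(k+6)e=3\tau+(k+7)\delta-6\pi$ with $(a,h,p)=(\tau,\delta,\pi)$. Both arguments close it with the same bound $p\le nh$, which rests on the fact that no full path joins two active designated vertices when $S_q\cap X=\emptyset$.

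The difference is how $U$ is normalized.
\begin{itemize}
\item The paper starts from an arbitrary nonempty $U$ with $\Phi(U)=\abs{E(G[U])}-\theta\abs U\ge0$. It applies local moves that strictly increase this linear potential: it deletes partial components of the sets $P_{ij}$ and stray triple vertices. It also \emph{adds} the missing vertices of an active triple whose designated vertex lies in $U$, which uses $\theta<3/2$.
\item You take a maximum-density set, pass to a component, and peel vertices of degree at most $1$, which is legitimate once the density is at least $1$. Peeling removes partial paths and stray non-designated triple vertices in one uniform step and avoids the component-by-component analysis. However, it can only delete vertices.
\end{itemize}

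Because peeling cannot add vertices, your structural bullets miss one configuration. A designated vertex of an active triple, such as $r_q$ or $y_j$ with $j\in X$, may lie in $U$ with degree at least $2$ supplied entirely by full paths, while the other two vertices of its triple are absent. Minimum degree $2$ alone permits this. Such a vertex is neither in a full active triangle nor an inactive hub, so your decomposition of $U$ is incomplete as stated.

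The repair is one sentence. Such a vertex contributes one vertex and no edges beyond the path edges already counted. You can include it in $h$: the bound $p\le nh$ survives, since every path still has an inactive endpoint. Alternatively, note that it makes $v$ larger than $3a+h+kp$ and so only lowers the density. With this addition the argument is complete. The parenthetical about a single triangle in the peeling step is unnecessary, since a triangle already has minimum degree $2$.
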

\begin{proof}
If $j\in S_q\cap X$, then $R_q\cup P_{qj}\cup Y_j$ has $k+6$ vertices and induces $(k-1)+2+3+3=k+7$ edges, so its density is $\theta$.

Conversely, suppose that $S_q\cap X=\emptyset$, and that some nonempty $U$ has $\Phi(U):=\abs{E(G[U])}-\theta\abs U\ge0$. We modify $U$ in the following steps, each of which strictly increases $\Phi(U)$; in particular, $U$ never becomes empty, since $\Phi(\emptyset)=0$.
\begin{itemize}[leftmargin=*]
  \item For each $(i,j)$, remove every component of $G[U\cap P_{ij}]$, unless it is all of $P_{ij}$, $M_{ij}=1$, and $r_i,y_j\in U$. A removed component with $v$ vertices spans at most $v-1$ edges and has at most one edge to $r_i$ or $y_j$, so its removal deletes at most $v$ edges and increases $\Phi(U)$ by at least $(\theta-1)v>0$.
  \item For each triple, if it is not active, remove its vertices other than its designated vertex; they are isolated. If it is active and its designated vertex is not in $U$, remove its remaining vertices, which span at most one edge. If it is active and its designated vertex is in $U$, add its missing vertices: adding one vertex adds two edges, and adding two vertices adds three edges, so $\Phi(U)$ increases since $\theta<3/2$. These steps do not create new edges to the sets $P_{ij}$, since only designated vertices have such edges.
\end{itemize}
Now let $\pi$ be the number of sets $P_{ij}\subseteq U$, $\tau$ the number of active triples contained in $U$, and $\delta$ the number of designated vertices of inactive triples in $U$. Then $G[U]$ has $v=k\pi+3\tau+\delta$ vertices and $e=(k+1)\pi+3\tau$ edges. The only active triple $R_i$ is $R_q$, and the triples $Y_j$ with $j\in S_q$ are inactive, since $S_q\cap X=\emptyset$; so each $P_{ij}\subseteq U$ has an endpoint adjacent to the designated vertex of an inactive triple in $U$. Every designated vertex is adjacent to at most $n$ of the sets $P_{ij}$, so $\pi\le n\delta$. Moreover $\tau+\delta>0$, since $U$ is nonempty and every $P_{ij}\subseteq U$ comes with $r_i,y_j\in U$. Using $k=6n$,
\[
(k+7)v-(k+6)e=3\tau+(k+7)\delta-6\pi\ge3\tau+7\delta>0,
\]
so $e/v<\theta$, which contradicts $\Phi(U)\ge0$.
\end{proof}

Each insertion into $X$ is three updates, and the query is three updates and one query, so \cref{lem:app-transfer} gives $t_u+t_q=\Omega(\mathcal L_N)$. The numerator and denominator of $\theta$ have $O(\log n)$ bits. The bound concerns exact densities: $\theta$ tends to $1$, so it does not extend to constant-factor approximation.

\subsubsection{Nearest marked vertex}\label{sec:app-marked-distance}

The graph is fixed, undirected and unweighted, and every vertex is marked or unmarked. An update marks a vertex, and a query asks for the distance $d(v)$ from a given vertex $v$ to the nearest marked vertex, or for a one-sided approximation $\widehat d(v)$ with $d(v)\le\widehat d(v)\le\alpha\,d(v)$, where $1\le\alpha<3$. We adapt the incidence-graph construction of Henzinger, Krinninger, Nanongkai and Saranurak~\cite[Lemma~3.7]{henzinger_unifying_2015} to marking only, adding a sentinel and hubs so that the graph is connected and all distances are finite.

The graph has vertices $r_1,\dots,r_n$, $v_1,\dots,v_n$, $a$, $b$ and $z$, and the edges $r_iv_j$ for $j\in S_i$, and $r_ia$, $ab$, $bz$ and $bv_j$ for all $i$ and $j$. It is connected and bipartite, with sides $\{r_1,\dots,r_n,b\}$ and $\{v_1,\dots,v_n,a,z\}$. Initially only $z$ is marked. Inserting $j$ into $X$ marks $v_j$, and the query $q$ asks for $d(r_q)$.

The marked vertices are $z$ and the $v_j$ with $j\in X$. So $r_q$ has a marked neighbor, and $d(r_q)=1$, exactly when $S_q\cap X\ne\emptyset$. Otherwise all marked vertices lie on the other side of the bipartition and are not adjacent to $r_q$, so their distance from $r_q$ is odd and at least $3$, and the path $r_q,a,b,z$ shows that $d(r_q)=3$. So the test $\widehat d(r_q)<3$ tells the two cases apart. The graph has $2n+3$ vertices, and unmarked leaves attached to $b$ pad it without changing these distances. The costs are one update per insertion and one query, so \cref{lem:app-transfer} gives $t_q=\Omega(\mathcal L_N)$.

\subsection{Planar graph problems}\label{sec:planar}

Here $N$ is the number of vertices, an update inserts an edge, and every graph that arises is planar. The reductions adapt the two-grid construction of Abboud and Dahlgaard~\cite[Definition~1, Corollary~1, and Sections~3 and~6]{abboud_popular_2016} and their selectors for diameter and girth. We add a permanent dummy connection and choose positive connector lengths, so that all updates are insertions. A local shortcut first encodes one matrix entry in a grid distance; a second grid and the connectors turn these entry tests into an OR over the inserted elements.

\paragraph{The grid.}
For a Boolean $p\times p$ matrix $B$ with $p\ge2$, the grid $G(B)$ has vertices $u_{rc}$, $v_{rc}$, $w_{rc}$ and $x_{rc}$ for $r,c\in[p]$, top terminals $a_c$, right terminals $b_r$, and the following edges.
\begin{center}
\begin{tabular}{@{}ll@{}}
\toprule
Edge & Length\\
\midrule
$a_cv_{1c}$, and $u_{r-1,c}v_{rc}$ for $r>1$ & $2c-1$\\
$v_{rc}u_{rc}$ & $1$\\
$u_{rc}w_{rc}$ & $2$\\
$w_{rc}u_{r,c+1}$ for $c<p$, and $w_{rp}b_r$ & $2p-2$\\
$v_{rc}x_{rc}$ & $1$\\
$x_{rc}w_{rc}$, present exactly when $B_{rc}=1$ & $1$\\
\bottomrule
\end{tabular}
\end{center}
When $B_{rc}=0$, the vertex $x_{rc}$ is a leaf, which does not affect distances between terminals; so $G(B)$ has $4p^2+2p$ vertices for every $B$.

\begin{lemma}[Grid distances]\label{lem:planar-grid-distance}
For every Boolean $p\times p$ matrix $B$ with $p\ge2$ and all $q,j\in[p]$,
\[
 d_{G(B)}(a_q,b_j)=2p(p-q+1)+2qj-B_{jq},
\]
which is attained by the path down column $q$ and then right along row $j$, through $x_{jq}$ if $B_{jq}=1$.
\end{lemma}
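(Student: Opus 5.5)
We prove the identity by matching an explicit path (upper bound) with a lower bound on every $a_q$--$b_j$ path.

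\emph{Upper bound.} We add up the lengths along the path described in the statement. Going down column $q$ to $v_{jq}$ uses $j$ edges of length $2q-1$ and $j-1$ unit edges $v_{rq}u_{rq}$. The final step to $u_{jq}$ adds one more unit, so $u_{jq}$ is reached at cost $j(2q-1)+j=2qj$. Going right along row $j$ from $u_{jq}$ to $b_j$ crosses $p-q+1$ columns. Each crossing costs $2+(2p-2)=2p$, for a total of $2p(p-q+1)$. If $B_{jq}=1$, we replace the two-edge segment $v_{jq}u_{jq}w_{jq}$, of length $3$, by $v_{jq}x_{jq}w_{jq}$, of length $2$. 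This saves exactly $1$ and gives the claimed value.

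\emph{Lower bound.} The plan is a potential-function certificate. We define $\phi$ on all vertices with $\phi(a_q)=0$, $\phi(b_j)$ equal to the claimed distance, and $\abs{\phi(x)-\phi(y)}\le\ell(xy)$ for every edge, or at least $\phi(y)-\phi(x)\le\ell(xy)$ in both orientations. Summing along any path then bounds its length below by $\phi(b_j)$. The natural candidate is the cost of the cheapest ``down in column $q$, then horizontal'' route:
\[ \phi(u_{rc}) = 2qr + 2p\abs{c-q}. \]
We extend $\phi$ to $v_{rc}$, $w_{rc}$ and $x_{rc}$ by interpolating along the local edges, and to the terminals by the length of their unique edge. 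Horizontal edges cost $2p$ per column, so they are tight for $\phi$ in both directions. Vertical edges in a column $c\ge q$ cost $2c\ge 2q$, so they are fine. Near $u_{jq}$, the shortcut through $x$ saves $1$; we lower $\phi$ by $\tfrac12$ or $1$ on $w_{jq}$ and on the horizontal continuation, which matches the $-B_{jq}$ term.

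\emph{Main obstacle.} The difficulty is the columns $c<q$. There a vertical step costs only $2c<2q$, so the naive $\phi$ violates the Lipschitz condition: a path could go left, descend cheaply, and come back. I would handle this with a direct exchange argument on paths instead of a pointwise potential. Suppose a path enters column $c<q$ at some row. Reaching column $c$ and returning to column $\ge q$ costs at least $2\cdot 2p(q-c)$ in horizontal edges, since every column boundary between $c$ and $q$ is crossed twice. The vertical savings are at most $2(q-c)$ per row over at most $p$ rows, which is $2p(q-c)$. So the detour is strictly worse by at least $2p(q-c)\ge2p>1$, and it cannot be repaid even by a shortcut saving.

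Similarly, a path with vertical moves in columns $c>q$ pays at least $2$ extra per such row compared with descending in column $q$, which exceeds the single unit a shortcut can save. Backtracking moves (upward, or rightward-then-leftward) only add positive length. After these reductions, the remaining paths are monotone staircases confined to column $q$ and to the right of it, for which the upper-bound computation is exact. The only subtlety left is the accounting of the one possible shortcut, used at most once on the turn in row $j$ and column $q$.
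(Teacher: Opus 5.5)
Your upper bound is correct, but the lower bound, which is the real content of the lemma, is not established.

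\emph{The potential fails in more places than you notice.} Besides the columns $c<q$, it also fails at every cell $(r,q)$ with $B_{rq}=1$ and $r\ne j$. The route $u_{r-1,q},v_{rq},x_{rq},w_{rq},u_{r,q+1}$ (starting at $a_q$ when $r=1$) has length $(2q-1)+1+1+(2p-2)=2q+2p-1$, while $\phi$ rises by $2q+2p$. Lowering $\phi$ near $w_{jq}$ and along row $j$ does not repair these cells. Interpolating on the inner vertices cannot help either: a violation along a path forces a violation on one of its edges. So the shortcut is not a local phenomenon at $(j,q)$. It is offered in every cell, and the proof must show that only the one at $(j,q)$ can pay off.

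\emph{The exchange argument is asserted rather than proved.} Your left-detour estimate has a large margin and can be repaired. As written, though, it omits the unit saved in each cell whose $x$-route the detour uses. That route moves down and right at once for $2c+2p-1$, so costs do not split edge by edge into $2p$ per column boundary plus $2c$ per row.

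The real missing piece is the reduction to monotone staircases. ``Backtracking moves only add positive length'' is essentially what the lemma asserts, and it is not obvious in $G(B)$. A path that overshoots row $j$, or steps right and then back, reaches different shortcut cells. Deleting the backtrack therefore changes which shortcuts are used, so you must compare against the target value, not against the path with the backtrack removed. Your ``at most once'' bookkeeping for shortcuts is only justified for monotone paths.

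\emph{How to close the gap.} One clean route is the paper's. First compress each cell to a triangle on its corners $(r-1,c),(r,c),(r,c+1)$, with lengths $2c$, $2p$ and $2c+2p-B_{rc}$. Then anchor the potential at the target rather than the source:
\[
F(r,c)=2qj+2p(c-q)-2\min\{c,q\}\max\{j-r,0\}+\eta(r,c).
\]
Each part handles one of the difficulties above.
\begin{itemize}
\item The signed term $2p(c-q)$, instead of your $2p\lvert c-q\rvert$, together with vertical increments $2\min\{c,q\}$, handles the columns left of $q$.
\item The vertical increments vanish below row $j$, which handles overshooting.
\item The correction is $\eta=1$ for $r<j,\ c<q$; $\eta=-1$ for $r<j,\ c>q$; $\eta=-B_{jq}$ for $r\ge j,\ c>q$; and $\eta=0$ otherwise. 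It makes every diagonal edge admissible, with only the diagonal of cell $(j,q)$ tight.
\end{itemize}
Since $F(0,q)=0$ and $F(j,p+1)=2qj+2p(p-q+1)-B_{jq}$, this gives the lower bound directly, with no path surgery.
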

\begin{proof}
\emph{Compressing each cell.} Identify each vertex $u_{rc}$ with the corner $(r,c)$, the top terminal $a_c$ with $(0,c)$, and the right terminal $b_r$ with $(r,p+1)$. The cell $\{v_{rc},w_{rc},x_{rc}\}$ meets the rest of the graph only at the three corners $(r-1,c)$, $(r,c)$ and $(r,c+1)$. Within the cell, the distances from $(r-1,c)$ to $(r,c)$, from $(r,c)$ to $(r,c+1)$, and from $(r-1,c)$ to $(r,c+1)$ are $2c$, $2p$ and $2c+2p-B_{rc}$: the edge $x_{rc}w_{rc}$ saves one unit only between the first and third corners. Replace the cell by the triangle on its three corners with these edge lengths. Since the cells are disjoint, this preserves all distances between corners. The compressed graph has a vertical, a horizontal and a diagonal edge for every $r,c\in[p]$, and no horizontal edges on row $0$.

\smallskip\noindent\emph{A lower-bound potential.} Fix $q,j$, and let $\beta:=B_{jq}$ and $z_r:=\max\{j-r,0\}$. Define
\[
 F(r,c):=2qj+2p(c-q)-2\min\{c,q\}z_r+\eta(r,c),
 \qquad
 \eta(r,c):=
 \begin{cases}
  1,&r<j,\ c<q,\\
  -1,&r<j,\ c>q,\\
  -\beta,&r\ge j,\ c>q,\\
  0,&\text{otherwise}.
 \end{cases}
\]
Let $V_{rc}:=F(r,c)-F(r-1,c)$, $H_{rc}:=F(r,c+1)-F(r,c)$ and $D_{rc}:=V_{rc}+H_{rc}$ be the increments along the vertical, horizontal and diagonal edges. Direct subtraction gives $0\le V_{rc}\le2c$ and $0\le H_{rc}\le2p$; for the lower bound on $H_{rc}$, use $r\ge1$, so that $z_r\le p-1$ and the smallest possible value is $2p-2z_r-1\ge1$. The diagonal increments are as follows, with an upper bound in the top-left entry and equality elsewhere:
\[
\begin{array}{c|ccc}
 &c<q&c=q&c>q\\\hline
 r<j&\le2p+2c-2&2p+2q-1&2p+2q\\
 r=j&2p+2c-1&2p+2q-\beta&2p+2q+1-\beta\\
 r>j&2p&2p-\beta&2p
\end{array}
\]
So $0\le D_{rc}\le2p+2c-1$ for $(r,c)\ne(j,q)$, and $D_{jq}=2p+2q-\beta$. Since $B$ is Boolean, every diagonal increment is at most the length $2p+2c-B_{rc}$ of its edge. Hence the potential changes by at most the length of each compressed edge, in either direction.

\smallskip\noindent\emph{Attaining the bound.} Since $F(0,q)=0$ and $F(j,p+1)=2qj+2p(p-q+1)-\beta$, every path from $a_q$ to $b_j$ has at least the claimed length. The path consisting of $j-1$ vertical edges in column $q$, the diagonal edge of cell $(j,q)$, and $p-q$ horizontal edges in row $j$ has exactly this length, and expanding its edges gives the stated path.
\end{proof}

\paragraph{The core.}
Let $p:=n+1$, pad $M$ with a zero row and a zero column, and let $B:=M^{\mathsf T}$; so the rows of $B$ are indexed by elements and its columns by sets, and row $p$ of $B$ is zero. To the right of $G(B)$, place a horizontally reflected copy of $G(0)$, with terminals $b'_1,\dots,b'_p$ facing $b_1,\dots,b_p$ and top terminals $a'_1,\dots,a'_p$, and let $r_q:=a'_{p-q+1}$. By \cref{lem:planar-grid-distance}, applied to $G(0)$, the distance from $b'_j$ to $r_q$ in the copy is $2pq+2(p-q+1)j$. Let
\[
 H:=4p(p+1),\qquad \lambda_j:=H+2(p+1)(p-j),\qquad D:=2H.
\]
For $j\in[p]$, the \emph{connector} of $j$ is an edge $b_jb'_j$ of length $\lambda_j$; the connectors join facing terminals, so they do not cross. The connector of $p$ is present from the start, and the connector of $j\in[n]$ is inserted when $j$ is inserted into $X$. The core has $8p^2+4p$ vertices, and every edge length is at most $\lambda_1=6p^2+4p-2<D$.

A path from $a_q$ to $r_q$ that uses the connector of $j$ once, and shortest paths in the two grids (\cref{fig:planar-core}), has length
\[
 \bigl(2p(p-q+1)+2qj-B_{jq}\bigr)+\lambda_j+\bigl(2pq+2(p-q+1)j\bigr)=D-B_{jq}.
\]
A path from one grid to the other uses an odd number of connectors, and a path that uses at least three of them has length at least $3H>D$, while the connector of $p$ gives a path of length $D$. Hence
\begin{equation}\label{eq:planar-bit-distance}
 d(a_q,r_q)=D-f_q,\qquad\text{where}\qquad f_q:=\max_{j\in X\cup\{p\}}B_{jq}=\mathbf 1_{S_q\cap X\ne\emptyset}.
\end{equation}

\begin{figure}[ht]
\centering
\begin{tikzpicture}[x=0.9cm,y=0.65cm,font=\small,
  route/.style={very thick,blue!65!black},
  terminal/.style={circle,fill=black,inner sep=1.4pt}]
  \foreach \c in {1,...,4} {
    \draw[gray!65] (\c,0)--(\c,-4);
    \draw[gray!65] (\c+8,0)--(\c+8,-4);
  }
  \foreach \r in {1,...,4} {
    \draw[gray!65] (1,-\r)--(5,-\r);
    \draw[gray!65] (8,-\r)--(12,-\r);
    \node[terminal] at (5,-\r) {};
    \node[terminal] at (8,-\r) {};
  }
  \draw (5,-1)--(8,-1);
  \draw[gray!65,dashed] (5,-2)--(8,-2);
  \draw (5,-4)--(8,-4);
  \draw[route] (2,0)--(2,-2)--(3,-3)--(5,-3)--(8,-3)--(10,-3)--(10,0);
  \node[terminal,label=above:{$a_q$}] at (2,0) {};
  \node[terminal,label=above:{$r_q$}] at (10,0) {};
  \node at (3,1.35) {$G(B)$};
  \node at (10,1.35) {reflected $G(0)$};
  \node[above] at (5,-3) {$b_j$};
  \node[above] at (8,-3) {$b'_j$};
  \node[above] at (6.5,-3) {$\lambda_j$};
  \node[below] at (6.5,-4) {dummy connector};
  \node[left] at (2.5,-2.5) {$B_{jq}$};
\end{tikzpicture}
\caption{The planar core, with cell interiors suppressed ($p=4$, $q=2$, $X=\{1,3\}$). The highlighted path uses the cell $(j,q)$ and the connector of $j=3$; its length is $D-B_{jq}$. The second grid and connector length cancel the dependence on $j$. Solid connectors are present; the dashed connector is absent. The dummy connector guarantees distance at most $D$.}
\label{fig:planar-core}
\end{figure}

\subsubsection{Distance}\label{sec:planar-distance}

A query asks for the distance between two vertices. The query $q$ asks for $d(a_q,r_q)$, which by~\eqref{eq:planar-bit-distance} is at most $D-1$ exactly when $S_q\cap X\ne\emptyset$; so the bound holds already for this threshold query.

For unit weights, subdivide every edge of the two grids into edges of length $1$, and replace the connector of $j$ by a path of length $\lambda_j$ from $b_j$ to $b'_j$. The path of $p$ is present from the start, and the path of $j\in[n]$ is present from the start except for its edge at $b_j$, which is inserted when $j$ is inserted into $X$. The incomplete paths hang from the right grid, so the graph stays connected, and~\eqref{eq:planar-bit-distance} still holds. Subdividing a grid adds its total edge length $3p^3+2p^2+\sum_{r,c}B_{rc}$, minus its number of edges $5p^2+\sum_{r,c}B_{rc}$, to its $4p^2+2p$ vertices, giving $3p^3+p^2+2p$ vertices; the connector paths add $\sum_j(\lambda_j-1)=5p^3+4p^2-2p$ vertices. So the graph has $11p^3+6p^2+2p$ vertices, for every $B$.

In both versions, pendant leaves pad the graph, each insertion into $X$ is one update, and the query is one query, so \cref{lem:app-transfer} gives $t_q=\Omega(\mathcal L_N)$.

\subsubsection{Directed girth}\label{sec:planar-girth}

A query asks for the length of a shortest directed cycle. As in \cite[Section~6]{abboud_popular_2016}, direct the horizontal grid edges to the right, the vertical edges of the left grid downwards and those of the right grid upwards, each path $v_{rc}x_{rc}w_{rc}$ from $v_{rc}$ to $w_{rc}$, the remaining leaf edges away from the grids, and each connector from $b_j$ to $b'_j$. Then both grids are acyclic, and all connectors go from the left grid to the right one, so the core is acyclic. The paths that attain the distances of \cref{lem:planar-grid-distance} in the two grids respect these directions, and directing the edges cannot shorten paths, so the directed distance from $a_q$ to $r_q$ is still $D-f_q$.

The query $q$ inserts an edge of length $1$ from $r_q$ to $a_q$, which can be drawn in the outer face, and asks for the girth $g$. Every cycle uses this edge, so $g=D+1-f_q$, and $g\le D$ exactly when $S_q\cap X\ne\emptyset$. This holds for positive weights and, after the subdivision of \cref{sec:planar-distance}, with each path directed like the edge it replaces, for unit weights. Directed leaves pad the graph. Each insertion into $X$ is one update, and the query is one update and one query, so \cref{lem:app-transfer} gives $t_u+t_q=\Omega(\mathcal L_N)$.

\subsubsection{Diameter}\label{sec:planar-diameter}

The graph is connected and has positive integer weights, and a query asks for the largest distance between two vertices. We add to the undirected core the selectors of \cite[Section~6]{abboud_popular_2016}, with permanent fallback edges that keep the graph connected. Let $L:=64p^3$. For each $c\in[p]$, add vertices $z_c$ and $z'_c$ with edges $z_ca_c$ and $z'_cr_c$ of length $1$, and add vertices $s$ and $t$ with edges $sz_c$ and $tz'_c$ of length $3L$. These two fans lie in the outer face, next to the two grids, and the edges $sa_c$ and $tr_c$ that a query may insert can be drawn next to the paths $sz_ca_c$ and $tz'_cr_c$; so the graph stays planar. It has $8p^2+6p+2$ vertices. Let $C$ be the graph without $s$ and $t$. Even with all connectors present, the total edge length of $C$ is at most $11p^3+9p^2+p<16p^3$. To pad, attach leaves to $C$; fewer than $16p+14$ are needed between consecutive values of $p$, so the total edge length of $C$ stays below $L/2$, and any two vertices of $C$ are at distance less than $L/2$.

The query $q$ inserts the edges $sa_q$ and $tr_q$, of length $L$ each, and asks for the diameter. A path between two vertices of $C$ through $s$ or $t$ has length at least $2L$, so it does not shorten their distance in $C$. A path from $s$ to $t$ that uses an edge $sz_c$ or $tz'_c$ has length at least $4L$, while the two new edges give a path of length at most $2L+D<3L$; so
\[
 d(s,t)=2L+d_C(a_q,r_q)=2L+D-f_q.
\]
Every other distance is less than $2L$: it is less than $L/2$ within $C$, and less than $L+L/2$ between $s$ or $t$ and a vertex of $C$. So the diameter is $2L+D-f_q$, which is at most $2L+D-1$ exactly when $S_q\cap X\ne\emptyset$. The largest weight is $3L=O(n^3)$; we do not subdivide the selector edges, which would change the diameter. Each insertion into $X$ is one update, and the query is two updates and one query, so \cref{lem:app-transfer} gives $t_u+t_q=\Omega(\mathcal L_N)$.

\subsubsection{Maximum-weight bipartite matching}\label{sec:planar-matching}

The graph is bipartite and has positive integer weights, and a query asks for the maximum weight of a matching. We adapt the vertex-splitting and weight-complement construction of Abboud and Dahlgaard~\cite[Section~4]{abboud_popular_2016} to the directed core of \cref{sec:planar-girth}, without the edge from $r_q$ to $a_q$. Let $N_{\mathrm{core}}:=8p^2+4p$ be the number of vertices of the core, and let $K:=D+1$. Replace each vertex $v$ of the core by two vertices $v^{\mathrm{in}}$ and $v^{\mathrm{out}}$, joined by an edge of weight $K$, and each directed edge from $u$ to $v$ of length $\ell$ by an edge $u^{\mathrm{out}}v^{\mathrm{in}}$ of weight $K-\ell$, which is positive since $\ell<D$. The graph is bipartite, with the vertices $v^{\mathrm{out}}$ on one side and the vertices $v^{\mathrm{in}}$ on the other. At each vertex of the core, the incoming and the outgoing edges occupy two contiguous sectors: in the left grid, the incoming edges come from the north and west and the outgoing edges leave to the south and east, and analogously in the right grid. So splitting the vertices preserves planarity, as in~\cite{abboud_popular_2016}. Add a vertex $\sigma$ on the side of the vertices $v^{\mathrm{out}}$ and a vertex $\tau$ on the other side, isolated in Phases~I and~II. Inserting $j$ into $X$ inserts the edge that replaces the connector of $j$. The query $q$ inserts the edges $\sigma a_q^{\mathrm{in}}$ and $r_q^{\mathrm{out}}\tau$, of weight $K$ each, which can be drawn in the outer face, and asks for the maximum weight of a matching.

Compare a perfect matching with the matching that consists of the $N_{\mathrm{core}}$ edges $v^{\mathrm{in}}v^{\mathrm{out}}$. Their symmetric difference consists of an alternating path from $\sigma$ to $\tau$, which corresponds to a directed path from $a_q$ to $r_q$ in the core, and of alternating cycles, which would correspond to directed cycles and hence do not exist. Conversely, every directed path from $a_q$ to $r_q$ gives a perfect matching in this way, and if the path has length $d$, then the perfect matching has weight $(N_{\mathrm{core}}+1)K-d$. Since $d(a_q,r_q)\le D$, some perfect matching has weight at least $(N_{\mathrm{core}}+1)K-D=N_{\mathrm{core}}K+1$, while a matching that is not perfect has at most $N_{\mathrm{core}}$ edges, each of weight at most $K$. So the maximum weight of a matching is
\[
 (N_{\mathrm{core}}+1)K-(D-f_q)=N_{\mathrm{core}}K+1+f_q,
\]
which is at least $N_{\mathrm{core}}K+2$ exactly when $S_q\cap X\ne\emptyset$. The graph has $2N_{\mathrm{core}}+2$ vertices and weights $O(n^2)$, and isolated vertices pad it. Each insertion into $X$ is one update, and the query is two updates and one query, so \cref{lem:app-transfer} gives $t_u+t_q=\Omega(\mathcal L_N)$.

\subsection{Set and array problems}\label{sec:other-problems}

\Patrascu~\cite{patrascu_mihai_towards_2010} reduced the multiphase problem to Langerman's, Pagh's and Erickson's problems, and Chan, Durocher, Larsen, Morrison and Wilkinson~\cite{chan_linear-space_2014} reduced it to range mode. For set disjointness and Pagh's problem, $N$ is the size of the universe, and the instances have $O(N)$ sets; for the other problems, $N$ bounds the number of array or matrix entries, and the constructions use $\Theta(n^2)$ entries.

\subsubsection{Set disjointness}\label{sec:set-disjointness}

A family of subsets of $[N]$ is maintained under insertions of elements into its sets, and a query asks whether two sets of the family are disjoint \cite{kopelowitz_dynamic_2015}. This problem contains multiphase Disjointness directly: the family consists of $S_1,\dots,S_m$ and $X$, inserting an element into $X$ is an update, and the query $q$ asks whether $S_q$ and $X$ are disjoint.

\subsubsection{Langerman's problem}\label{sec:langerman}

An array of integers is maintained under updates that assign a value to an entry, and a query asks whether some nonempty prefix has sum zero. The reduction is \Patrascu's~\cite[Appendix~A]{patrascu_mihai_towards_2010}. The array consists of $A[1]$, initially $1$, followed by a block for each $j\in[n]$: a head $h_j$, then a pair $(a_{j,i},b_{j,i})$ for each $i\in[m]$, and then a tail $t_j$. In Phase~I, set $(a_{j,i},b_{j,i}):=(1,1)$ if $j\in S_i$ and $(2,0)$ otherwise, and $(h_j,t_j):=(0,-2m)$. Inserting $j$ into $X$ sets $(h_j,t_j):=(-2m,0)$. The query $q$ sets $A[1]:=2(m-q)+1$, and asks whether some prefix sum is zero.

Every block sums to zero, so a prefix that ends at $A[1]$ or at a tail has sum $A[1]>0$. A prefix that ends at the head or at a pair entry of block $j$ has sum $A[1]+h_j+y$, where $y\in[0,2m]$ is the sum of the pair entries of block $j$ in the prefix; $y$ is even, or $y=2i-1$ for some $i$ with $j\in S_i$. If $j\notin X$, then $h_j=0$, and this sum is positive. If $j\in X$, then it equals $y-(2q-1)$, which is zero exactly when $y=2q-1$, and block $j$ has such a prefix exactly when $j\in S_q$. So some prefix sum is zero exactly when $S_q\cap X\ne\emptyset$.

\subsubsection{Pagh's problem}\label{sec:pagh}

A family of subsets of a universe is maintained under updates that add the intersection of two sets of the family to the family, and a query asks whether a given element belongs to a given set of the family. The reduction is \Patrascu's~\cite[Appendix~A]{patrascu_mihai_towards_2010}, with the universe added to the initial family to handle $X=\emptyset$. The universe is $[m]$, and the initial family consists of $Y_0:=[m]$ and $Y_j:=\{i\in[m]:j\notin S_i\}$ for $j\in[n]$. We keep the index of a current set $C$, initially $Y_0$. Inserting $j$ into $X$ adds $C\cap Y_j$ to the family and makes it the current set. Then $C=\{i\in[m]:S_i\cap X=\emptyset\}$, so the query $q$ asks whether $q\in C$.

\subsubsection{Erickson's problem}\label{sec:erickson}

A matrix of integers is maintained under updates that increment all entries of a row or of a column, and a query asks for the largest entry. The reduction is \Patrascu's~\cite[Section~1.2, proof of Theorem~2]{patrascu_mihai_towards_2010}. Initially the matrix is $(M_{ij})_{i\in[m],j\in[n]}$. Inserting $j$ into $X$ increments column $j$, and the query $q$ increments row $q$ and asks for the largest entry. The entry in row $i$ and column $j$ is then $M_{ij}+x_j+\mathbf 1_{i=q}$, so the largest entry is $3$ exactly when $S_q\cap X\ne\emptyset$. A quadratic-logarithmic lower bound for a counting version of this problem, whose query asks for the parity of the number of largest entries, is proved in~\cite[Corollary~1.2]{ko_omega_2026}.

\subsubsection{Range mode}\label{sec:range-mode}

An array is maintained under updates that assign a value to an entry, and a query asks for a mode, that is, a most frequent value, of a given subarray. We adapt the reduction in \cite[Section~9.2]{chan_linear-space_2014}, using a returned mode instead of its frequency. The array consists of $m+1$ chunks of $n$ entries each. In Phase~I, chunk $i\in[m]$ lists the elements of $[n]\setminus S_i$ followed by those of $S_i$, and the last chunk consists of zeros; we also store the sizes $\abs{S_i}$ and the entries $M_{ij}$. When $j$ is inserted into $X$ as its $k$-th element, set $A[mn+k]:=j$; we also keep $\abs X$. If $S_q=\emptyset$, the query $q$ answers ``disjoint''. Otherwise it asks for a mode $z$ of $A[qn-\abs{S_q}+1:mn+\abs X]$, and answers ``disjoint'' exactly when $z\notin S_q\cap X$.

The queried subarray consists of the elements of $S_q$, then $m-q$ chunks that are each a permutation of $[n]$, and then the elements of $X$. So each $j\in[n]$ occurs $(m-q)+M_{qj}+x_j$ times in it, and no other value occurs. If $S_q\cap X\ne\emptyset$, the modes are exactly the elements of $S_q\cap X$, each with $m-q+2$ occurrences. So $z\in S_q\cap X$ exactly when $S_q\cap X\ne\emptyset$.

\paragraph{Costs.}
Each insertion into $X$ is at most two updates. The query is one query for set disjointness, Pagh's problem and range mode, and one update and one query for Langerman's and Erickson's problems. So \cref{lem:app-transfer} gives $t_q=\Omega(\mathcal L_N)$ for the former three problems, and $t_u+t_q=\Omega(\mathcal L_N)$ for the latter two.

\subsection{Matrix problems}\label{sec:app-matrices}

Here $N$ is the dimension of a matrix over a prime field, which is maintained under updates that change a single entry. The reductions specialize van den Brand, Nanongkai and Saranurak~\cite[Theorems~C.1 and~C.3]{van_den_brand_dynamic_2019}; our choice of the field prevents cancellation and keeps the matrices nonsingular. We work over $\mathbb F_p$ for a prime $p>n+1$ with $p=\Theta(n)$, so that field elements occupy $\Theta(\log n)$ bits. Let $d:=2n+1$, and for $x=(x_j)_{j\in[n]}$, let
\[
A(x):=\begin{pmatrix}
I_n&M&0\\
0&I_n&x\\
0&0&1
\end{pmatrix},
\qquad\text{so that}\qquad
A(x)^{-1}=\begin{pmatrix}
I_n&-M&Mx\\
0&I_n&-x\\
0&0&1
\end{pmatrix}.
\]
In Phase~I, the matrix is $A(0)$, and inserting $j$ into $X$ changes the entry $(n+j,d)$ from $0$ to $1$. So the matrix is always $A(x)$, which has entries in $\{0,1\}$ and is unit upper triangular.

\paragraph{Field size.}
The bounds below hold for every family of prime fields with $p=\Theta(N)$: take the largest $n$ with $2n+1\le N$ and $n+2\le p$, which is $\Theta(N)$, and pad the matrix to dimension $N$ by a direct sum with an identity matrix. This preserves the queried entries of the inverse, the determinant, and nonsingularity.

\subsubsection{Inverse-entry zero testing}\label{sec:app-inverse}

The matrix is always nonsingular, and a query asks whether a given entry of its inverse is zero. The query $q$ asks whether the entry $(q,d)$ of $A(x)^{-1}$ is zero. This entry is $c_q\in[0,n]$, which is zero in $\mathbb F_p$ exactly when $c_q=0$, since $p>n$. The costs are one update per insertion and one query, so \cref{lem:app-transfer} gives $t_q=\Omega(\mathcal L_N)$, also for queries that return the entry itself.

\subsubsection{Determinant}\label{sec:app-determinant}

A query asks for the determinant, or whether it equals one. As in~\cite[Equation~(11)]{van_den_brand_dynamic_2019}, the query $q$ changes the entry $(d,q)$ from $0$ to $1$, and asks for the determinant of the resulting matrix $A(x)+e_de_q^{\mathsf T}$. By the matrix determinant lemma $\det(A+uv^{\mathsf T})=\det(A)\,(1+v^{\mathsf T}A^{-1}u)$, which follows by factoring out $A$ and expanding $\det(I+A^{-1}uv^{\mathsf T})$,
\[
\det\big(A(x)+e_de_q^{\mathsf T}\big)=1+\big(A(x)^{-1}\big)_{q,d}=1+c_q.
\]
Since $1\le1+c_q\le n+1<p$, the matrix is nonsingular, and its determinant is one exactly when $c_q=0$. Each insertion into $X$ is one update, and the query is one update and one query, so \cref{lem:app-transfer} gives $t_u+t_q=\Omega(\mathcal L_N)$. Testing whether the determinant is zero does not tell the two cases apart.

\subsection{Geometric problems}\label{sec:geometric-counting}

Here $N$ bounds the number of points, which are inserted one at a time, and all coordinates are integers of $O(\log N)$ bits. Each construction has $O(n^2)$ points, so we take $n=\Theta(\sqrt N)$.

\subsubsection{Convex-hull vertex counting}\label{sec:hull}

A query asks for the number of vertices of the convex hull of the current points in $\mathbb R^3$. The construction is that of Dallant and Iacono~\cite[ESA version, Section~3, Theorem~10 and Lemma~11]{dallant_conditional_2022}. Their fully dynamic version deletes the blockers of a supplied set $J$; as in their incremental version, we start without blockers and insert the blockers of the elements of $X$, so that $J=[n]\setminus X$. The points lie on $n$ columns, one for each element. A blocker hides its column, and a selector hides all levels above a cutoff, so two consecutive selectors isolate the surviving points of row $q$. Let $n\ge5$, and let
\[
 H:=2n+1,\qquad R:=4n^4,\qquad
 e_j:=(\cos(2\pi j/n),\sin(2\pi j/n)),\qquad
 v_j(h):=((R-h^2)e_j,h).
\]
In Phase~I, insert the points
\[
 g_{i,j}:=v_j(2i+1)\quad(0\le i\le n,\ j\in[n]),
 \qquad
 p_{i,j}:=v_j(2i)\quad(i\in[n],\ j\in S_i).
\]
The points that may be inserted later are the blockers and the selectors
\[
 b_j:=((R-1)e_j,H+1)\quad(j\in[n]),\qquad
 t_a:=(0,0,T_{2a-1})\quad(a\in[n+1]),\qquad
 \text{where}\quad T_s:=\frac{R+s^2}{2s}.
\]

\begin{lemma}\label{lem:hull-selector-count}
Add any set of blockers $b_j$ and any set of selectors $t_a$ to the points inserted in Phase~I, and let $a$ be the least index of an added selector, or $n+2$ if there is none. Then the vertices of the convex hull are exactly
\begin{itemize}[leftmargin=*]
  \item the points $g_{0,j}$ and the added blockers;
  \item the selector $t_a$, if a selector was added;
  \item the points $g_{i,j}$ and the present points $p_{i,j}$ with $1\le i<a$ whose blocker $b_j$ was not added.
\end{itemize}
This classification is unchanged if every point is moved by Euclidean distance at most $1/R^2$. Consequently, all the points can be replaced by integer points with coordinates of absolute value $O(n^{12})$.
\end{lemma}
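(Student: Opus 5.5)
The approach is to use the surface of revolution $\Sigma:=\{(x,z):\abs{x}=R-z^2\}$. It bounds the convex solid $K:=\{(x,z):\abs{x}\le R-z^2\}$, which is convex because $\abs{x}$ is convex and $R-z^2$ is concave. Every Phase~I point $v_j(h)$ lies on $\Sigma$, and I will handle each of the three kinds of points by an explicit linear functional. For a vertex $v_j(h)$ of $\Sigma$, the tangent functional is
\[
\phi_{j,h}(x,z):=x\cdot e_j+2hz-(R+h^2).
\]
From $x\cdot e_j\le\abs{x}\le R-z^2$ and $z^2\ge 2hz-h^2$, I get $\phi_{j,h}\le 0$ on $K$, with equality only at $v_j(h)$. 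I will also quantify this: for a point of $\Sigma$ at angle $2\pi j'/n\ne 2\pi j/n$ or at height $h'\ne h$, the deficit is at least $\min\{(h-h')^2,\ (R-h'^2)(1-\cos(2\pi/n))\}$, which is $\Omega(1)$ because $R=4n^4$ and $1-\cos(2\pi/n)=\Theta(n^{-2})$.

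Next I compute how the added points interact with these functionals. The selector $t_a=(0,0,T_s)$ with $s=2a-1$ gives $\phi_{j,h}(t_a)=2hT_s-R-h^2=\frac{h}{s}\big((R+s^2)-\frac{s}{h}(R+h^2)\big)$. This is positive for $h>s$, zero at $h=s$, and negative (by $\Omega(1)$) for $h<s$, since $h\mapsto(R+h^2)/h$ is decreasing for $h<\sqrt R$. So $t_a$ ``sees over'' every level above $2a-1$ and stays below the tangent planes of levels $\le 2a-1$. A blocker $b_j$ gives $\phi_{j,h}(b_j)=2h(H+1)-1-h^2$, which is positive for $1\le h\le 2H+1$. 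This covers all grid levels, including height $1$. Against a different column $j'\ne j$, the angular term makes $\phi_{j',h}(b_j)\le -(R-1)(1-\cos(2\pi/n))+O(n^2)<0$. For each claimed vertex I will therefore give a functional that is strictly maximized there with margin $\Omega(1)$ over all other possible points:
\begin{itemize}[leftmargin=*]
  \item for $g_{i,j}$ and $p_{i,j}$ with $1\le i<a$ and $b_j$ absent, the functional $\phi_{j,h}$;
  \item for $g_{0,j}$, a tilted functional $x\cdot e_j+\gamma z$ with small or negative $\gamma$, chosen so that it beats $b_j$ (whose radius is $R-1$, below $g_{0,j}$'s $R-1$ only up to the height term);
  \item for $b_j$, the functional $x\cdot e_j+\gamma' z$ with large $\gamma'$;
  \item for $t_a$, the functional $z$ restricted appropriately, or equivalently the vertical direction, since the selector with least index is highest.
\end{itemize}
The $g_{0,j}$ case is delicate, since $b_j$ nearly dominates it, and I will try $\gamma=-1$ first: it gives $g_{0,j}\mapsto R-2$ versus $b_j\mapsto R-1-(H+1)$, and other points lose $\Omega(1)$ there too.

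For the non-vertices I will exhibit each one as lying strictly inside the hull of the other points. This is the step I expect to be the main obstacle, because a single violated tangent plane does not by itself prove non-extremality. For a point $v_j(h)$ hidden by $t_a$ (so $h>2a-1$), I will argue in the half-plane through $e_j$ and the $z$-axis, together with its two angular neighbours. The point lies below the segment from $t_a$ to $g_{a-1,j}$ lifted to a cone over the neighbouring columns, so a convex combination of $t_a$, $g_{a-1,j\pm1}$, $g_{a-1,j}$ and $g_{n,j}$ (the lowest level) strictly contains it. For a point hidden by $b_j$, the analogous combination is of $b_j$, $g_{0,j\pm1}$ and the top and bottom points of column $j$. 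In each case I will check that the barycentric coefficients are bounded away from zero by $1/\mathrm{poly}(n)$, so that the point lies at Euclidean distance $\ge 1/\mathrm{poly}(n)$ inside.

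Finally, for robustness I will observe that every strict inequality above has slack at least $1/\mathrm{poly}(n)$ relative to point norms $O(R)$. Concretely, the functionals have coefficients $O(n)$ and values differ by $\Omega(1)$ or $\Omega(n^{-2})\cdot R$, and the interior depths are $n^{-O(1)}$ with the poly chosen below $R^2$. A movement of each point by at most $1/R^2$ changes each functional value by $O(n/R^2)$, which is smaller than the slack, so the classification is unchanged. The coordinates are rational, or involve $\cos$ and $\sin$, so I will first round every coordinate to a multiple of $1/R^2$ at distance $\le 1/R^2$ and then scale by $R^2$. This gives integer points with coordinates of absolute value $O(R\cdot R^2)=O(n^{12})$, and scaling preserves the hull classification.
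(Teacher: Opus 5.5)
Your overall architecture matches the paper's: strictly exposing functionals for the claimed vertices, explicit simplices for the remaining points, margins for the perturbation, then rounding. The tangent functional $\phi_{j,h}$ of the solid $K$ is a clean way to organize it. But two steps fail as stated.

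\emph{First, the cutoff level.} Your own computation gives $\phi_{j,s}(t_a)=0$ for $s=2a-1$, and also $\phi_{j,s}(g_{a-1,j})=0$. So for $a\ge2$ and $b_j$ absent, the functional you assign to $g_{a-1,j}=v_j(2a-1)$ is maximized at both $g_{a-1,j}$ and $t_a$. It supports the edge between them, certifies nothing, and leaves no margin for the perturbation step. Your sentence that $t_a$ stays below the tangent planes of all levels $\le 2a-1$ is false exactly at level $2a-1$. You must tilt the functional. The paper uses $e_j\cdot x+(2h-\tfrac12)z$. Under it, another Phase~I point $v_\ell(h')$ has deficit $(h'-h)^2+\tfrac12(h'-h)+(R-h'^2)(1-e_j\cdot e_\ell)>0$, by integrality of heights. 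A selector with parameter $s\ge h$ has deficit $(s-h)(R/s-h)+\tfrac12(T_s-h)>0$.

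\emph{Second, the top point of a blocked column.} When $b_j$ is present, $g_{n,j}=v_j(H)$ must be shown non-extreme. Your witness includes ``the top and bottom points of column $j$'', which contains $g_{n,j}$ itself; note that $g_{n,j}$ is the highest Phase~I level, not the lowest. Dropping it does not work. A point of $\mathrm{conv}\{b_j,g_{0,j},g_{0,j-1},g_{0,j+1}\}$ at height $H$ has weight $(H-1)/H$ on $b_j$. Its component along $e_j$ is therefore at least $(R-1)\bigl(1-(1-\cos(2\pi/n))/H\bigr)>R-1-4\pi^2 n$. By contrast, $g_{n,j}$ has component $R-H^2=R-1-(4n^2+4n)$, so containment fails for every $n\ge 9$. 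This tetrahedron narrows toward $b_j$ too quickly, and you need neighbours at the top level. The paper's tetrahedron $\mathrm{conv}\{g_{0,j},b_j,g_{n,j-1},g_{n,j+1}\}$ contains every $v_j(h)$ with $2\le h\le H$ in its interior, with explicit positive barycentric weights.

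Three smaller points:
\begin{itemize}
\item In the selector case, $\mathrm{conv}\{t_a,g_{a-1,j-1},g_{a-1,j},g_{a-1,j+1}\}$ alone already contains $v_j(h)$ in its interior for $2a-1<h\le H$. So drop $g_{n,j}$ there as well; it is circular when $h=H$.
\item The lower selectors $t_{a'}$ with $a'>a$ also need to be shown non-extreme.
\item For $b_j$, the coefficient $\gamma'$ must be small and positive, not large. Since $t_1$ sits at height $(R+1)/2$, it beats $b_j$ under $x\cdot e_j+\gamma' z$ once $\gamma'$ exceeds about $2$.
\end{itemize}
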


\begin{proof}
Write points as $(x,z)$ with $x\in\mathbb R^2$ and $z\in\mathbb R$. All radii $R-h^2$ with $1\le h\le H$ are positive, $T_s$ decreases for $s\in[1,H]$, and $T_H>H+1$.

\smallskip\noindent\emph{Exposed vertices.} The functional $e_j\cdot x-z$ uniquely exposes $g_{0,j}$, and $e_j\cdot x+z/2$ uniquely exposes an added blocker $b_j$. The highest added selector uniquely maximizes $z$. For a point $v_j(h)$ with $h>1$ below the cutoff whose blocker was not added, use
\[
 L_{j,h}(x,z):=e_j\cdot x+(2h-\tfrac12)z.
\]
The deficit of another point $v_\ell(h')$ inserted in Phase~I, relative to $v_j(h)$, is
\[
 (h'-h)^2+\frac{h'-h}{2}
 +(R-h'^2)(1-e_j\cdot e_\ell)>0;
\]
here $h$ and $h'$ are integers, and equality is possible only for the same point. A blocker in another column also has smaller value: since
\[
 1-\cos(2\pi/n)\ge\frac8{n^2}\qquad\text{and}\qquad R-H^2\ge R/2,
\]
its deficit is at least $(R-1)(1-\cos(2\pi/n))-(H+1)^2+1>1$. Every added selector has a parameter $s\ge h$, and its deficit is
\[
 (s-h)(R/s-h)+\tfrac12(T_s-h)>0.
\]
These calculations also show that every other point is at Euclidean distance at least $1/(10n)$ from the exposing plane of each claimed vertex. For the highest selector, consecutive possible heights differ by more than one, since
\[
 T_s-T_{s'}=\frac{(s'-s)(R-ss')}{2ss'}>1
 \qquad(1\le s<s'\le H,\ s,s'\text{ odd}).
\]

\smallskip\noindent\emph{Points hidden by a selector.} We show next that every remaining point is strictly interior. If a selector with parameter $s=2a-1$ is present, the cone with apex $t_a$ and base $\operatorname{conv}\{g_{a-1,j}:j\in[n]\}$ contains every point $v_j(h)$ inserted in Phase~I with $h>s$ in its interior. Indeed, its cross-section at height $h$ is the regular polygon with vertices
\[
 \bigl((R+s^2-2sh)e_j,h\bigr)
 =\bigl((R-h^2+(h-s)^2)e_j,h\bigr),
\]
and $v_j(h)$ lies strictly between its center and the corresponding vertex. More quantitatively, if $u$ is the outward unit normal of an edge of the base polygon and $c:=\cos(\pi/n)$, the corresponding side face has the equation
\[
 u\cdot x+2scz=c(R+s^2).
\]
At $v_j(h)$, the slack is at least $c(h-s)^2$, so the distance of $v_j(h)$ from every side face is at least $1/(10n)$, and its distance from the base is at least one. A lower selector is likewise strictly interior to the cone with apex $t_a$ and base $\operatorname{conv}\{g_{0,j}:j\in[n]\}$: its distance from the base is greater than one, and its distance from each side face is greater than $1/2$.

\smallskip\noindent\emph{Points hidden by a blocker.} Suppose that $b_j$ is present. Rotate the coordinates so that $e_j=(1,0)$, and let
\[
 c':=\cos(2\pi/n),\qquad d:=(R-H^2)\sin(2\pi/n),\qquad
 D:=R-1-c'(R-H^2).
\]
The tetrahedron with vertices $g_{0,j}$, $b_j$, $g_{n,j-1}$ and $g_{n,j+1}$, with column indices modulo $n$, contains every $v_j(h)$ with $2\le h\le H$ in its interior. To see this, let $\lambda:=(h^2-1)/D$. The barycentric weights of $v_j(h)$ are $\lambda/2$ for each of $g_{n,j-1}$ and $g_{n,j+1}$, and
\[
 \frac{h-1-\lambda(H-1)}H
 \qquad\text{and}\qquad
 \frac{H+1-h-\lambda}H
\]
for $b_j$ and $g_{0,j}$, respectively. All of them are positive, since $D>H^2-1$. Quantitatively, $D\le2R$ and
\[
 1-\frac{H^2-1}{D}
 =\frac{(1-c')(R-H^2)}D\ge\frac2{n^2},
\]
so every weight is at least $3/(4R)$. The absolute value of the determinant of the tetrahedron is $2HDd\ge32R^2/n^2$, while the cross product of the edge vectors of every face has norm at most $12R^2$. So every altitude is at least $8/(3n^2)$, and $v_j(h)$ is at distance at least $2/(Rn^2)>2/R^2$ from every face.

\smallskip\noindent\emph{Rounding.} All the separation and interior margins above exceed $2/R^2$. Moving a point and all vertices of a witnessing hull by at most $1/R^2$ preserves strict separation and strict interior containment, since the support function of the hull changes by at most $1/R^2$ in every unit direction. This proves the perturbation statement. Round every coordinate to the nearest multiple of $1/(2R^2)$, which moves every point by at most $\sqrt3/(4R^2)<1/R^2$, and multiply all coordinates by $2R^2$. The resulting coordinates are integers of absolute value at most $2R^3+1=O(n^{12})$, and scaling preserves the vertex counts. All these coordinates are fixed in Phase~I.
\end{proof}

Inserting $j$ into $X$ inserts the blocker $b_j$. The query $q$ inserts $t_{q+1}$ and asks for the number $\kappa$ of hull vertices, and then inserts $t_q$ and asks for the number $\kappa'$. By \cref{lem:hull-selector-count},
\[
 \kappa=n+\abs X+1+\sum_{i=1}^{q}\bigl(\abs J+\abs{S_i\cap J}\bigr),\qquad
 \kappa'=n+\abs X+1+\sum_{i=1}^{q-1}\bigl(\abs J+\abs{S_i\cap J}\bigr),
\]
so $\abs{S_q\cap X}=\abs{S_q}+\abs J-(\kappa-\kappa')$, which we compute from the stored $\abs{S_q}$ and $\abs X$. There are at most $2n^2+2n+2$ points. Each insertion into $X$ is one update, and the query is two updates and two queries, which do not depend on the answer to the first query; so \cref{lem:app-transfer} gives $t_u+t_q=\Omega(\mathcal L_N)$.

\subsubsection{Maximal-point counting}\label{sec:maxima}

A point is \emph{maximal} if no other point is at least as large in every coordinate, and a query asks for the number of maximal points among the current points in $\mathbb R^3$. Up to the order of the coordinates and of the columns, the construction is the incremental version of the reduction of Dallant and Iacono~\cite[Theorem~14]{dallant_conditional_2022}; Jin and Xu~\cite[Section~5.1, proof of Theorem~1.10]{jin_tight_2022} use the same structure, with $k=2$ and infinitesimal coordinates, for a stronger conditional bound. In Phase~I, insert the points
\[
 p_{i,j}:=(4nj-i,\ n-j,\ i)\qquad(i\in[n],\ j\in S_i).
\]
Inserting $j$ into $X$ inserts the blocker $b_j:=(4nj,\ n-j,\ n+1)$, and for $0\le r\le n$, let $t_r:=(4n(n+1),\ n+1,\ r)$. The query $q$ inserts $t_{q-1}$ and asks for the number $\kappa$ of maximal points, and then inserts $t_q$ and asks for the number $\kappa'$.

The points $p_{i,j}$ are pairwise incomparable: if $p_{i',j'}$ dominates $p_{i,j}$, then the second coordinate forces $j'\le j$, and the first forces $j'\ge j$, since $\abs{i-i'}<4n$; so $j'=j$, and then the first and third coordinates force $i'=i$. The blocker $b_j$ dominates exactly the points $p_{i,j}$ in column $j$: the second coordinate forces their column to be at least $j$, and the first forces it to be at most $j$. The blockers are pairwise incomparable; their third coordinate prevents the selectors and the points $p_{i,j}$ from dominating them, and their first coordinate prevents them from dominating the selectors. Finally, $t_r$ dominates exactly the points $p_{i,j}$ with $i\le r$, and $t_q$ dominates $t_{q-1}$. Hence
\[
 \kappa=\abs X+1+\sum_{i=q}^{n}\abs{S_i\setminus X},\qquad
 \kappa'=\abs X+1+\sum_{i=q+1}^{n}\abs{S_i\setminus X},
\]
so $\kappa-\kappa'=\abs{S_q\setminus X}$, and $S_q\cap X=\emptyset$ exactly when $\kappa-\kappa'=\abs{S_q}$. There are at most $n^2+n+2$ points, with coordinates of absolute value $O(n^2)$. Each insertion into $X$ is one update, and the query is two updates and two queries, which do not depend on the answer to the first query; so \cref{lem:app-transfer} gives $t_u+t_q=\Omega(\mathcal L_N)$.

\subsubsection{Distinct-color counting}\label{sec:distinct-color-counting}

Colored points in the plane are inserted, and a query asks for the number of distinct colors among the points in a given axis-parallel rectangle. The reduction specializes the construction of Jin and Xu~\cite[Section~3.4, proof of Theorem~1.8]{jin_tight_2022} to one family of sets and one inserted set; we use dominance queries and recover the intersection by inclusion--exclusion. Let $W:=4n+4$. In Phase~I, insert the point $(Wi+j,\,W(n+1-i))$ with color $j$ for each $i\in[n]$ and $j\in S_i$. Inserting $j$ into $X$ inserts the point $(j,0)$ with color $j$; we keep $\abs X$. The query $q$ asks for the number $C_q$ of colors in $[0,Wq+n]\times[0,W(n+1-q)]$.

The bound on the second coordinate excludes the points of rows $i<q$, and the bound on the first coordinate excludes those of rows $i>q$, since $W(q+1)+1>Wq+n$. So the rectangle contains exactly the points of row $q$ and the inserted points, whose colors form $S_q\cup X$, and $\abs{S_q\cap X}=\abs{S_q}+\abs X-C_q$. All coordinates are nonnegative, so the rectangle is a dominance range. There are at most $n^2+n$ points, with coordinates $O(n^2)$; to keep $\Theta(n^2)$ points at all times, add points of a new color to the right of all queried rectangles. The costs are one update per insertion and one query, so \cref{lem:app-transfer} gives $t_q=\Omega(\mathcal L_N)$.

\subsection{String problems}\label{sec:strings}

Here $N$ bounds the total length of the texts and patterns. All constructions have size $\Theta(n^2)$, so we take $n=\Theta(\sqrt N)$.

\subsubsection{Internal dictionary matching}\label{sec:internal-dictionary}

A text $T$ is fixed. Patterns, each given by the endpoints of an occurrence in $T$, are inserted into a dictionary that is initially empty, and a query $\mathsf{Exists}(\ell,r)$ asks whether some pattern of the dictionary occurs in $T[\ell:r]$. The reduction is from \cite[Theorem~24]{charalampopoulos_internal_2019}, with a suffix added to $T$ so that every pattern occurs in $T$. Over the alphabet $\{0,1,\dots,n\}$, the text consists of the blocks $T[(i-1)n+1:in]$ for $i\in[n]$, where $T[(i-1)n+j]:=j$ if $j\in S_i$ and $T[(i-1)n+j]:=0$ otherwise, followed by the suffix $1\,2\cdots n$; so $\abs T=n^2+n$. Inserting $j$ into $X$ inserts the one-symbol pattern $j$, given by its occurrence at position $n^2+j$. The query $q$ asks $\mathsf{Exists}((q-1)n+1,qn)$. The block of $q$ contains the symbol $j\in[n]$ exactly when $j\in S_q$, the dictionary contains it exactly when $j\in X$, and the symbol $0$ is never in the dictionary; so the answer is yes exactly when $S_q\cap X\ne\emptyset$. The costs are one update per insertion and one query, so \cref{lem:app-transfer} gives $t_q=\Omega(\mathcal L_N)$.

\subsubsection{Hamming-distance threshold}\label{sec:hamming-threshold}

A binary text $T$ is fixed, and a binary pattern $P$ is maintained under updates that change one of its bits. A query asks whether the Hamming distance between $P$ and the substring of $T$ of length $\abs P$ that starts at a given position is at most $k$, for a threshold $k$ fixed in advance. The reduction combines the construction of Clifford, Gr{\o}nlund, Larsen and Starikovskaya~\cite[Theorem~15]{clifford_upper_2018}, which concatenates the rows of the matrix, with their binary encoding~\cite[Lemma~11]{clifford_upper_2018}. Let $L:=n^2$. Before the encoding, let $T_0$ be the concatenation of the rows $M_{i1}\cdots M_{in}$ for $i\in[n]$, followed by $0^L$, and let $P_0:=x_1\cdots x_n0^{L-n}$; so $\abs{T_0}=2L$ and $\abs{P_0}=L$. Encode the symbols of the pattern and of the text by
\[
f(0):=010,\qquad f(1):=111,
\qquad\text{and}\qquad
g(0):=100,\qquad g(1):=111,
\]
respectively, and let $P:=f(P_0)$ and $T:=g(T_0)$. For all $a,b\in\{0,1\}$, the Hamming distance between $f(a)$ and $g(b)$ is $2-2ab$. Initially $x=0^n$, and inserting $j$ into $X$ changes the codeword of $x_j$ from $010$ to $111$, by two updates. The query $q$ compares $P$ with the substring of $T$ that starts at position $3(q-1)n+1$, which fits in $T$ since $3(q-1)n+3L\le6L=\abs T$. There, the first $n$ codewords of $P$ are compared with row $q$, and each of the other $L-n$ codewords of $P$ encodes $0$ and contributes exactly two mismatches, whatever the aligned text symbol. So the Hamming distance is
\[
2L-2\sum_{j=1}^nM_{qj}x_j=2L-2c_q,
\]
and the threshold $k:=2L-1=2\abs P/3-1$ detects whether $c_q>0$. Each insertion into $X$ is two updates and the query is one query, so \cref{lem:app-transfer} gives $t_q=\Omega(\mathcal L_N)$, also for queries that return the Hamming distance. The threshold grows with the length of the pattern; we do not claim a bound for a constant threshold or for constant-factor approximation.

\subsubsection{Active-document existence}\label{sec:active-documents}

A collection of documents is fixed, and every document is active or inactive. An update activates a document, and a query asks whether some active document contains two given patterns. The reduction specializes the document construction of Jin and Xu~\cite[Section~3.3, proof of Theorem~1.7]{jin_tight_2022}: one of the two patterns is a marker that every document contains, and all updates are activations. Over the alphabet $[n]\cup\{\#,\$\}$, the document of $j\in[n]$ is
\[
D_j:=\#\,d_{1j}\cdots d_{nj},\qquad\text{where}\qquad
d_{ij}:=\begin{cases}
i,&j\in S_i,\\
\$,&j\notin S_i.
\end{cases}
\]
All documents are initially inactive, and their total length is $n(n+1)$. Inserting $j$ into $X$ activates $D_j$, and the query $q$ asks whether some active document contains both $q$ and $\#$. Every document contains $\#$, and $D_j$ contains $q$ exactly when $j\in S_q$; so the answer is yes exactly when $S_q\cap X\ne\emptyset$. The same holds for the query with the single pattern $q$. The costs are one update per insertion and one query, so \cref{lem:app-transfer} gives $t_q=\Omega(\mathcal L_N)$.

\subsection{Streaming regular-expression membership: amortized versus worst case}\label{sec:streaming-regex}

In streaming regular-expression membership, an expression $R$ is preprocessed, and a text then arrives one symbol at a time; after each symbol, the algorithm reports whether the text read so far belongs to $L(R)$, without knowing which symbol is the last one. Bille and Thorup~\cite{bille_regular_2010} and Dudek, Gawrychowski, Gourdel and Starikovskaya~\cite{naor_streaming_2022} give algorithms for this problem. Here we give a direct encoding of Disjointness with constant amortized cost. We count the cell probes made while processing each symbol, including its report; preprocessing is not charged, the space is polynomial, and the conventions of \cref{sec:model} apply.

\paragraph{The expression family.}
Fix sets $S_1,\dots,S_m\subseteq[n]$. The alphabet consists of the symbol $\#$, the data symbols $a_1,\dots,a_n$ and the selectors $c_1,\dots,c_m$, and we write $\cup$ for the union (alternation) operator of regular expressions. For each $i\in[m]$ with $S_i\ne[n]$, let
\[
\Gamma_i := \bigcup_{j\in[n]\setminus S_i} a_j
\]
be the alternation of the data symbols of the elements outside $S_i$, and let the $i$-th branch be $\#\,\Gamma_i^*\,c_i$; if $S_i=[n]$, let the $i$-th branch be $\#\,c_i$. Let $R_{\vS}$ be the union of the $m$ branches, an ordinary regular expression that uses only union, concatenation and Kleene star. For every $q\in[m]$ and every sequence $j_1,\dots,j_k\in[n]$, in any order and possibly with repetitions,
\begin{equation}\label{eq:streaming-regex-reduction}
\#\,a_{j_1}\cdots a_{j_k}\,c_q\in L(R_{\vS})
\qquad\Longleftrightarrow\qquad
S_q\cap\{j_1,\dots,j_k\}=\emptyset.
\end{equation}
Indeed, the final selector $c_q$ forces the $q$-th branch, which accepts exactly when every preceding data symbol is $a_j$ for some $j\notin S_q$.

\paragraph{Constant amortized time.}
In preprocessing, store the incidence table of the sets $S_i$, a bit $\mathsf{seen}[j]=0$ for each $j\in[n]$, an empty list, and the parser state. After the initial $\#$, when $a_j$ arrives, append $j$ to the list if $\mathsf{seen}[j]=0$, and set $\mathsf{seen}[j]:=1$. When the first selector $c_q$ arrives, scan the list, report ``yes'' exactly when none of its indices belongs to $S_q$, and enter a terminal state. All earlier reports are ``no''. An initial symbol other than $\#$, a second $\#$, or any symbol after a selector causes permanent rejection, at constant cost per later symbol. The scan examines at most as many indices as there are preceding data symbols, so every prefix of length $L$ costs $O(L)$ in total, also for malformed texts. The expression has length $m\le M=O(mn)$. For $m=n$, we have $\log M=\Theta(\log n)$, the alphabet has $2n+1$ symbols, and the $O(n^2)$-bit incidence table and the $O(n)$ auxiliary words use polynomial space.

\begin{theorem}[Streaming membership: amortized versus worst case]\label{thm:streaming-regex}
Let $\mathcal R_n:=\{R_{\vS}:S_1,\ldots,S_n\subseteq[n]\}$, and let $M_n:=\max_{R\in\mathcal R_n}|R|=\Theta(n^2)$. Streaming membership for this family has a deterministic algorithm with $O(1)$ amortized time per symbol on every input stream.

Consider a zero-error cell-probe algorithm for streaming membership for all of $\mathcal R_n$, with word size $w=\Theta(\log M_n)$ and arbitrary preprocessing of the given expression, which stores all information retained between symbols in its memory. Let $t_a$ be its worst-case cost on a data symbol, over all expressions, texts and random tapes, and let $t_c$ be an upper bound on its expected cost on a selector, for every expression and every text that ends with this selector. Then
\[
t_c=\Omega\!\left(\left(\frac{\log M_n}{\log(t_a w)}\right)^2\right).
\]
In particular, the worst-case time on some symbol, for some expression in the family and some text, is $\Omega\big((\log M_n/\log\log M_n)^2\big)$.
\end{theorem}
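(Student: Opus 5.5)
The upper bound is the algorithm just described, so only the lower bound needs an argument. The plan is to turn any streaming matcher for $\mathcal R_n$ into a data structure for insert-only multiphase Disjointness with $m=n$, and then apply \cref{thm:main}. This goes through \cref{lem:app-transfer}, or more directly through the same simulation argument. The reduction follows \eqref{eq:streaming-regex-reduction}.

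\emph{Phase I.} Given $S_1,\dots,S_n$, the data structure builds $R_{\vS}$ and runs the matcher's preprocessing on it. It then feeds the initial symbol $\#$, uncharged, since Phase I is free. It also stores the characteristic vector of $X$ in a separate address range.

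\emph{Phase II.} Inserting $j$ into $X$ feeds the data symbol $a_j$. A repeated $j$ is ignored after one probe of the characteristic vector. Each insertion therefore costs at most $t_a+O(1)$ probes in the worst case.

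\emph{Phase III.} The query $q$ feeds the selector $c_q$ and returns the matcher's report. By \eqref{eq:streaming-regex-reduction}, the report is ``yes'' exactly when $S_q\cap X=\emptyset$. Phase III must not write, so writes made while processing $c_q$ are kept in temporary workspace, exactly as in the proof of \cref{lem:app-transfer}; no later operation follows.

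The text $\#\,a_{j_1}\cdots a_{j_k}\,c_q$ is fixed by the input and does not depend on the random tape. The expected cost of the selector is therefore at most $t_c$ for every input history, which matches the per-input expected query time in \cref{sec:model}. Zero error carries over directly.

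Applying \cref{thm:main} gives $t_c=\Omega\big((\log n/\log((t_a+O(1))w))^2\big)$. The word size is unchanged up to constants, because $\log M_n=\Theta(\log n)$: we have $M_n=\Theta(n^2)$, since the expression for all $S_i=\emptyset$ has length $\Theta(n^2)$. Rewriting $\log n$ as $\Theta(\log M_n)$ yields the stated bound. For the final statement, if $t_a\ge\log^2 M_n$ we are done; otherwise $\log(t_aw)=O(\log\log M_n)$, so $t_c$, the expected and hence some worst-case cost on a selector, is $\Omega((\log M_n/\log\log M_n)^2)$.

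The main point needing care is the model matching rather than any new estimate. First, \cref{thm:main} uses instances with $m=n$ sets over a universe of size $n$, and $\mathcal R_n$ allows arbitrary sets, so this is covered. Second, the matcher's worst-case bound must hold over all texts, including prefixes without a selector, and this gives the worst-case update time. Third, padding is not needed, since the family is indexed by $n$ itself.
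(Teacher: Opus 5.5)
Your proposal is correct and follows the paper's proof essentially step by step. Like the paper, you preprocess $R_{\vS}$ and feed $\#$ in Phase~I, feed $a_j$ for each insertion and $c_q$ at the query, keep the selector's writes in temporary workspace as in \cref{lem:app-transfer}, apply \cref{thm:main} using $\log M_n=\Theta(\log n)$, and finish with a case split. The only differences are cosmetic: your characteristic vector for repeated insertions is harmless but unnecessary, since \eqref{eq:streaming-regex-reduction} already allows repetitions, and you split on $t_a$ rather than on the overall worst-case time $t$.
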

\begin{proof}
The algorithm above gives the upper bound. For the lower bound, given an instance $S_1,\dots,S_n$ of insert-only multiphase Disjointness, preprocess $R_{\vS}$ and process the symbol $\#$ during Phase~I. Inserting $j$ into $X$ feeds the symbol $a_j$, and the query $q$ feeds the selector $c_q$; by~\eqref{eq:streaming-regex-reduction}, the report on $c_q$ answers the query. Keeping the writes made while processing $c_q$ in temporary workspace, as in \cref{lem:app-transfer}, we obtain a zero-error data structure with worst-case update time $t_a$ and expected query time $t_c$. Since $\log M_n=\Theta(\log n)$, \cref{thm:main} gives the bound on $t_c$. For the last statement, let $t$ be the worst-case time on all symbols, so that $t_a,t_c\le t$. If $t\ge\log^2 M_n$ there is nothing to prove; otherwise $\log(t_a w)=O(\log\log M_n)$, so $t\ge t_c=\Omega\big((\log M_n/\log\log M_n)^2\big)$.
\end{proof}

\paragraph{Binary alphabet.}
Encode the symbols by distinct codewords of length $b=\Theta(\log n)$. The maximum expression length becomes $M^{\mathrm{bin}}_n=O(n^2\log n)$, with $\log M^{\mathrm{bin}}_n=\Theta(\log n)$; take $w=\Theta(\log M^{\mathrm{bin}}_n)$. Decoding preserves constant amortized time per bit: report ``no'' within incomplete codewords and reject invalid codewords permanently. A matcher with worst-case cost $t$ per bit yields source update and query costs at most $bt$. If $t>\log M^{\mathrm{bin}}_n$, the bound below is immediate. Otherwise $\log(btw)=O(\log\log M^{\mathrm{bin}}_n)$, so \cref{thm:main} gives $bt=\Omega\big((\log M^{\mathrm{bin}}_n/\log\log M^{\mathrm{bin}}_n)^2\big)$. Since $b=\Theta(\log M^{\mathrm{bin}}_n)$,
\[
t=\Omega\!\left(\frac{\log M^{\mathrm{bin}}_n}{(\log\log M^{\mathrm{bin}}_n)^2}\right).
\]

\paragraph{Related separations.}
Union--find admits amortized $\Theta(\alpha(n))$ time~\cite{tarjan_efficiency_1975,fredman_cell_1989} but requires $\Theta(\log n/\log\log n)$ worst-case time~\cite{blum_single-operation_1985,fredman_cell_1989}. Cohen, Duffield, Kaplan, Lund and Thorup~\cite{cohen_efficient_2011} obtain $O(\log\log k)$ amortized time and an $\Omega(\log k/\log\log k)$ worst-case cell-probe bound for variance-optimal reservoir sampling in their finite-precision model. Streaming $k$-mismatch has conditional bounds from combinatorial Boolean matrix multiplication~\cite{golan_streaming_2020,gawrychowski_towards_2018}.

\end{document}